\newtheorem{thm}{Theorem}
\newtheorem{prop}[thm]{Proposition}
\newtheorem{lemma}[thm]{Lemma}
\theoremstyle{definition}
  \newtheorem{defi}[thm]{Definition}
  \newtheorem{ex}[thm]{Example}
  \newtheorem{monad}{Monad}
\theoremstyle{remark}
  \newtheorem*{rem}{Remark}
\DeclareMathOperator{\set}{\mathtt{Set}}
\DeclareMathOperator{\Alg}{\mathtt{Alg}}
\DeclareMathOperator{\rec}{\mathsf{Rec}}
\DeclareMathOperator{\mso}{\mathsf{MSO}}
\newcommand{\Bag}{\mathcal{B}}
\newcommand{\StB}{{\mathsf{B}}}
\newcommand{\JL}{\lambda}
\newcommand{\StuB}{{\underline{\mathsf{B}}}}
\newcommand{\StAA}{{\mathsf{A\mkern-5muA}}}
\newcommand{\StAB}{{\mathsf{A\mkern-5muB}}}
\newcommand{\StBA}{{\mathsf{B\mkern-5muA}}}
\newcommand{\StBB}{{\mathsf{B\mkern-5muB}}}
\newcommand{\FL}{\mathbf{FL}}
\newcommand{\Sta}{\mathsf{a}}
\newcommand{\Stb}{\mathsf{b}}
\newcommand{\Stc}{\mathsf{c}}
\newcommand{\StL}{\mathsf{L}}
\newcommand{\StR}{\mathsf{R}}
\newcommand{\PP}{\mathcal{P}^+}
\renewcommand{\P}{\mathcal{P}}
\newcommand{\id}{\mathrm{id}}
\newcommand\mydot{\mathbin{\vcenter{\hbox{\scalebox{.5}{$\bullet$}}}}}
\newcommand{\msotext}{{\sf MSO}\xspace}
\newcommand{\st}[1]{\{#1\}}
\title{Monadic Monadic Second Order Logic}
\author{Miko\l{}aj Boja\'nczyk\thanks{Research supported by the European Research Council (ERC) under the European Union's Horizon 2020 research and innovation programme (ERC consolidator grant LIPA, agreement no. 683080).} \and Bartek Klin$^*$ \and Julian Salamanca$^*$}
\date{\em University of Warsaw}
\begin{document}
\maketitle

\begin{abstract}
	One of the main reasons for the correspondence of  regular languages and monadic  second-order logic  is that the class of regular languages is closed under images of surjective letter-to-letter homomorphisms. This closure property holds for structures such as finite words, finite trees, infinite words, infinite trees, elements of the free group, etc. Such structures can be modelled using monads.  In this paper, we study which structures (understood via monads in the category of sets) are such that the class of regular languages (i.e.~languages recognized by finite algebras) are closed under direct images of surjective letter-to-letter homomorphisms.
	 We provide diverse sufficient conditions for a monad to satisfy this property. We also present numerous examples of monads, including positive examples that do not satisfy our sufficient conditions,  and  counterexamples where the closure property fails. 
\end{abstract}

\pagestyle{myheadings}

\vspace{12pt}
\setlength{\parindent}{0.8cm}

\section{Introduction}
	A seminal result in automata theory is that recognizable  languages (i.e.~languages recognized by finite state devices)  are exactly those that can be defined in monadic second-order logic (\msotext). This result was originally shown for finite words
	 by (independently) B\"uchi~\cite[Corollary 4]{buchi_weak_1990}, Elgot~\cite[Theorem 5.3]{elgot_decision_1961} and Trakhtenbrot~\cite{trakhtenbrot1961finite}, but it is also true for:
\begin{itemize}
	\item $\omega$-words, shown by B\"uchi~\cite[Theorem 1]{buchi_decision_1990};
	\item finite trees,  shown by Doner~\cite[Corollary 3.8]{doner_tree_1970} and Thatcher and Wright~\cite[Theorem 14]{thatcher_generalized_1968};
	\item infinite trees,  shown by Rabin~\cite[Theorem 1.7]{rabin_decidability_1969};
	\item countable linear orders, shown by Carton, Colcombet and Puppis~\cite[Theorem 3]{carton_regular_2011} building on    Shelah~\cite[Theorem 6.2]{shelah_monadic_1975};
	\item graphs of bounded treewidth, which follows from Courcelle's Theorem~\cite[Theorem 4.4]{courcelle_monadic_1990} and its converse~\cite[Theorem 2.10]{bojanczyk_definability_2016}.
\end{itemize}
For more about these results, see the survey by Thomas~\cite{thomas}.

The large number of examples calls for a more systematic framework, where the notion of a composable structure (be it a finite word, $\omega$-word, finite tree etc.) would be a parameter, and a characterization of the expressive power of \msotext, a result. In our view, the lack of such a framework is a manifestation of what Samson Abramsky has recently called
\begin{quote}\small
a remarkable divide in the field of logic in Computer Science, between two distinct strands: one focussing on semantics and compositionality (``Structure''), the other on expressiveness and complexity (``Power''). It is remarkable because these two fundamental aspects of our field are studied using almost disjoint technical languages and methods, by almost disjoint research communities. \cite[p.~1]{abramsky18}.
\end{quote}

A generic approach to \msotext was proposed in~\cite{boj1}, using monads. Monads are a standard categorical tool to study compositionality, and they are firmly rooted in the ``Structure'' strand of Theoretical Computer Science. Applying them to study the expressive power of \msotext is a step towards building a bridge over the divide.

Monads are general enough to capture structures such as words, trees, graphs, etc.~but specific enough to describe concepts such as recognizable languages (as observed by Eilenberg and Wright in~\cite[Section 11]{EILENBERG1967452}), syntactic algebras~\cite[Section 3]{boj1} or pseudovarieties~\cite[Section 4]{boj1}. Another advantage of monads is that they can model infinite objects (such as infinite words or trees, which are central topics in automata theory), which is not the case for some alternative approaches, such as Steinby's  approach via universal algebra~\cite{steinby_general_1998}.

When discussing the relationship of \msotext and recognizable languages,
there are two implications to consider.

\begin{itemize}
	\item \emph{Recognizable $\Rightarrow$ definable in \msotext.} This implication is easy  for structures such as finite words or trees, where \msotext can be used to define some canonical decomposition. In other cases, the proof can be much harder.  Indeed, the proofs are relatively recent in the cases of countable linear orders~\cite[Theorem 3]{carton_regular_2011}, graphs of bounded treewidth~\cite[Theorem 2.10]{bojanczyk_definability_2016}, or graphs of bounded linear cliquewidth~\cite[Theorem 3.5]{bojanczyk_definable_2018}. The implication is known to fail for graphs of unbounded treewidth or cliquewidth~\cite[Proposition 4.36]{courcelle_graph_2012}, and it also fails for infinite trees under a naive definition 
of finite algebras~\cite[Section 4]{bojanczyk_non-regular_2018}. Sometimes, e.g.~for graphs of bounded cliquewidth, the implication remains an open question~\cite[Section 8]{bojanczyk_definable_2018}. A general understanding of this implication seems to be a hard problem, and we do not make any attempts in that direction in this paper.

	\item \emph{Definable in \msotext $\Rightarrow$ recognizable.} In all known cases, when recognizable languages are defined in terms of finite algebras,  this implication is relatively straightforward.\footnote{For this it is important that algebras, and not automata, are used as the notion of recognizability. For example, the hard part of Rabin's theorem is showing that languages recognized by nondeterministic automata are closed under complementation, see~\cite[Theorem 1.5]{rabin_decidability_1969} or~\cite[Theorem 6.2]{thomas}. This difficulty disappears when considering algebras, where complementation is achieved simply by flipping the accepting set. 
	} Roughly speaking,  languages recognized by finite algebras  are automatically closed under Boolean combinations, and closure under existential quantification is proved using  some kind of powerset construction.
This argument is so deceptively simple that~\cite[Lemma 6.2]{boj1} wrongly claimed that it works for every monad. One purpose of our paper is to correct this mistake, show counterexamples to the claim, and study conditions that make the implication hold; the resulting landscape of monads turns out to be quite interesting.
\end{itemize}

\paragraph*{Motivating example:  finite words}
To explain how \msotext can be defined for an arbitrary monad, we begin by describing \msotext for finite words in a manner which can be translated to a more generic setting.

A word $w$ over an alphabet $\Sigma$ can be understood as a relational structure~\cite[Section 2.1]{thomas}, denoted by $\underline w$,  where the universe is the positions of the word, there is a binary predicate $x < y$ for the order on positions, and for every $a \in \Sigma$ there is a unary predicate $Q_a$ which selects positions that have label $a$. To define properties of a word $w$, we can use sentences of first-order logic or monadic second-order logic 
over the vocabulary of $\underline w$. For example, if $w \in \st{a,b}^*$ then the sentence
\begin{align*}
\underbrace{\exists x\big[Q_a(x)}_{\substack{ \text{there is}\\ \text{a position} \\\text{with label $a$}}}\ \land \ \underbrace{\forall y \ (y > x \Rightarrow Q_a(x))\big]}_{\substack{\text{such that all later positions}\\ \text{also have label $a$}}}
\end{align*}
is true in $\underline w$ if and only if $w$ belongs to the regular language $(a+b)^*a^+$. This sentence uses only first-order logic (quantification over positions), but some regular languages need monadic second-order logic (quantification over sets of positions). A famous example is the regular language of words of even length, which  is defined by the sentence
\begin{align*}
\underbrace{\exists X}_{\substack{\text{there is}\\ \text{a set of}\\ \text{positions}}} \bigwedge \ \begin{cases}
	\underbrace{\forall x \ \exists y \ y \le x \land y \in X}_{\text{the first position is in $X$}}\\
	\underbrace{\forall x \ \exists y \ y \ge x \land y \not \in X}_{\text{the last position is not in $X$}}\\
	\underbrace{\forall x \ \forall y \ (x < y \land \neg (\exists z\ x < z < y)) \Rightarrow (x \in X \Leftrightarrow y \not \in X)}_{\text{for every two consecutive positions, exactly one is in $X$}}.
\end{cases}
\end{align*}
The theorem of B\"uchi, Elgot and Trakhtenbrot that we mentioned above says that a language of finite words is regular if and only if it can be defined by a sentence of \msotext. The ``regular $\Rightarrow$ definable in $\msotext$'' implication
can be proved by using the logic to formalise the semantics of a nondeterministic finite automaton (or a regular expression). For the converse implication, one shows that the class of regular languages has all the closure properties that are used in \msotext. This observation is formalised in the following result. (A letter-to-letter homomorphism is defined to be a function of the form $f^* : \Sigma^* \to \Gamma^*$ for some $f : \Sigma \to \Gamma$.)
\begin{prop}\label{propclosuremso}[cf. \cite[Lemma 6.1]{boj1}]\label{prop:boj1}
	For finite words, the class of \msotext definable languages is the least class of languages over finite alphabets that contains the languages $0^*\subseteq\{0,1\}^*$ and $0^*1^*\subseteq\{0,1\}^*$ and is closed under union, intersection, complement, inverse images and direct images along surjective letter-to-letter homomorphisms.
\end{prop}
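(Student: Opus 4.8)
The plan is to prove the two inclusions separately; write $\mathcal{C}$ for the least class described in the statement. For ``$\mathcal{C}\subseteq\msotext$'' I would first check that both generators are \msotext-definable: $0^*\subseteq\{0,1\}^*$ is ``no position carries label $1$'' and $0^*1^*$ is ``no position labelled $1$ precedes a position labelled $0$'', both even first-order. Closure under union, intersection and complement matches the connectives $\lor,\land,\lnot$. For the inverse image along the letter-to-letter homomorphism induced by $f:\Sigma\to\Gamma$ I would use the purely syntactic substitution replacing each label predicate $Q_c$ (for $c\in\Gamma$) by $\bigvee_{f(a)=c}Q_a$. The only genuinely logical step is closure under direct images along a surjective $f$: a word $v$ lies in $f^*(L)$ iff one can guess a compatible $\Sigma$-labelling, which I encode by existentially quantified monadic predicates $(X_a)_{a\in\Sigma}$ asserting that they partition the positions, that every position in $X_a$ carries the $\Gamma$-label $f(a)$, and that the defining sentence of $L$ holds after replacing each $Q_a$ by $X_a$. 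Surjectivity guarantees the guess can always be completed, so direct image corresponds exactly to existential set quantification.

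The harder inclusion ``$\msotext\subseteq\mathcal{C}$'' I would prove by structural induction, encoding free variables into the alphabet. To a formula $\varphi$ over $\Sigma$ with free set variables $X_1,\dots,X_n$ I associate the language $L_\varphi$ over $\Sigma\times\{0,1\}^n$ of all words that encode a word together with a satisfying valuation, the $j$-th bit track marking $X_j$. Treating first-order variables as singleton set variables, I would show $L_\varphi\in\mathcal{C}$ by induction: the connectives become intersection, union and complement, after padding the two alphabets to a common one via inverse images along the track-forgetting projections (surjective letter-to-letter homomorphisms whose inverse image merely adds an unconstrained track, and bijective relabellings to reorder tracks); an existential quantifier $\exists X_j$ becomes the direct image along the projection erasing the $j$-th track. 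A sentence then yields a language over $\Sigma\times\{0,1\}^0=\Sigma$, namely the one it defines.

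The crux, and the step I expect to be the main obstacle, is the atomic base case: realising the languages of atomic formulas from only $0^*$ and $0^*1^*$. By first taking inverse images along coordinate projections it suffices to build these over small alphabets. The role of $0^*1^*$ is that it defines exactly the \emph{final segments} of a word, so its inverse images give me the whole chain of final segments, and Boolean combinations give inclusion and disjointness of tracks; the role of $0^*$ is to express emptiness of a track. The delicate part is the singleton/successor structure that makes first-order variables behave: I would express ``$F$ covers $G$'' (final segments differing by one position) as ``$G\subsetneq F$, both final segments, and no final segment lies strictly between them'', where the non-existence is a \emph{complement of a direct image}. A singleton is then the one-element difference $F\setminus G$ of a covering pair, recovered by a further direct image; from singletons the order $x<y$ (some final segment contains $y$ but not $x$), membership, and label predicates (letter-level conditions, i.e.\ inverse images of $0^*$ intersected with the singleton constraint) all follow. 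Thus the two generators are exactly calibrated: $0^*1^*$ supplies the linear order down to its covering relation and $0^*$ supplies emptiness, while the alternation between direct image and complement supplies the quantifier alternation hidden in ``exactly one'' and ``immediately after''.
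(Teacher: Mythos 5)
Your proposal is correct, and its overall skeleton (structural induction on formulas, free set variables encoded as extra $\{0,1\}$-tracks on the alphabet, connectives handled by Boolean operations after padding via inverse images, $\exists X$ handled by the direct image of the track-erasing projection) is the same as the paper's. Where you genuinely diverge is at the atomic level. The paper first invokes the standard normal form of Thomas in which first-order variables are eliminated altogether, so that the only atomic predicates are $X\subseteq Y$, $X<Y$ and $X\subseteq Q_a$; these are then obtained essentially as (Boolean combinations of) inverse images of the generators $0^*$ and $0^*1^*$, and no notion of ``singleton'' ever has to be expressed inside the class $\mathcal{C}$. You instead keep first-order variables as singleton-valued set variables and manufacture the predicate $\sing(X)$ from the generators: $0^*1^*$ yields the final segments, the covering relation between final segments is a complement of a direct image, and a singleton is the difference of a covering pair, from which $x<y$, $x\in X$ and $Q_a(x)$ follow. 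This is correct (the covering characterisation of ``differ by one position'' holds for final segments, and each step uses only the allowed closure operations), and it buys self-containedness -- you never need the variable-elimination theorem -- at the price of a more delicate base case in which direct images and complements are already interleaved before any quantifier of the original formula is treated. The paper's route is shorter but outsources exactly this work to the cited normal form. Two small remarks: surjectivity of $f$ plays no role in your argument that direct images preserve \msotext-definability (the guessed partition $(X_a)_{a\in\Sigma}$ simply becomes unsatisfiable on letters outside the image), so that sentence of your easy direction is harmless but not load-bearing; and since the closure in the statement is under inverse images of arbitrary homomorphisms, not just letter-to-letter ones, your syntactic substitution of $Q_c$ by $\bigvee_{f(a)=c}Q_a$ covers only the letter-to-letter case -- the general case needs the (standard, but not free) fact that \msotext is closed under inverse images of arbitrary monoid homomorphisms via an interpretation argument.
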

\begin{proof}(Sketch)  Following~\cite[Section 2.3]{thomas}, one can eliminate first-order variables from \msotext, and keep only the monadic second-order variables. Instead of the usual atomic predicates $x < y$ and $Q_a(x)$, which use first-order variables, one uses atomic predicates
	\begin{align*}
	\underbrace{X \subseteq Y}_{\text{set inclusion}} \qquad \underbrace{X < Y}_{\substack{\text{every position in $X$ is}\\ \text{before every position in $Y$} }} \qquad  \underbrace{X \subseteq Q_a}_{\text{all positions in $X$ have label $a$}}
	\end{align*}
	A formula with free variables  $\phi(X_1,\ldots,X_n)$ over an alphabet $\Sigma$ can be seen as a language  over an extended alphabet  $\Sigma\times 2^n$. The idea is to interpret a word over the extended alphabet as a word together with a valuation of the sets $X_1,\ldots,X_n$, as indicated by the $n$ bits stored by every position.
For instance,
if $\phi(X,Y)=X\subseteq Y$ and $\psi(X,Y)=Y\subseteq X$, then the word
\[
w=(a,0,0)(a,1,0)(b,1,1),
\]
does not satisfy $\phi$, but satisfies $\psi$.

The set of words that satisfy $\exists X\ \phi(X,Y_1,\ldots,Y_n)$ is exactly the direct image of the set of words
that satisfy $\phi(X,Y_1,\ldots,Y_n)$ under the letter-to-letter homomorphism $\pi^*$ which is obtained by lifting to words  the projection
\begin{align*}
	\pi:\Sigma\times 2^{n+1}\to \Sigma\times 2^n \qquad  (a,x,y_1,\ldots,y_n) \mapsto (a,y_1,\ldots,y_n)
\end{align*}
In this sense, the second-order existential quantifier is abstractly captured by direct images under (surjective) letter-to-letter homomorphims.

Given the above observations, one can show the theorem by induction of the structure of \msotext formulas. The language $0^*\subseteq\{0,1\}^*$ corresponds to the predicates $X\subseteq Y$ and $X \subseteq Q_a$ by means of an inverse image (the idea is that the letter $0$ represents the positions which satisfy $X \subseteq Y$ and $1$ represents the other positions, likewise for $X \subseteq Q_a$). The language $0^*1^*\subseteq\{0,1\}^*$ corresponds to the predicate $X<Y$, again by an inverse image.
	Direct images of surjective letter-to-letter homomorphisms correspond to the quantifier $\exists$. Closure under union, intersection and complement come from the logical
	connectives $\lor$, $\land$ and $\neg$, respectively.
\end{proof}

Motivated by Proposition~\ref{propclosuremso}, one can define an abstract version of \msotext in any monad, at least over the category of sets, as the least class of languages with the closure properties stated in the proposition. This definition was proposed in~\cite[Section 6.1]{boj1}, and is described in more detail in Section~\ref{sec:monadic:mso}. A parameter of the definition is a choice of atomic languages (such as $0^*$ and $0^*1^*$ in Proposition~\ref{propclosuremso}). If the atomic languages are recognizable, and the class of recognizable languages %in the given monad
has the required closure properties, then all \msotext definable languages are going to be recognizable.  As mentioned previously,  most closure properties for recognizable languages are straightforward: Boolean combinations and inverse images under surjective letter-to-letter homomorphisms (or, indeed, under arbitrary homomorphisms). There is  one exception: direct images under surjective letter-to-letter homomorphisms. The main topic of this paper is understanding this last closure property.

As we will see, there are examples of monads for which the closure property fails. Also, when the property holds, it can hold for various reasons. We identify  three rather diverse sufficient conditions:
\begin{enumerate}
	\item Some monads admit a powerset construction for algebras, which entails closure under direct images. In Section~\ref{sec:suffcond-jacobs}, we identify a sufficient condition on the monad (we call such monads weakly epi-cartesian) which guarantees that a powerset construction works.   Many monads are weakly epi-cartesian, including  monads for all structures that have been traditionally considered in automata theory (words, trees, etc.).

	As a by-product, we prove that weakly cartesian monads admit distributive laws over the powerset monad (Theorem~\ref{thm:distlaw}). This result is of independent interest: it extends previous results by Jacobs~\cite{jacobsTSC}, and it contrasts with recent negative results of Klin and Salamanca~\cite{KliSal18} and Zwart and Marsden~\cite{zwart19}.
	\item Another sufficient condition is having  a Mal'cev term (see Section~\ref{sec:malcev}). For example, the free group monad has a Mal'cev term,
but it does not admit a powerset construction.
	\item Yet another reason is that the monad preserves finiteness, which implies that all languages over finite alphabets are trivially recognizable.
	Although seemingly trivial, this condition can cover interesting examples, such as the monad of idempotent monoids, where preservation of finiteness is true but not obvious.
\end{enumerate}
We will also show several examples of monads where the closure property holds, but which do not fall into any of the three classes described above. The project of understanding \msotext for monads -- even in the restricted setting of the category of sets -- is still far from complete. We hope, however, that the reader will be intrigued by the rich variety of examples -- we show~\ref{mon:lastmonad} of them altogether.

\paragraph*{Structure of the paper.}
In Section~\ref{sec:monads} we recall basic notions and results about monads, followed by an abstract monadic definition of \msotext{} in Section~\ref{sec:monadic:mso}. In Sections~\ref{sec:suffcond} and~\ref{sec:suffcond-jacobs} we present three sufficient conditions which guarantee that recognizable languages are closed under direct images of surjective letter-to-letter homomorphisms. We show numerous examples there. Other positive examples, which do not satisfy any of the three conditions, are shown in Section~\ref{sec:sporadic}. Section~\ref{sec:counterexamples} lists a few counterexamples where the closure property fails. In Section~\ref{sec:related} we illustrate what would happen if somewhat stronger closure properties were required of a monad. We conclude in Section~\ref{sec:futurework} by sketching main directions of future work.

\section{Monads}\label{sec:monads}

We now introduce some basic concepts and intuitions related to monads. Everything in this section is completely standard (see e.g.~\cite{AwodeyCT,maclaneCT}).

One of the several possible intuitive understandings of monads is that they formalize ``ways to collect things''. Given a set $\Sigma$ (understood as an alphabet), a monad returns a set $T\Sigma$ of collections, or structures, built out of letters from $\Sigma$. The kind and shape of these structures depends on the monad: for example, the finite list monad sets $T\Sigma$ to be the set of all finite sequences of elements of $\Sigma$, and the powerset monad returns the set of all subsets of $\Sigma$. A monad must provide further structure:
\begin{itemize}
\item A way to apply functions to structures element-wise; formally, a function $f:\Sigma\to \Gamma$ should yield a function $Tf:T\Sigma\to T\Gamma$ so that identity functions and function composition are preserved (formally, this makes $T$ a functor);
\item A way to build ``singleton'' structures, formalized as a function $\eta_{\Sigma}:\Sigma\to T\Sigma$, called the {\em unit}, for each alphabet $\Sigma$;
\item A way to ``flatten'' structures of structures to single-layer structures, formalized as a function $\mu_\Sigma:T(T\Sigma)\to T\Sigma$, called the {\em multiplication}, for each alphabet $\Sigma$.
\end{itemize}
These ingredients are subject to a few axioms. First, both the unit and the multiplication must be {\em natural}, i.e., they must be invariant under arbitrary renamings of elements. Formally, the ``naturality squares''
\begin{equation}\label{eq:naturality}
\vcenter{\xymatrix{
	\Sigma\ar[r]^{\eta_\Sigma}\ar[d]_f & T\Sigma\ar[d]^{Tf} \\
	\Gamma\ar[r]_{\eta_\Gamma} & T\Gamma
}}\qquad\qquad
\vcenter{\xymatrix{
	TT\Sigma\ar[r]^{\mu_\Sigma}\ar[d]_{TTf} & T\Sigma\ar[d]^{Tf} \\
	TT\Gamma\ar[r]_{\mu_\Gamma} & T\Gamma
}}
\end{equation}
must commute for every function $f:\Sigma\to \Gamma$. Furthermore, the multiplication operation must be associative and the unit must be the actual two-sided unit for it, in the sense made formal in the following definition:

\begin{defi}
A {\em monad} $(T,\eta,\mu)$ (on the category $\set$ of sets and functions) is a functor $T:\set\to\set$ together with natural transformations $\eta:Id\Rightarrow T$ and $\mu:TT\Rightarrow T$, subject to axioms:
\begin{equation}\label{eq:monadlaws}
\vcenter{\xymatrix{
T\Sigma\ar[r]^-{\eta_{T\Sigma}}\ar@{=}[rd] & TT\Sigma\ar[d]_{\mu_\Sigma} & T\Sigma\ar@{=}[ld]\ar[l]_-{T\eta_\Sigma} \\
& T\Sigma}} \qquad\qquad
\vcenter{\xymatrix{
TTT\Sigma\ar[r]^{T\mu_\Sigma}\ar[d]_{\mu_{T\Sigma}} & TT\Sigma\ar[d]^{\mu_\Sigma} \\
TT\Sigma\ar[r]_{\mu_\Sigma} & T\Sigma
}}
\end{equation}
for all sets $\Sigma$.
\end{defi}

We will usually denote a monad $(T,\eta,\mu)$ simply by $T$. If a risk of confusion arises (i.e. in the presence of more than one monad), its unit and multiplication will then be called $\eta^T$ and $\mu^T$.

Here are a few standard examples of monads.

\begin{monad}\label{mon:list}
The {\em list monad}, also known as the {\em free monoid monad}, is defined by $T\Sigma=\Sigma^*$, the set of finite words over $\Sigma$, with $Tf(x_1\cdots x_n) = f(x_1)\cdots f(x_n)$ for $f:\Sigma\to \Gamma$. The unit is defined by singleton words: $\eta_\Sigma(x)=x$, and multiplication by words concatenation: $\mu_\Sigma((w_1)(w_2)\cdots (w_n))=w_1w_2\cdots w_n$.
\end{monad}

\begin{monad}\label{mon:semigroup}
The same definitions restricted to non-empty lists form the {\em non-empty list monad} $T\Sigma=\Sigma^+$, also called the {\em free semigroup monad}.
\end{monad}

\begin{monad}\label{mon:power}
The {\em powerset monad} is defined by $T\Sigma={\cal P} \Sigma$, the set of all subsets of $\Sigma$, with the action on functions defined by direct image: ${\cal P} f(\Delta)=\overrightarrow{f}(\Delta)$ for $f:\Sigma\to \Gamma$ and $\Delta\subseteq \Sigma$. The unit is defined by singletons: $\eta_\Sigma(x)=\{x\}$, and multiplication by set union: $\mu_\Sigma(\Phi) = \bigcup\Phi$ for $\Phi\subseteq{\cal P}\Sigma$.
\end{monad}

\begin{monad}\label{mon:npower}
The {\em nonempty powerset monad} $\PP$ is defined exactly the same but with $T\Sigma$ restricted to nonempty subsets of $\Sigma$.
\end{monad}

\begin{monad}\label{mon:finpower}
The {\em finite powerset monad} is also defined the same but with $T\Sigma$ restricted to finite subsets of $\Sigma$. This is also called the {\em free semilattice monad}.
\end{monad}

\begin{monad}\label{mon:bag}
The {\em bag monad} $\Bag$, also known as the {\em multiset monad} or the {\em free commutative monoid monad}, is defined so that $\Bag \Sigma$ is the set of functions from $\Sigma$ to $\mathbb{N}$ that are zero almost everywhere, and the action on functions is defined by
\[
	\Bag f(\beta)(y) = \sum_{x\in f^{-1}(y)}\beta(x);
\]
the sum is well-defined since $\beta$ returns zero almost everywhere. The unit of $\Bag$ maps an element $x\in \Sigma$ to the function $[x\mapsto 1]$ that maps $x$ to $1$ and is zero everywhere else. Multiplication is defined by:
\[
	\mu_\Sigma(\beta)(x) = \sum_{\phi\in \Bag \Sigma}\beta(\phi)\times\phi(x).
\]
\end{monad}

\medskip

We shall be looking at numerous examples of monads in the following. Many of them are best presented in terms of operations and equations. A useful recipe for defining a (finitary) monad on $\set$ begins by considering an algebraic {\em signature}, i.e., a set of operation symbols, each with an associated finite arity. With a signature fixed, the notion of a {\em term} over a set $X$ of variables is defined as usual: a variable $x\in X$ is a term, and $\sigma(t_1,\ldots,t_n)$ is a term whenever $\sigma$ is a symbol of arity $n$ and $t_1,\ldots,t_n$ are terms. An {\em equation} is a pair of terms (over the same set of variables). A set of equations defines, for any alphabet $\Sigma$, a congruence relation on the set of terms over $\Sigma$ in the expected way: it is the least equivalence relation that is compatible with the operations and satisfies all the equations. One then defines $T\Sigma$ to be the set of equivalence classes of terms over $\Sigma$, under that congruence relation. For the unit, $\eta_\Sigma(x)$ is (the equivalence class of) the variable $x$, and multiplication is defined by term substitution. It is standard to check that these ingredients form a monad on $\set$, and the original set of equations is then called an {\em equational presentation} of that monad.

\begin{ex}
Let the signature contain a binary symbol $\mydot$ and a constant (i.e., a symbol of arity zero) $e$. The unit and associativity equations:
\[
	e\mydot x = x \mydot e = x \qquad\qquad (x\mydot y)\mydot z = x\mydot (y\mydot z)
\]
form a presentation of the free monoid monad (Monad~\ref{mon:list}). Adding a further equation for commutativity:
\[
	x\mydot y = y\mydot x
\]
one obtains the free commutative monoid monad (Monad~\ref{mon:bag}). Adding yet another equation:
\[
	x\mydot x = x
\]
yields a presentation of the finite powerset monad (Monad~\ref{mon:finpower}).
\end{ex}

\begin{defi}\label{def:eilenberg-moore}
Given a monad $T$, an (Eilenberg-Moore) {\em algebra} for $T$ (shortly, a {\em $T$-algebra}) is a set $A$ together with a function $\alpha: TA\to A$ subject to two axioms:
\[\xymatrix{
A\ar[r]^{\eta_A}\ar@{=}[rd] & TA\ar[d]^\alpha \\
& A
}\qquad\qquad
\xymatrix{
TTA\ar[r]^{\mu_A}\ar[d]_{T\alpha} & TA\ar[d]^{\alpha} \\
TA\ar[r]_{\alpha} & A.
}\]
A $T$-algebra {\em homomorphism} from $\mathbf{A}=(A,\alpha)$ to $\mathbf{B}=(B,\beta)$ is a function $h:A\to B$ such that the diagram
\[\xymatrix{
TA\ar[r]^{Th}\ar[d]_{\alpha} & TB\ar[d]^{\beta} \\
A\ar[r]_h & B
}\]
commutes. $T$-algebras and their homomorphisms form a category, denoted $\Alg(T)$.
\end{defi}

It is easy to check that for any set $\Sigma$, the set $T\Sigma$ with $\mu_\Sigma:TT\Sigma\to T\Sigma$ is a $T$-algebra. This is the {\em free $T$-algebra over $\Sigma$}; its fundamental property is that for any $T$-algebra $\mathbf{A}$, homomorphisms from $T\Sigma$ to $\mathbf{A}$ are in bijective correspondence with functions from $\Sigma$ to $A$.

\begin{ex}
For $T$ the list monad (Monad~\ref{mon:list}), a $T$-algebra on $A$ is a function that interprets arbitrary finite sequences over $A$ as elements of $A$. The two axioms of $T$-algebras mean that the function is associative in the obvious sense; indeed, it easy to check that $T$-algebras correspond to monoids. Moreover, $T$-algebra homomorphisms are monoid homomorphisms. $T\Sigma=\Sigma^*$ is the free monoid over $\Sigma$.

The same schema applies to any equationally definable class of algebras. We can therefore consider ``free X monad'' where X can stand for semigroup, commutative semigroup, group, abelian group, lattice, distributive lattice, Boolean algebra and so on, with algebras for the ``free X monad'' being exactly X's.
\end{ex}

\section{Monadic Monadic Second Order Logic}\label{sec:monadic:mso}

We now define our abstract monadic $\mso$, inspired by Proposition \ref{propclosuremso}. First we define a general notion of a recognizable language \cite{EILENBERG1967452}.

\begin{defi}[Recognizable $T$-languages]
Let $T$ be a monad on $\set$ and $\Sigma$ a finite set of symbols, called the {\it alphabet} in this context. A {\em $T$-language} over $\Sigma$ is %a function $L:\set(T\Sigma, 2)$, i.e.,
a subset of $T\Sigma$. A $T$-algebra $\mathbf{A}=(A,\alpha)$ {\em recognizes} a $T$-language $L$ over $\Sigma$ iff there is a homomorphism $h:T\Sigma\to \mathbf{A}$, and a subset $S\subseteq A$, such that $L$ is the inverse image of $S$ along $h$.
A $T$-language is {\em recognizable} if it is recognized by some finite $T$-algebra.

For a fixed $T$, let $\rec_{T}$ denote the class of all recognizable $T$-languages.
\end{defi}

\begin{ex}\label{exmon}
For $T$ the free monoid monad (Monad~\ref{mon:list}), $\rec_{T}$ is the class of regular languages (or, equivalently, $\mso$-definable languages).\qed
\end{ex}

Recognizable languages are closed under Boolean operations.

\begin{prop}\label{prop1}
For any monad $T$, the class $\rec_{T}$ is closed under binary unions, binary intersections and complement, for a fixed alphabet.
\end{prop}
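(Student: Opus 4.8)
The plan is to prove all three closure properties using the same basic mechanism: given recognizable languages over a fixed alphabet $\Sigma$, I can combine the recognizing algebras into a single algebra that simultaneously recognizes both, and then read off the Boolean combination as an appropriate subset. The key structural tool will be the \emph{product} of two $T$-algebras. Specifically, if $\mathbf{A}=(A,\alpha)$ and $\mathbf{B}=(B,\beta)$ are $T$-algebras, then the cartesian product $A\times B$ carries a canonical $T$-algebra structure, because the forgetful functor $\Alg(T)\to\set$ creates limits; concretely, the structure map is the composite $T(A\times B)\xrightarrow{\langle T\pi_A,\,T\pi_B\rangle} TA\times TB\xrightarrow{\alpha\times\beta} A\times B$, and one checks the two algebra axioms hold componentwise. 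If $\mathbf{A}$ and $\mathbf{B}$ are finite, so is $A\times B$.

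For \emph{intersection} and \emph{union}, suppose $L_1$ is recognized by $h_1:T\Sigma\to\mathbf{A}$ via $S_1\subseteq A$ and $L_2$ by $h_2:T\Sigma\to\mathbf{B}$ via $S_2\subseteq B$. First I would form the pairing homomorphism $h=\langle h_1,h_2\rangle:T\Sigma\to\mathbf{A}\times\mathbf{B}$; this is a $T$-algebra homomorphism precisely by the universal property of the product in $\Alg(T)$ (equivalently, one verifies directly that $h$ commutes with the structure maps, which follows from $h_1,h_2$ doing so). Then $L_1\cap L_2=h^{-1}(S_1\times S_2)$ and $L_1\cup L_2=h^{-1}\big((S_1\times B)\cup(A\times S_2)\big)$, since membership of $w\in T\Sigma$ in $L_i$ is exactly the condition $h_i(w)\in S_i$, i.e.\ the $i$-th coordinate of $h(w)$ landing in $S_i$. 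Both recognizing subsets of the finite set $A\times B$ are valid, so both languages are recognizable.

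For \emph{complement} the argument is even simpler and needs no product: if $L$ is recognized by $h:T\Sigma\to\mathbf{A}$ via $S\subseteq A$, then the same homomorphism $h$ recognizes $T\Sigma\setminus L$ via the complementary subset $A\setminus S$, because $w\notin L$ iff $h(w)\notin S$ iff $h(w)\in A\setminus S$. This is the key point flagged in the paper's footnote: with algebras rather than automata, complementation is free.

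The steps in order are thus: (1) establish that $A\times B$ is a $T$-algebra and that $\langle h_1,h_2\rangle$ is a homomorphism; (2) exhibit the three recognizing subsets as above and verify the inverse-image identities; (3) note finiteness is preserved throughout. I expect no genuine obstacle here — the statement is elementary given the Eilenberg-Moore framework. The only point requiring a little care is verifying that the product algebra structure I wrote down actually satisfies the $T$-algebra axioms and that pairing yields a homomorphism; this is the routine but load-bearing calculation, and it is exactly where one uses naturality of $\eta,\mu$ together with the fact that $T$ is a functor so that $\langle T\pi_A,T\pi_B\rangle$ interacts correctly with $\mu$. Everything else is a direct unwinding of the definition of recognizability.
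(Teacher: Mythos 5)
Your proof is correct and follows essentially the same route as the paper's: the product algebra $\mathbf{A}\times\mathbf{B}$ with the pairing homomorphism $\langle h_1,h_2\rangle$ recognizes intersection via $S_1\times S_2$ and union via $(S_1\times B)\cup(A\times S_2)$, and the complement is handled by replacing $S$ with $A\setminus S$. The only difference is that you spell out the product algebra structure map and its verification, which the paper takes for granted.
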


\begin{proof}
Let $L_i$ be a language that is recognized by a finite $T$-algebra $\mathbf{A}_i$ with $S_i\subseteq A_i$ through homomorphisms $h_i:T\Sigma\to \mathbf{A}_i$, $i=1,2$. Then:
\begin{itemize}
\item $L_1\cap L_2$ is recognized by $\mathbf{A}_1\times \mathbf{A}_2$ through the homomorphism $\langle h_1,h_2\rangle$ and the subset
\[
	\{(a_1,a_2) \mid a_1\in S_1 \mbox{ and } a_2\in S_2\};
\]
\item $L_1\cup L_2$ is recognized by $\mathbf{A}_1\times \mathbf{A}_2$ through the homomorphism $\langle h_1,h_2\rangle$ and the subset
\[
	\{(a_1,a_2) \mid a_1\in S_1 \mbox{ or } a_2\in S_2\};
\]
\item $T\Sigma\setminus L_1$ is recognized by $\mathbf{A}_1$ through $h_1$ and $A_1\setminus S_1$.
\end{itemize}
\end{proof}

Let $h:T\Sigma\to T\Gamma$ be a $T$-algebra homomorphism. If a language $L\subseteq T\Gamma$ is recognizable then the inverse image $\overleftarrow{h}(L)\subseteq T\Sigma$ is recognizable as well: it is recognized by the same $T$-algebra and subset as $L$. So recognizable languages are closed under inverse images of $T$-algebra homomorphisms. One may ask the same question about direct images: if $L\subseteq T\Sigma$ is recognizable,
is $\overrightarrow{h}(L)\subseteq T\Gamma$ recognizable as well? This question will be the our main technical focus in the following.

We call an algebra homomorphism $h:T\Sigma\to T\Gamma$ a {\it letter-to-letter homomorphism} if $h=Tf$ for some function $f:\Sigma\to \Gamma$. We call it {\it surjective} if $f$ is surjective\footnote{All functors on $\set$ preserve surjective functions since every surjective function $f$ has a right inverse $g$ in the sense that $f\circ g=\id$, then we apply $T$ to the last equation to obtain that $Tf$ is also surjective. Conversely, under the assumption that each component of $\eta$ is injective, we have that $T$ is faithful and faithful functors reflect epimorphisms (=surjections in the category $\set$). To show that $T$ is faithful assume $Tf=Tg$ for $f,g:X\to Y$ then
\[
	\eta_Y\circ g=Tg\circ \eta_X=Tf\circ \eta_X=\eta_Y\circ f
\]
which implies $g=f$ since $\eta_Y$ is injective. There will be no confusion between $Tf$ being surjective and $f$ being surjective since all the monads we consider are such that $\eta_\Sigma$ is injective for each $\Sigma$.}. Proposition \ref{propclosuremso} motivates the following definition \cite[Sec. 6]{boj1}.

\begin{defi}
Let $T$ be a monad on $\set$. Let $\mathcal{L}$ be a family of $T$-languages over finite alphabets. We define $\mso(\mathcal{L})$ as the least class of $T$-languages over finite alphabets that contains $\mathcal{L}$ and is closed under Boolean operations, inverse images of homomorphisms, and direct images of surjective letter-to-letter homomorphisms.
\end{defi}

\begin{ex}
Proposition \ref{propclosuremso} means that:
\[
	\mso(\{0^*,0^*10^*\})=\rec_{(-)^*}
\]
where both $0^*$ and $0^*1^*$ are considered as $(-)^*$-languages over $\{0,1\}$.
\qed
\end{ex}

Now we are ready to state the main technical question of this paper, which is a necessary step towards  a result similar to Proposition~\ref{propclosuremso} for other monads, namely:
\begin{quote}
	$\bullet$ When does $\mathcal{L}\subseteq \rec_{T}$ imply $\mso(\mathcal{L})\subseteq \rec_{T}$?
\end{quote}

By Proposition~\ref{prop1}, $\rec_{T}$ is closed under Boolean operations. Trivially, inverse images of recognizable languages are also recognizable. Thus the question is reduced to:
\begin{quote}
	$\bullet$ When is $\rec_{T}$ closed under direct images of surjective letter-to-letter homomorphisms?
\end{quote}
Asking this specific question is, to some extent, a design decision. In the classical setting of finite words, recognizable languages are closed under direct images along arbitrary homomorphisms, so one may reasonably ask for a stronger closure property: under arbitrary homomorphisms, or perhaps under arbitrary (i.e. possible non-surjective) letter-to-letter homomorphisms. We will look at these variants of the question in Section~\ref{sec:related}.

In any case, one may wonder whether perhaps the question (say, in the weak version as stated above) has an affirmative answer for {\em every} monad. Indeed, this was mistakenly claimed as a fact in~\cite[Lemma 6.2]{boj1}. However, as the following counterexample shows, the situation is not so simple.

\begin{monad}\label{mon:x3-x2}
Let $T$ be the free monoid monad quotiented by the additional equation
\begin{equation}\label{eq:burnside}
	x\mydot x\mydot x = x\mydot x.
\end{equation}
The congruence (which we denote $\sim$) induced on $\Sigma^*$ by~\eqref{eq:burnside} has been the subject of some research: Brzozowski in~\cite{brzozowski80} asked whether all equivalence classes of this congruence are regular languages, and the question
has remained open ever since~(see~\cite{pin17}). We do not need to answer that question for our purposes.

Elements of $T\Sigma$ are $\sim$-equivalence classes of finite $\Sigma$-words, and so a $T$-language over $\Sigma$ can be identified with a $\sim$-closed language of $\Sigma$-words. It is easy to see that a $T$-language is recognizable if and only if its associated language of $\Sigma$-words is regular.

Let $\Gamma=\{a,b,c\}$ and $\Sigma=\Gamma\cup\{0,1\}$. Consider the language of finite $\Sigma$-words:
\[
	L = \Gamma^*0\Gamma^*1.
\]
It is obviously regular. It is also $\sim$-closed. Indeed, assume that
\[
	u0v1=w
\]
for some $u,v\in\Gamma^*$ and $w\in\Sigma^*$. This means that there is a finite sequence of applications of~\eqref{eq:burnside} that transforms $u0v1$ into $w$. If this sequence is nonempty then its first step must detect a square, that is, a word of the form $xx$, in $u0v1$. (The other option is that a cube $xxx$ must be present, which implies a square anyway). Since $0$ and $1$ occur in that word only once each, it is clear that the square subword must occur entirely within $u$ or within $v$. This means that the result of applying~\eqref{eq:burnside} to $u0v1$ still belongs to $L$. By induction, also $w$ must belong to $L$.

We have thus proved that $L$, considered as a subset of $T\Sigma$, is $T$-recognizable.

Now consider $\Sigma'=\Gamma\cup\{0\}$ and let $h:\Sigma\to\Sigma'$ map $1$ to $0$ and act identically on all other letters. The direct image of $L$ along $h$, construed (as it should be) as a subset of $T\Sigma'$, corresponds to
a language $K$ of $\Sigma'$-words which arises as a $\sim$-closure of the language $\Gamma^*0\Gamma^*0$. (Note that the language itself is not $\sim$-closed, as e.g. $a0a0\sim a0a0a0$.)

We shall show that $K$ is not regular, which will imply that the direct image of $L$ under $h$ is not recognizable.

First, it is a well-known fact that since $\Gamma$ has $3$ elements, the language $\Gamma^*$ contains infinitely many square-free words. Moreover, two distinct square-free words cannot be $\sim$-equivalent, since the equation~\eqref{eq:burnside} cannot be applied to a square-free word.

Were the language $K$ regular, its Myhill-Nerode congruence would have a finite index, and so there would be two distinct square-free words $v,w\in\Gamma^*$ which are Myhill-Nerode equivalent with respect to $K$.

Thanks to~\eqref{eq:burnside} we have $v0v0v0\in K$
so, since $v$ and $w$ are Myhill-Nerode equivalent, also
\[
	v0w0v0\in K.
\]
But it is easy to check that if $v$ and $w$ are square-free and $v\neq w$ then $v0w0v0$ is also square-free, and so it is not $\sim$-equivalent to any word other than itself. We arrive at a contradiction, and so $K$ cannot be regular.
\end{monad}

More counterexamples of this kind are presented in Section~\ref{sec:counterexamples}. But first, now knowing that our main technical question is non-trivial, let us study some conditions on the monad $T$ that guarantee an affirmative answer to it.

\section{Simple sufficient conditions}\label{sec:suffcond}

In this section, we present two conditions on a monad $T$ that guarantee that the class $\rec_{T}$ is closed under direct images of surjective letter-to-letter homomorphisms. Another, more elaborate sufficient condition will be presented in Section~\ref{sec:suffcond-jacobs}.

\subsection{Monads that preserve finiteness}

One straightforward case where direct images of recognizable languages are also recognizable, is the case where the functor part of the monad preserves finiteness, i.e., maps finite sets to finite sets. Trivially, then, any language over $\Sigma$ is recognized by the identity homomorphisms into the finite algebra $T\Sigma$, so in particular any direct image is recognizable.

Examples of monads with this property include: the powerset monad and its variants, the double contravariant powerset monad $2^{2^{\Sigma}}$, monads for idempotent semigroups, idempotent monoids,  distributive lattices, Boolean algebras, semimodules over a finite semiring, and, in general, any {\it locally finite variety}. A variety is {\it locally finite} if every finitely generated algebra is finite.

For a given equational theory, checking if a finitely generated algebra is finite could be a challenging problem. For instance, the fact that a finitely generated idempotent semigroup is finite is a nontrivial fact, with multiple proofs published in different papers, (see e.g.~\cite{greenrees,brown}). Another interesting case are distributive lattices, where the exact size of the free algebra on $n$ generators is still unknown for values of $n>8$. 

\subsection{Monads with a Mal'cev term}\label{sec:malcev}

Equational theories over finite signatures that have a Mal'cev term induce monads for which direct images of recognizable languages along surjective letter-to-letter homomorphisms are also recognizable. A Mal'cev term is defined as follows \cite[II.12]{bys}.

\begin{defi}
A ternary term $p(x,y,z)$ on an equational class $V$ is called a {\it Mal'cev term} if the following identities hold in $V$:
\begin{equation}\label{eq:malcev}
	p(x,x,y)= y = p(y,x,x).
\end{equation}
\end{defi}
Equational classes with a Mal'cev term are {\it congruence-permutable} \cite[Theorem II.12.2]{bys}, i.e., any two congruences commute in them. Examples of such classes include:

\begin{monad}\label{mon:grp}
The {\em free group monad}, presented by a binary operation $\mydot$, a unary operation $(-)^{-1}$ and a constant operation $1$, subject to the usual group axioms. A Mal'cev term $p$ is given by $p(x,y,z)=x\mydot y^{-1}\mydot z$.
\end{monad}

Similarly, Mal'cev terms can be found for any equational class that has a group reduct, including abelian groups, rings, vector spaces over a field and algebras over a field. In this case, if the group reduct is given in additive notation, then the term $p$ is given by $p(x,y,z)=x-y+z$.

\begin{monad}\label{mon:qgrp}
The {\em free quasigroup monad}, presented by three binary operations $/$, $\mydot$ and $\backslash$ subject to equations:
\[
	x\backslash(x\mydot y)=x \qquad (x\mydot y)/y = x \qquad x\mydot(x\backslash y)=y \qquad (x/y)\mydot y=x.
\]
A Mal'cev term is $p(x,y,z) = (x/(y\backslash y))\cdot (y\backslash z)$.
\end{monad}

\begin{monad}\label{mon:ba}
The {\em free Boolean algebra monad}, presented by binary operations $\lor$ and $\land$, a unary operation $\neg$ and constants $0$ and $1$, subject to familiar equations. A Mal'cev term is
\[p(x,y,z) =(x\land z)\lor (x\land \lnot y \land \lnot z)\lor (\lnot x\land \lnot y \land z).\]
\end{monad}

\begin{monad}\label{mon:ha}
The {\em free Heyting algebra monad}, presented by binary operations $\lor$, $\land$ and $\to$, and constants $0$ and $1$, subject to the standard equations of Heyting algebras. A Mal'cev term is
\[p(x,y,z)=((x\to y)\to z)\land ((z\to y)\to z)\land (x\lor z).\]
\end{monad}
Note that, apart from the case of Boolean algebras, in all the above examples finitely generated free algebras are infinite.

\begin{prop}\label{prop:malcev}
For a monad $T$ presented by a set of equations that admits a Mal'cev term, the class $\rec_{T}$ is closed under direct images of surjective letter-to-letter homomorphisms.
\end{prop}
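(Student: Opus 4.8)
The plan is to recognize the direct image by \emph{pushing forward}, along the surjection $h=Tf$, the congruence that recognizes $L$, with the Mal'cev term entering precisely to guarantee that this push-forward is again a congruence. First I would set up the problem. Suppose $L\subseteq T\Sigma$ is recognized by a finite algebra $\mathbf A=(A,\alpha)$ via a homomorphism $g:T\Sigma\to\mathbf A$ and a subset $S\subseteq A$, so that $L=g^{-1}(S)$. Replacing $A$ by the image subalgebra $g(T\Sigma)$, which is still finite and still recognizes $L$, I may assume that $g$ is surjective. For a surjective letter-to-letter homomorphism $h=Tf$ with $f:\Sigma\to\Gamma$, an element $t\in T\Gamma$ lies in $\overrightarrow{h}(L)$ exactly when $h^{-1}(t)$ meets $L$, i.e.\ when $g\bigl(h^{-1}(t)\bigr)\cap S\neq\emptyset$. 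The whole argument thus reduces to understanding the sets $g\bigl(h^{-1}(t)\bigr)\subseteq A$ as $t$ ranges over $T\Gamma$. I introduce the two congruences on $T\Sigma$ that are relevant: $\Psi=\ker h$, whose classes are exactly the fibres $h^{-1}(t)$, and $\Phi=\ker g$, which has finite index since $A$ is finite.

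The heart of the proof, and the step I expect to be the main obstacle, is to show that the image relation $\theta=(g\times g)(\Psi)=\{(g(w),g(w'))\mid w\mathrel{\Psi}w'\}$ is a \emph{congruence} on $A$. Compatibility with the operations, reflexivity and symmetry are immediate from $g$ being a surjective homomorphism; the only nontrivial point is transitivity. This is where I would invoke the hypothesis: since $\Alg(T)$ admits a Mal'cev term it is congruence-permutable, so $\Phi$ and $\Psi$ commute on the free algebra $T\Sigma$, that is $\Phi\circ\Psi=\Psi\circ\Phi$. Unravelling a chain $g(w_1)\mathrel{\theta}g(w_2)\mathrel{\theta}g(w_3)$ through $g$ yields elements related by the composite $\Psi\circ\Phi\circ\Psi$, and permutability collapses $\Psi\circ\Phi\circ\Psi$ to $\Psi\circ\Phi$, which is exactly what is needed to close transitivity. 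Without permutability $\theta$ need not be transitive, and this is morally why a generic monad (such as Monad~\ref{mon:x3-x2}) can fail the closure property.

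Once $\theta$ is known to be a congruence, I would finish as follows. A second, entirely analogous application of permutability shows that for any $w_0$ with $h(w_0)=t$ the set $g\bigl(h^{-1}(t)\bigr)$ is precisely the $\theta$-class of $g(w_0)$: the inclusion $\subseteq$ is trivial, and the reverse inclusion again rewrites $\Psi\circ\Phi$ as $\Phi\circ\Psi$. Next, letting $q:A\to A/\theta$ be the quotient homomorphism, one has $\ker h=\Psi\subseteq\ker(q\circ g)$, so the universal property of the surjective homomorphism $h$ factors $q\circ g$ as $\bar g\circ h$ for a unique homomorphism $\bar g:T\Gamma\to A/\theta$; explicitly $\bar g(t)=q(g(w_0))$. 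Finally, putting $\bar S=\{c\in A/\theta\mid q^{-1}(c)\cap S\neq\emptyset\}$, the reformulation from the first paragraph gives $t\in\overrightarrow{h}(L)$ iff the $\theta$-class of $g(w_0)$ meets $S$ iff $\bar g(t)\in\bar S$. Hence $\overrightarrow{h}(L)=\bar g^{-1}(\bar S)$ is recognized by the finite algebra $A/\theta$, as required.
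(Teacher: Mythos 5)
Your proof is correct, and it takes a mirrored route to the paper's. The paper pushes the recognizing congruence $\ker g$ forward along the letter-to-letter surjection $Tf$ to obtain a relation $\approx$ on $T\Gamma$, proves transitivity by a one-line direct application of the Mal'cev identity ($a=p(a,b,b)\approx p(b,b,c)=c$), and recognizes the image by the quotient map $\tau:T\Gamma\to T\Gamma/{\approx}$, deducing finiteness of $T\Gamma/{\approx}$ from a surjection $e:A\to T\Gamma/{\approx}$. You instead push $\ker(Tf)$ forward along the recognizing homomorphism $g$ to get a relation $\theta$ on the \emph{finite} algebra $A$, and invoke congruence-permutability (the standard consequence of having a Mal'cev term, which the paper only mentions in a remark after its proof) twice: once to collapse $\Psi\circ\Phi\circ\Psi$ to $\Psi\circ\Phi$ for transitivity of $\theta$, and once to identify $g(h^{-1}(t))$ with a single $\theta$-class. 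Under permutability the two quotients coincide (both are $T\Sigma$ modulo $\Phi\vee\Psi=\Phi\circ\Psi$), so the recognizing devices are isomorphic. What your version buys is that finiteness of the recognizer is immediate, and the role of permutability---and hence exactly where a generic monad fails---is made explicit; what the paper's version buys is a shorter, self-contained transitivity argument that uses the Mal'cev term directly without citing the permutability theorem, and a recognizing homomorphism that is just the canonical quotient map. All the individual steps you outline (reflexivity of $\theta$ from surjectivity of $g$, the fibre computation, the factorization $\bar g\circ h=q\circ g$ through the surjective homomorphism $h$) check out.
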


Before proceeding with the proof, we recall the standard notion of a congruence. Let $\mathcal{F}=\{f_i\}_{i\in I}$ be a type of algebras, where each operation symbol $f_i$ has arity $n_i$, and let $A$ be an $\mathcal{F}$--algebra. A binary relation $\theta$ on $A$ is a {\em congruence on $A$} if it is an equivalence relation and has the compatibility property in the following sense:
\begin{itemize}
    \item For every $i\in I$, $(a_1,b_1),\ldots,(a_{n_i},b_{n_i})\in \theta$ implies $(f_i(a_1,\ldots,a_{n_i}),f_i(b_1,\ldots,b_{n_i}))
    \in \theta$.
\end{itemize}
Given such a congruence, the quotient $A/\theta$ has an $\mathcal{F}$-algebra structure (see e.g.~\cite[II.5]{bys} for more properties of congruences and quotient algebras).

\begin{proof}
Let $L\subseteq T\Sigma$ be recognized by a homomorphism $h:T\Sigma\to A$ through a subset $S\subseteq A$ as $L=\overleftarrow{h}(S)$, where $A$ is a finite $T$-algebra. Without loss of generality, we may assume that $h$ is a surjective homomorphism. Let $g:\Sigma\to\Gamma$ be a surjective function. Define the relation $\approx$ on $T\Gamma$ as
\[
{\approx}=\{(Tg(u),Tg(v))\mid u,v\in T\Sigma,\ h(u)=h(v)\}.
\]
Then $\approx$ is a congruence on $T\Gamma$, which implies that $T\Gamma/{\approx}$ is a $T$-algebra. Indeed, the relation $\approx$ is clearly reflexive, symmetric and has the compatibility property (the latter because $h$ is a homomorphism). To prove transitivity, let $p(x,y,z)$ be a Mal'cev term and assume $a\approx b$ and $b\approx c$, then
\[a=p(a,b,b)\approx p(b,b,c)=c\] since $\approx$
 is reflexive and has the compatibility property.

 Let $\tau:T\Gamma \to T\Gamma/{\approx}$ be the canonical quotient map. Define a function $e:A\to T\Gamma/{\approx}$ by:
\[
	e(h(u)) = \tau(Tg(u)) \qquad \text{for } u\in T\Sigma;
\] 
this is well defined since $h$ is surjective and by the definition of $\approx$ and $\tau$. Moreover, $e$ is surjective because both $Tg$ and $\tau$ are; as a result, the $T$-algebra $T\Gamma/{\approx}$ is finite. 

We show that the direct image of $L$ along $Tg$ is recognized by the homomorphism $\tau$ to $T\Gamma/{\approx}$ through the subset $\overrightarrow{e}(S)\subseteq T\Gamma/{\approx}$. In other words, we will show that
\[
	\overrightarrow{Tg}\left(\overleftarrow{h}(S)\right) = \overleftarrow{\tau}\left(\overrightarrow{e}(S)\right).
\]
To this end, for any $w\in T\Gamma$ calculate:
\begin{align*}
w\in \overrightarrow{Tg}\left(\overleftarrow{h}(S)\right) &\iff
\exists v\in T\Sigma.\ Tg(v)=w,\ h(v)\in S \\
& \iff \exists u\in T\Sigma.\ Tg(u)\approx w,\ h(u)\in S \\
& \iff \exists u\in T\Sigma.\ \tau(Tg(u))=\tau(w),\ h(u)\in S \\
& \iff \exists u\in T\Sigma.\ e(h(u))=\tau(w),\ h(u)\in S \\
& \iff \exists a\in S.\ e(a)=\tau(w) \iff w\in\overleftarrow{\tau}\left(\overrightarrow{e}(S)\right)
\end{align*}
which finishes the proof.
\end{proof}

\begin{rem}
The key step in the above proof is the general observation that in a congruence-permutable variety, every reflexive and symmetric relation with the compatibility property is a congruence~\cite[Proposition 3.8.]{Hutchinson}. Reflexive, symmetric relations with the compatibility property are called {\it tolerances}.
\end{rem}

In Section~\ref{sec:not-quite-malcev} we shall see that relaxing the Mal'cev condition even slightly does not guarantee recognizable languages to be preserved under direct images along surjective letter-to-letter homomorphisms.

\section{Weakly epi-cartesian monads}\label{sec:suffcond-jacobs}
In this section, we present a class of monads -- called weakly epi-cartesian monads --  where a powerset construction can be used to prove that recognizable languages are closed under direct images of letter-to-letter homomorphisms. This class includes the classical case of finite words (Monad~\ref{mon:list}).

We begin with an intuitive description of the powerset construction. Consider a monad $T$ and a language $L \subseteq T \Sigma$ which is recognized by a homomorphism
$h : T \Sigma \to \mathbf A$
for some $T$-algebra $\mathbf A=(A,\alpha)$. For  a surjective function on the alphabet $f : \Sigma \to \Gamma$, we want to show that the direct image of $L$  under the letter-to-letter homomorphism $
T f : T \Sigma \to T \Gamma$
is also recognizable.

A natural idea is to consider a powerset algebra, where the universe is the family of nonempty subsets of $A$, and the product operation  is defined using:
\begin{align*}
    t \in T \PP\!A \quad \mapsto \quad \st{ \alpha(s) : s \in TA \underbrace{\text{ can be obtained from $t$ by choosing an element from each set}}_{\text{ can be made precise, see~\eqref{eq:jl} in Definition~\ref{def:jl}} }}.
\end{align*}
There are two issues that need to be addressed here. First, one must check that the powerset algebra is indeed a $T$-algebra, i.e., that it satisfies the axioms of Definition~\ref{def:eilenberg-moore}. The second issue is finding a homomorphism into the powerset algebra that recognizes the direct image of $L$ under $Tf$. The natural idea is to consider the  function
\begin{align*}
t \in T\Gamma  \qquad \mapsto \qquad \st{h(s) : s \in T \Sigma\ \text{ satisfies }T f(s)=t };
\end{align*}
however, it is not immediately clear that this function is a homomorphism. In fact this does not happen in every monad, contrary to what was claimed in~\cite[Lemma 6.2]{boj1}. For example, in the monad of groups (Monad~\ref{mon:grp}) the powerset algebra does not satisfy, in general, the axiom $x\cdot x^{-1}=e$.

The goal of this section is to establish a condition on the monad $T$ which ensures that the above powerset construction works as expected.

\subsection{Definitions}\label{sec:wecmonad}

First, a few standard categorical definitions. A commuting diagram of sets and functions:
\begin{equation}\label{eq:wpb}
\vcenter{\xymatrix{
P\ar[r]^h\ar[d]_k & X\ar[d]^f \\
Y\ar[r]_g & Z
}}
\end{equation}
is called a {\em weak pullback} if, for each $x\in X$ and $y\in Y$ such that $f(x)=g(y)$, there is some $p\in P$ such that $h(p)=x$ and $k(p)=y$. If $p$ is unique for each such $x$ and $y$ then the square is a {\em pullback}. For every two functions $f$ and $g$ as above a canonical pullback exists and is defined by
\begin{equation}\label{eq:pb}
	P = \{(x,y) \mid f(x)=g(y)\},
\end{equation}
with $h=\pi_1$ and $k=\pi_2$, projections from $P$ into $X$ and $Y$.

A functor $T$ {\em preserves weak pullbacks} iff applying $T$ to everything in a weak pullback as in~\eqref{eq:wpb} results in a weak pullback again. A routine categorical argument shows that one may equivalently require only that (canonical) pullbacks are mapped to weak pullbacks; preservation of all other weak pullbacks follows. So, $T$ preserving weak pullbacks means that for all functions $f:X\to Z$ and $g:Y\to Z$, and for any $t\in TX$ and $r\in TY$ such that $Tf(t)=Tg(r)$, there exists some $p\in TP$ (for $P$ as in~\eqref{eq:pb}) such that $T\pi_1=t$ and $T\pi_2=r$.

Intuitively, this means that whenever two terms $t$ and $r$ are sufficiently similar to be equated by equating some variables, then there is some term $p$, built of compatible pairs of variables, that projects to $t$ and $r$. Many monads, such as the free monoid monad and the powerset monad and its variants, have this property. The following examples show how the property may fail.

\begin{ex}
The free group monad (Monad~\ref{mon:grp}) does not preserve weak pullbacks: the square
\[
\xymatrix{
\emptyset\ar[r]^h\ar[d]_k & \emptyset\ar[d]^f \\
\{a,b\}\ar[r]_g & \{c\}
}
\]
(with $f,g,h,k$ the unique functions of their type) is a pullback, and terms $t=1\in T\emptyset$ and $r=a\mydot b^{-1}\in T\{a,b\}$ are equated by $Tf$ and $Tg$, but there is no term $p\in T\emptyset$ that would be mapped to $r$ by $Tk$.
\qed
\end{ex}

\begin{ex}
Monad~\ref{mon:x3-x2}, i.e., the free monoid monad subject to the additional axiom $x\mydot x\mydot x = x\mydot x$, does not preserve weak pullbacks. Indeed, the square
\[
\xymatrix{
\{a,b\}\times\{c,d\}\ar[r]^-{\pi_1}\ar[d]_{\pi_2} &\{a,b\}\ar[d]^f \\
\{c,d\}\ar[r]_-g & \{e\}
}
\]
(with $f,g$ the unique functions of their type) is a pullback, and the terms $t=a\mydot b\in T\{a,b\}$ and $r=c\mydot d\mydot c\in T\{c,d\}$ are equated by $Tf$ and $Tg$, but there is no term $p\in T(\{a,b\}\times\{c,d\})$ that would project to $t$ and $r$.
\qed
\end{ex}

Given $\set$-functors $F$ and $G$, a natural transformation $\alpha:F\Rightarrow G$ is {\em weakly cartesian} if for every function $f:X\to Y$ the naturality square
\[\xymatrix{
FX\ar[r]^{\alpha_X}\ar[d]_{Ff} & GX\ar[d]^{Gf} \\
FY\ar[r]_{\alpha_Y} & GY
}\]
is a weak pullback. A weaker property is being {\em weakly epi-cartesian}, where only naturality squares for surjective $f$ are required to be weak pullbacks.

We can now formulate our main property of interest:

\begin{defi}\label{def:wcmon}
A monad $(T,\eta,\mu)$ is {\em weakly cartesian} if $T$ preserves weak pullbacks and $\eta$ and $\mu$ are weakly cartesian. It is {\em weakly epi-cartesian} if $T$ preserves weak pullbacks and $\eta$ and $\mu$ are weakly epi-cartesian.
\end{defi}

Weakly cartesian monads have been studied in the literature~\cite{weber04,chj14,fp18}. The notion of a weakly epi-cartesian monad seems to be new.

The requirement that the naturality square
\[\xymatrix{
	X\ar[r]^{\eta_X}\ar[d]_f & TX\ar[d]^{Tf} \\
	Y\ar[r]_{\eta_Y} & TY
}\]
is a weak pullback amounts to saying that whenever a unit term $\eta_Y(y)$ is obtained as a value of $Tf$ applied to some $t\in TX$, then $TX$ must itself be a unit term $\eta_X(x)$ for some $x$ such that $f(x)=y$. So weak (epi-)cartesianness of $\eta$ intuitively means that one cannot obtain a unit term by equating variables in a non-unit term. This holds for many monads, including the free monoid monad. Notably, the property fails for the powerset monad, where the non-unit term $\{a,b\}\in{\cal P}\{a,b\}$ is mapped to the unit $\{c\}\in{\cal P}\{c\}$ by ${\cal P}f$, for the unique function $f:\{a,b\}\to \{c\}$.

Let us now look at the multiplication $\mu$ being weakly (epi)-cartesian. To say that the naturality square
\[\xymatrix{
	TTX\ar[r]^{\mu_X}\ar[d]_{TTf} & TX\ar[d]^{Tf} \\
	TTY\ar[r]_{\mu_Y} & TY
}\]
is a weak pullback means that, for any term $t\in TX$, if a term $Tf(t)$ can be decomposed as ``term of terms'' $\theta\in TTY$ (so that $Tf(t)$ is the flattening of $\theta$), then the term $t$ itself has a similar decomposition. So, intuitively, weak (epi)-cartesianness of $\mu$ means that equating variables in a term does not introduce essentially new ways of decomposing the term.

\begin{thm}\label{prop:ewcm}
If $T$ is a weakly epi-cartesian monad, then $\rec_{T}$ is closed under direct images along surjective letter-to-letter homomorphisms.
\end{thm}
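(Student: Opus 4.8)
The plan is to carry out the powerset construction sketched at the start of Section~\ref{sec:suffcond-jacobs}, using weak epi-cartesianness exactly where the two foreseen obstacles arise. Fix a recognizable language $L=\overleftarrow{h}(S)\subseteq T\Sigma$, where $h:T\Sigma\to\mathbf{A}=(A,\alpha)$ is a homomorphism into a finite algebra and $S\subseteq A$; as in the Mal'cev proof I would assume $h$ surjective. Fix a surjective $f:\Sigma\to\Gamma$. First I would equip the set $\PP A$ of nonempty subsets of $A$ with a $T$-algebra structure $\beta:T\PP A\to\PP A$ by the rule suggested in the excerpt: for $t\in T\PP A$, collapse $t$ to $Tp(t)\in TA$ for every ``choice function'' $p$ that selects one element from each set appearing in $t$, apply $\alpha$, and collect the results. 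The content of the first step is to verify that $\beta$ satisfies the two algebra axioms of Definition~\ref{def:eilenberg-moore}. The unit axiom is immediate. For the associativity square, the forward inclusion is easy, and the reverse inclusion — that every choice made at the flattened level can be realized by a coherent choice at the two-layered level — is precisely where I would invoke preservation of weak pullbacks by $T$, which lets a compatible term be lifted along the relevant projections.

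\textbf{The recognizing homomorphism.} Next I would define the candidate recognizer
\[
	r:T\Gamma\to\PP A,\qquad r(t)=\st{\,h(s):s\in T\Sigma,\ Tf(s)=t\,}.
\]
Since $f$ is surjective, $Tf$ is surjective, so each fibre is nonempty and $r$ is well defined; moreover $r$ lands in the finite algebra $\PP A$. The crucial claim is that $r$ is a $T$-algebra homomorphism from the free algebra $(T\Gamma,\mu_\Gamma)$ to $(\PP A,\beta)$. Here I expect to use the weak epi-cartesianness of $\mu$ together with preservation of weak pullbacks. Concretely, to evaluate $r(\mu_\Gamma(\theta))$ for $\theta\in TT\Gamma$ I need to compare, on one side, the fibre of $Tf$ over the flattened term $\mu_\Gamma(\theta)$, and on the other side, the $\beta$-image of $Tr(\theta)$, which is built from fibres over each inner $\Gamma$-term. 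Weak epi-cartesianness of $\mu$ (applied to the surjection $f$) guarantees that any $s\in T\Sigma$ with $Tf(s)=\mu_\Gamma(\theta)$ admits a two-layer decomposition $\sigma\in TT\Sigma$ with $\mu_\Sigma(\sigma)=s$ and $TTf(\sigma)=\theta$; preservation of weak pullbacks by $T$ then lets me match up the inner choices fibre-by-fibre. This bidirectional matching gives $r\circ\mu_\Gamma=\beta\circ Tr$, which is the homomorphism condition.

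\textbf{Conclusion.} Finally I would check that $r$ together with the subset $\st{B\in\PP A: B\cap S\neq\emptyset}$ recognizes the direct image $\overrightarrow{Tf}(L)$. Unwinding the definitions, $t\in\overrightarrow{Tf}(L)$ iff some $s$ with $Tf(s)=t$ satisfies $h(s)\in S$, which by definition of $r$ is exactly the condition $r(t)\cap S\neq\emptyset$. Since $\PP A$ is finite, $\overrightarrow{Tf}(L)$ is recognizable, completing the argument.

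\textbf{Main obstacle.} The routine part is the Boolean bookkeeping and the well-definedness of $\beta$ and $r$. The genuine difficulty, and the place where the hypotheses do all the work, is proving that $r$ is a homomorphism: both the associativity axiom for $\beta$ and the homomorphism equation for $r$ require lifting a choice or a decomposition made after identifying letters back to one made before, and it is precisely the failure of this lifting that breaks the naive construction for the group monad. I expect the bulk of the real work to be a careful diagram chase threading preservation of weak pullbacks and the weak epi-cartesianness of $\mu$ through these two identities.
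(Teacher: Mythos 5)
Your overall strategy is the one the paper uses: equip $\PP A$ with the ``choice'' algebra structure (the paper formalizes your $\beta$ as $\PP\alpha\circ\lambda_A$ for the Jacobs law $\lambda$ of Definition~\ref{def:jl}), recognize the direct image by $t\mapsto\st{h(s): Tf(s)=t}$ with accepting set $\st{B : B\cap S\neq\emptyset}$. The paper factors the verification through Theorem~\ref{thm:distlaw} (that $\lambda$ is a distributive law of the monad over $\PP$) and then lifts $\PP$ to $\Alg(T)$, whereas you check the algebra and homomorphism axioms directly; that difference is cosmetic. There is, however, a genuine gap.

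You assert that the unit axiom for $\beta$ is immediate, and as a consequence your argument never invokes weak epi-cartesianness of $\eta$ at all. But the unit axiom is exactly where that hypothesis does its work. Only the inclusion $B\subseteq\beta(\eta_{\PP A}(B))$ is automatic (witnessed by $r=\eta_{\in_A}(b,B)$ for $b\in B$). The reverse inclusion requires that every $r\in T({\in_A})$ with $T\pi_2(r)=\eta_{\PP A}(B)$ be itself a unit term, which is precisely the statement that the naturality square of $\eta$ at the surjection $\pi_2:{\in_A}\to\PP A$ is a weak pullback --- i.e.\ axiom $(b)$ in the paper's Theorem~\ref{thm:distlaw}. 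The gap is not cosmetic: the powerset monad preserves weak pullbacks and has weakly cartesian multiplication, so it satisfies every hypothesis your proof actually uses, yet for it $\beta(\eta_{\PP A}(B))$ is the set of all joins of nonempty subsets of $B$, which strictly contains $B$ in general; $(\PP A,\beta)$ is then not a $T$-algebra and the construction collapses.

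A smaller misattribution in the same vein: in the associativity axiom for $\beta$, the direction ``every choice made at the flattened level lifts to a coherent two-layered choice'' asks you to decompose $r\in T({\in_A})$ over a given decomposition of $T\pi_2(r)$, which is weak epi-cartesianness of $\mu$ applied at the surjection $\pi_2:{\in_A}\to\PP A$, not preservation of weak pullbacks; the latter is what handles the opposite inclusion, and also the fibre-by-fibre matching in the homomorphism condition for your $r$ (the paper packages that step as monotonicity of $\lambda$, Proposition~\ref{prop:Jacmon}). Once you add the missing use of weak epi-cartesianness of $\eta$ and reassign the hypotheses to the inclusions they actually prove, your argument coincides with the paper's.
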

\begin{proof}
See Section~\ref{sec:theproof}.
\end{proof}

\subsection{Examples}

The three conditions of weak (epi-)cartesianness may fail in various configurations. For example, the powerset monad and its variants (Monads~\ref{mon:power},~\ref{mon:npower} and~\ref{mon:finpower}) preserve weak pullbacks and have weakly cartesian multiplication, but their unit is not weakly epi-cartesian. On the other hand, Monad~\ref{mon:x3-x2}
does not preserve weak pullbacks, its unit is weakly cartesian and its multiplication is not weakly epi-cartesian.
Further examples include:

\begin{monad}\label{mon:almost-malcev}
Consider the monad $(T,\eta,\mu)$ associated to the equational theory of a single ternary function symbol $p$ and a single unary symbol $s$, with axioms $p(x,x,y)=s(y)=p(y,x,x)$. Then $T$ does not preserve weak pullbacks, $\eta$ is weakly cartesian, and $\mu$ is weakly cartesian.

We shall study this monad in more detail in Section~\ref{sec:not-quite-malcev}.
\end{monad}

\begin{monad}\label{mon:ffe}
Consider the monad $(T,\eta,\mu)$ associated to the equational theory on the signature $\{f,e\}$, where $f$ is a unary function symbol and $e$ is a constant, with the only axiom $f(f(x))=e$. Then $T$ preserves weak pullbacks, $\eta$ is weakly cartesian, $\mu$ is weakly epi-cartesian, but $\mu$ is not weakly cartesian, cf. \cite[Example 3.4]{chj14}.

This example distinguishes between the notions of weakly cartesian and weakly epi-cartesian monads.
\end{monad}

However, many monads are weakly epi-cartesian, as the following examples show.

\medskip

A rich source of weakly cartesian monads (and therefore of weakly epi-cartesian monads) are equational theories whose axioms are given by regular linear equations. A term is called {\it linear} if no variable appears in it more than once. An equation is called {\it regular} if each variable appears on the left-hand side if and only if it appears on the right-hand side. An equation is {\it regular linear} if it is regular and both sides of it are linear. For instance, the equations $x\mydot (y\mydot z)=(x\mydot y)\mydot z$, $x\mydot e=x$, and $e\mydot x=x$, which axiomatize the variety of monoids, are regular linear. The commutativity axiom $x\mydot y=y\mydot x$ is also regular linear. On the other hand, the group axiom $x\mydot x^{-1}=e$ and the idempotence axiom $x\mydot x=x$ are not regular linear.

It is not difficult to check that a monad presented by a set of regular linear equations is weakly cartesian.
In fact, every such monad is {\em analytic}: in addition to being weakly cartesian, its functor weakly preserves {\em wide} pullbacks. This is a part of a more general framework, see~\cite{SzZ14}.

In particular, the monads of semigroups, commutative semigroups, monoids, commutative monoids and the variety of all algebras for any given signature, are all weakly cartesian, hence also weakly epi-cartesian.

\begin{ex}
$\rec_{T}$ is closed under direct images of surjective letter-to-letter homomorphisms for the monads of semigroups, commutative semigroups, monoids, commutative monoids and the variety of all algebras for any given signature.\qed
\end{ex}

The next four example monads are not finitary, so they do not have finitary equational presentations. Nevertheless, they are all weakly cartesian.

\begin{monad}[A monad for $\omega$-words]\label{mon:omega-words} Consider the following extension of the list monad to infinite words. The monad maps an alphabet $\Sigma$ to the set $\Sigma^+ \cup \Sigma^\omega$ of nonempty words of length at most $\omega$.  The unit is the same as in the list monad, while the multiplication operation is defined by
	\[
	\mu_X(w_1w_2\cdots) =\begin{cases}
		w_1w_2\cdots &\text{if all words $w_1,w_2,\ldots$ are finite}\\
		w_1w_2\cdots w_k &\text{if $w_k$ is the first infinite word among $w_1,w_2,\ldots$}\\
	\end{cases}
	\]
	Algebras for this monad are essentially the same as $\omega$-semigroups~\cite[Section II.4]{perrinInfiniteWordsAutomata2004}, which are known to recognize the same languages as B\"uchi automata on $\omega$-words.
	This monad is weakly  epi-cartesian, so by Theorem~\ref{prop:ewcm} the recognizable languages are closed under  direct images along surjective letter-to-letter homomorphisms, thus showing that all \msotext definable languages are recognized by $\omega$-semigroups.
	\end{monad}

\begin{monad}[Countable linear orders] Monad~\ref{mon:omega-words} can be generalised from $\omega$-words to other labelled linear orders, where the order type of the positions is not necessarily $\omega$. Consider the monad where $T\Sigma$ is the set of  countable linear orders labelled by $\Sigma$, up to isomorphism. The unit and multiplication are defined in the natural way (contrary to Monad~\ref{mon:omega-words}, there is no need to truncate).
	For example, $T \st{a,b}$ contains the following element: the linear order of the rational numbers, labelled  so that the positions with label $a$ are dense and the same is true for the positions with label $b$. Using a back-and-forth argument, one can show that element described above is unique up to isomorphism. Shelah showed that satisfiability is decidable for \msotext over this monad~\cite[Theorem 6.2]{shelah_monadic_1975}, while  Carton, Colcombet and Puppis~\cite[Proposition 3 and Theorem 3]{carton_regular_2011} showed that the recognizable languages are exactly the ones that are definable in \msotext. This monad is weakly epi-cartesian, and therefore Theorem~\ref{prop:ewcm} implies that recognizable languages are closed under direct images of surjective letter-to-letter homomorphisms, which in turn implies that \msotext definable languages are recognizable~\cite[Proposition 3]{carton_regular_2011}.
\end{monad}

In fact, as far as closure under direct images is concerned, there is nothing special about countable linear orders, as shown by the following monad.
\begin{monad}
	\label{mon:uncountable} One can also consider a variant of the previous monad, but for labelled linear orderings of cardinality at most continuum. This monad is also weakly epi-cartesian, and therefore  \msotext definable languages are recognizable. There is, however, a price to pay for considering uncountable orders: the satisfiability problem for \msotext is undecidable for this monad, as proved by Shelah~\cite[Theorem 7]{shelah_monadic_1975}.
\end{monad}

There are also interesting monads that lie between $\omega$-word and all countable linear orders.

\begin{monad}
	Consider the countable linear orders which are scattered, i.e.~do not contain any rational sub-ordering. This monad was studied implicitly by Carton and Rispal, where the Eilenberg-Moore algebras for the monad were called $\diamond$-algebras in~\cite[Definition 6]{rispalComplementationRationalSets2005}. This monad is weakly epi-cartesian, and therefore Theorem~\ref{prop:ewcm} implies that \msotext definable languages are necessarily recognizable (this was already known in~\cite{rispalComplementationRationalSets2005}).
\end{monad}

Some monads do not admit any presentation by regular linear equations, yet they still are weakly (epi-)cartesian. For example:

\begin{monad}[A left-idempotent operation]\label{mon:xxy-xy}
For a more unusual example of a weakly cartesian monad, consider a monad $T$ presented by an equational theory with a single binary symbol $\mydot$ and with a single (non-linear) equation:
\begin{equation}\label{eq:xxy-xy}
\vcenter{\xymatrix@=.6em{
& \mydot\ar@{-}[ld]\ar@{-}[rd] \\
x & & \mydot\ar@{-}[ld]\ar@{-}[rd] \\
& x & & y
}}
\quad=\quad
\vcenter{\xymatrix@=.6em{
& \mydot\ar@{-}[ld]\ar@{-}[rd] \\
x & & y
}}.
\end{equation}
To prove that $T$ preserves weak pullbacks, consider any pullback diagram
\[
\xymatrix{
P\ar[r]^{\pi_1}\ar[d]_{\pi_2} & X\ar[d]^f \\
Y\ar[r]_g & Z,
}
\]
where $P = \{(x,y)\mid f(x)=g(y)\}$ and $\pi_1$ and $\pi_2$ are the projections from $P$. Then, for any $r\in TX$, $s\in TY$ and $t\in TZ$ such that
\[
	Tf(r)=t=Tg(s),
\]
we need to find $p\in TP$ such that $T\pi_1(p)=r$ and $T\pi_2(p)=s$.

Formally, elements of $TZ$ are equivalence classes of binary trees with elements of $Z$ in leaves; each equivalence class can be represented by its unique smallest element: one where the pattern on the left-hand side of~\eqref{eq:xxy-xy} does not appear. We therefore only consider $r$, $s$ and $t$ of this form, and we proceed by induction on the size of $t$.

For the base case, if $t$ is a single letter (i.e. $t\in Z$) then obviously $r\in X$ and $s\in Y$, therefore we can put $p=(r,s)$.

For the inductive step, let
\[t\quad=\quad\vcenter{\xymatrix@=.6em{
& \mydot\ar@{-}[ld]\ar@{-}[rd] \\
t' & & z}}\]
for some $t',z\in TZ$. Since $Tf(r)=t$, it follows that
\[
r \quad=\quad \vcenter{\xymatrix@=.6em{
& \mydot\ar@{-}[ld]\ar@{-}[rd] \\
r_1 & & \mydot\ar@{-}[ld]\ar@{.}[rd] \\
& r_2 & & \mydot\ar@{-}[ld]\ar@{-}[rd] \\
& & r_n & & x
}}
\]
for some $r_1,\ldots,r_n,x\in TX$ such that $Tf(r_i)=t'$ and $Tf(x)=z$. Similarly,
\[
s \quad=\quad \vcenter{\xymatrix@=.6em{
& \mydot\ar@{-}[ld]\ar@{-}[rd] \\
s_1 & & \mydot\ar@{-}[ld]\ar@{.}[rd] \\
& s_2 & & \mydot\ar@{-}[ld]\ar@{-}[rd] \\
& & s_m & & y
}}
\]
for some $s_1,\ldots,s_m,y \in TY$ such that $Tg(s_i)=t'$ and $Tg(y)=z$.
Assume, without loss of generality, that $n\leq m$.

By the inductive assumption, for each $1\leq i\leq n$ and $1\leq j\leq m$ there is a $p_{i,j}\in TP$ such that $T\pi_1(p_{i,j})=r_i$ and $T\pi_2(p_{i,j})=s_j$. Also by the inductive assumption, there is some $q\in TP$ such that $T\pi_1(q)=x$ and $T\pi_2(q)=y$.

Now consider $p\in TP$ defined by:
\[
p \quad=\quad \vcenter{\xymatrix@=.6em{
& \mydot\ar@{-}[ld]\ar@{-}[rd] \\
p_{1,1} & & \mydot\ar@{-}[ld]\ar@{.}[rd] \\
& p_{2,2} & & \mydot\ar@{-}[ld]\ar@{-}[rd] \\
& & p_{n,n} & & \mydot\ar@{-}[ld]\ar@{.}[rd] \\
& & & p_{n,n+1} & & \mydot\ar@{-}[ld]\ar@{-}[rd] \\
& & & & p_{n,m} & & q.
}}
\]
It is easy to check that $T\pi_1(p)=r$ and $T\pi_2(p)=s$ as required. This completes the proof that $T$ preserves weak pullbacks.

It is easy to see that the unit of $T$ is cartesian, i.e., that the square
\[
\xymatrix{X\ar[r]^{\eta_X}\ar[d]_f & TX\ar[d]^{Tf} \\ Y\ar[r]_{\eta_Y} & TY}
\]
is a pullback for every $f:X\to Y$. Indeed, if $Tf(t)=\eta_Y(y)$ for $t\in TX$ and $y\in Y$ then there must be $t=\eta_X(x)$ for some (necessarily, unique) $x\in X$ such that $f(x)=y$.

It remains to be proved that the multiplication of $T$ is weakly cartesian, i.e., that
\[
\xymatrix{TTX\ar[r]^{\mu_X}\ar[d]_{TTf} & TX\ar[d]^{Tf} \\ TTY\ar[r]_{\mu_Y} & TY}
\]
is a weak pullback for every $f:X\to Y$. The proof is similar to the proof of the fact that $T$ preserves weak pullbacks.
For any $r\in TTY$, $s\in TX$ and $t\in TY$ such that
\[
	\mu_X(r)=t=Tf(s),
\]
we need to find $p\in TTX$ such that $TT(p)=r$ and $\mu_X(p)=s$, and the construction proceeds by induction on the size of $t$.
\end{monad}

\begin{monad}[A guarded-idempotent operation]
Consider the monad $T$ defined by the signature $\{\mydot, o\}$, where $\mydot$ is a binary operation symbol and $o$ is a unary operation symbol, and equations $x\mydot (y\mydot z)=(x\mydot y)\mydot z$ and $o(x\mydot x)=o(x)$.

To prove that $T$ preserves weak pullbacks, consider any pullback diagram
\[
\xymatrix{
P\ar[r]^{\pi_1}\ar[d]_{\pi_2} & X\ar[d]^f \\
Y\ar[r]_g & Z,
}
\]
where $P = \{(x,y)\mid f(x)=g(y)\}$ and $\pi_1$ and $\pi_2$ are the projections from $P$. Let $r\in TX$, $s\in TY$ and $t\in TZ$ such that $Tf(r)=t=Tg(s)$,
we need to find $p\in TP$ such that $T\pi_1(p)=r$ and $T\pi_2(p)=s$.

Elements of $TZ$ are equivalence classes of terms on $Z$; each equivalence class can be represented by its unique smallest element, up to associativity: one where the pattern $o(x\cdot x)$ does not appear. We therefore assume $r$, $s$ and $t$ of this form, and we proceed by induction on the size of $t$.

For the base case, if $t$ is a single letter (i.e. $t\in Z$) then obviously $r\in X$ and $s\in Y$, therefore we can put $p=(r,s)$.

For the inductive case, let $t\in TZ$. We have the following cases:
\begin{enumerate}[i)]
\item $t$ does not contain the function symbol $o$. In this case, $r$ and $s$ do not contain $o$ either. Therefore,
$t$ is of the form $t=z_1\cdot \cdots \cdot z_n$ which implies that $r$ and $s$ are of the form $r=x_1\cdot \cdots \cdot x_n$ and $s=y_1\cdot \cdots \cdot y_n$. Hence, we can take $p=(x_1,y_1)\cdot \cdots\cdot (x_n,y_n)$.
\item $t$ contains the function symbol $o$. Let $t=t_1\cdot o(t_2)\cdot t_3$ where $t_i\in TX$. Then, $r$ and $s$ are of the form $r=r_1\cdot o(r_2)\cdot r_3$ and $s=s_1\cdot o(s_2)\cdot s_3$, $r_i\in TX$ and $s_i\in TY$, with $Tf(r_1)=Tg(s_1)$, $Tf(o(r_2))=Tg(o(s_2))$ and $Tf(r_3)=Tg(s_3)$. By the induction hypothesis, there exists $p_1, p_3\in TP$ such that:
\[
	T\pi_1(p_1)=r_1,\quad T\pi_2(p_1)=s_1,\quad T\pi_1(p_3)=r_3,\text{ \ and \ \ }T\pi_2(p_3)=s_3,
\]
Now, since $Tf(o(r_2))=Tg(o(s_2))$ we have the following cases:
\begin{enumerate}[a)]
	\item $Tf(r_2)=Tg(s_2)$. Then, by the induction hypothesis, there exists $p_2\in TP$ such that $T\pi_1(p_2)=r_2$ and $T\pi_2(p_2)=s_2$.
	\item $Tf(r_2)=Tg((s_2)^{2^n})$ for some $n\geq 0$. Then, by the induction hypothesis, there exists $p_2\in TP$ such that $T\pi_1(p_2)=r_2$ and
		$T\pi_2(p_2)=(s_2)^{2^n}$.
	\item $Tf((r_2)^{2^n})=Tg(s_2)$ for some $n\geq 0$. Then, by the induction hypothesis, there exists $p_2\in TP$ such that
		$T\pi_1(p_2)=(r_2)^{2^n}$ and $T\pi_2(p_2)=s_2$.
\end{enumerate}
In each case, put $p=p_1\cdot o(p_2)\cdot p_3$.
\end{enumerate}
It is easy to check that, in each case, $T\pi_1(p)=r$ and $T\pi_2(p)=s$ as required. This completes the proof that $T$ preserves weak pullbacks.

To see that the unit of $T$ is cartesian, consider a function $f:X\to Y$, $t\in TX$ and $y\in Y$ such that $Tf(t)=\eta_Y(y)$. Then there must be $t=\eta_X(x)$ for some (necessarily, unique) $x\in X$ such that $f(x)=y$.

Finally, the proof that the multiplication $\mu$ is weakly cartesian is done by using a similar argument as the one that $T$ preserves weak pullbacks.
\end{monad}

\begin{monad}[Weakly epi--cartesian monad that is not weakly cartesian]
Consider the monad $T$ presented by an equational theory with two unary operation symbols $f$ and $g$, and a constant symbol $e$ subject to the equations:
\[
	f(f(x))=g(g(y))=e
\]
Note that $TX$ is infinite for every non-empty $X$ ($\{fg(x),fgfg(x),fgfgfg(x),\ldots\}\subseteq TX$ for $x\in X$). Also, the multiplication $\mu$ is not weakly cartesian since for the inclusion $\iota:\{x\}\to \{x,y\}$ we have that $T\iota(e)=\mu_{\{x,y\}}\Big(f\big(f(y)\big)\Big)$ but $f\big(f(y)\big)$ is not in the image of $TT\iota$. Nevertheless, $T$ is a weakly epi-cartesian monad, so $\rec_{T}$ is closed under direct images along surjective letter-to-letter homomorphisms.

\end{monad}

\subsection{Proof of Theorem~\ref{prop:ewcm}}\label{sec:theproof}

We start by recalling the standard concept of a distributive law between monads~\cite{beck}. Distributive laws are a standard tool for composing monads; we will use them to lift the powerset functor to the category of algebras for a monad.

\begin{defi}\label{def:dl}
Let $(T,\eta^T,\mu^T)$ and $(S,\eta^S,\mu^S)$ be monads. A {\em distributive law} of $(T,\eta^T,\mu^T)$ over $(S,\eta^S,\mu^S)$ is a natural transformation $\lambda: TS\Rightarrow ST$ that satisfies the following axioms:
\[\xymatrix{
T\ar[r]^{T\eta^S}\ar@/_3ex/[rd]_{\eta^ST}\ar@{}[rd]|-{(a)} & TS\ar[d]_{\lambda} & S\ar[l]_{\eta^TS}\ar@/^3ex/[ld]^{S\eta^T}\ar@{}[ld]|-{(b)} \\
& ST
}
\qquad
\xymatrix{
TTS\ar[rr]^{\mu^TS}\ar[d]_{T\lambda}\ar@{}[rrd]|-{(d)} & & TS\ar[d]^{\lambda} & & TSS\ar[ll]_{T\mu^S}\ar[d]^{\lambda S}\ar@{}[lld]|-{(c)} \\
TST\ar[r]_{\lambda T} & STT\ar[r]_{S\mu^T} & ST & SST\ar[l]^{\mu^ST} & STS\ar[l]^{S\lambda}
}\]
Sometimes one is interested in laws that satisfy only some of these axioms. In particular, a distributive law of the functor $T$ over the monad $(S,\eta^S,\mu^S)$ only satisfies the axioms $(a)$ and $(c)$. We will be interested in distributive laws of monads over the powerset or the nonempty powerset, i.e., $S={\cal P}$ or $S=\PP$.
\end{defi}

\begin{defi}[The Jacobs Law]\label{def:jl}
Let $T$ be a functor. For every set $X$ define a function $\JL_X: T\P X\to \P TX$ by:
\begin{equation}\label{eq:jl}
	\JL_X(t)=\{s \in TX \mid \exists r\in T({\in_X}) \text{ s.t. } T\pi_1(r)=s,\ T\pi_2(r)=t\},
\end{equation}
where ${\in_X}$ is the membership relation on $X$:
\[
	{\in_X} = \{(x,S) \mid x\in S\subseteq X\} %\subseteq X\times \PP X
\]
and $\pi_1:{\in_X}\to X$ and $\pi_2:{\in_X}\to \PP X$ are the canonical projections.
\end{defi}

It is easy to see that if $t\in T\PP X$ then $\lambda_X(t)$ is not empty. Indeed, pick any function $k:\PP X\to X$ such that $k(S)\in S$ for all $S\in\PP X$. Then
\[
	\langle k,\id\rangle : \PP X \to {\in_X}
\]
and the element $r=T\langle k,\id\rangle(t)\in T({\in_X})$ witnesses the fact that $\JL_X(t)$ is not empty. As  a result,
$\JL_X$ restricts to a function $\JL_X: T\PP X\to \PP TX$.

The following result is due to Bart Jacobs~\cite[Section 4]{jacobsTSC}.

\begin{prop}\label{prop:jacobs}
If a functor $T$ preserves weak pullbacks, then $\JL$ as in Def.~\ref{def:jl} is a natural transformation and it is a distributive law of the functor $T$ over the powerset monad $\P$ (and over the non-empty powerset monad $\PP$).
\qed
\end{prop}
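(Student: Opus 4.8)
The plan is to read $\JL_X$ as the \emph{relation lifting} of the membership relation ${\in_X}$: by definition $s\in\JL_X(t)$ exactly when there is $r\in T({\in_X})$ with $T\pi_1(r)=s$ and $T\pi_2(r)=t$, so $\JL_X(t)$ is the image under $T\pi_1$ of the fibre $(T\pi_2)^{-1}(t)$. Naturality and the two distributive-law axioms required for a law of the \emph{functor} $T$ over the \emph{monad} $\P$ -- axiom $(a)$ (compatibility with the singleton $\eta^{\P}$) and axiom $(c)$ (compatibility with union $\mu^{\P}$) -- will all be obtained by transporting elementary set-level identities among ${\in_X}$, $\eta^{\P}$ and $\mu^{\P}$ through the functor $T$. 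The single structural input is that $T$ preserves weak pullbacks: every comparison square I use below is an honest pullback of sets, which $T$ therefore turns into a weak pullback, and weak-pullback preservation is exactly the tool that lets me \emph{lift} a compatible pair lying over a cospan to a single element of $T$ of the apex. This lifting is what powers every nontrivial inclusion.

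Naturality and axiom $(a)$ are the routine parts. For naturality along $f:X\to Y$, the map ${\in_f}:{\in_X}\to{\in_Y}$, $(x,S)\mapsto(f(x),\overrightarrow{f}(S))$, fits into a square with the projections $\pi_2$ downwards and ${\in_f},\P f$ across; one checks directly that this square is a weak pullback, so $T$ of it is too, and chasing a single witness each way yields $\overrightarrow{Tf}\bigl(\JL_X(t)\bigr)=\JL_Y\bigl(T\P f(t)\bigr)$, the only non-formal inclusion being the one supplied by the lifting property. For axiom $(a)$, the key observation is that the fibre of $\pi_2$ over a singleton $\{x\}$ is the \emph{single} point $(x,\{x\})$; hence the section $\sigma:X\to{\in_X}$, $x\mapsto(x,\{x\})$, exhibits $X$ as a genuine pullback of $\pi_2$ along $\eta^{\P}$. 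Applying $T$, weak-pullback preservation forces the fibre $(T\pi_2)^{-1}(T\eta^{\P}(t))$ to be exactly $\{T\sigma(t)\}$, whence $\JL_X(T\eta^{\P}(t))=\{T\pi_1(T\sigma(t))\}=\{t\}=\eta^{\P}_{TX}(t)$.

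The real work is axiom $(c)$: for $w\in T\P\P X$ one must show $\JL_X(T\mu^{\P}(w))=\bigcup_{u\in\JL_{\P X}(w)}\JL_X(u)$. I would introduce two auxiliary sets, $C=\{(x,S,\Phi)\mid x\in S\in\Phi\}$ with projections $p:C\to{\in_X}$ and $p':C\to{\in_{\P X}}$, and $D=\{(x,\Phi)\mid x\in\bigcup\Phi\}$ with projections $d_1:D\to X$ and $d_2:D\to\P\P X$, presented by two honest pullbacks: $C$ is the pullback of $\pi_2:{\in_X}\to\P X$ against the first projection $\rho_1:{\in_{\P X}}\to\P X$, while $\alpha:(x,\Phi)\mapsto(x,\bigcup\Phi)$ exhibits $D$ as the pullback of $\pi_2$ along $\mu^{\P}$; there is moreover a surjective forgetful map $j:C\to D$, $(x,S,\Phi)\mapsto(x,\Phi)$. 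For the inclusion $\supseteq$, witnesses $r$ for $s\in\JL_X(u)$ and $q$ for $u\in\JL_{\P X}(w)$ agree over $\P X$, so preservation of the pullback $C$ produces $c\in TC$ lying over both; pushing $c$ forward along $(x,S,\Phi)\mapsto(x,\bigcup\Phi)$ gives a witness for $s\in\JL_X(T\mu^{\P}(w))$. For the inclusion $\subseteq$, a witness $r'$ for $s\in\JL_X(T\mu^{\P}(w))$ and $w$ agree over $\mu^{\P}$, so preservation of the pullback $D$ yields $\delta\in TD$; choosing a set-theoretic section $\beta:D\to C$ of $j$ and setting $c=T\beta(\delta)$ then splits $c$, via $p$ and $p'$, into witnesses for $s\in\JL_X(u)$ and $u\in\JL_{\P X}(w)$ with $u$ the common value in $T\P X$. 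This is the step I expect to be the main obstacle: it requires choosing $C$, $D$ and all the comparison maps so that the bookkeeping equalities hold on the nose (for instance $\pi_1\circ\alpha=d_1$, $\pi_2\circ p=\rho_1\circ p'$ and $\rho_2\circ p'=d_2\circ j$), and it requires isolating the single appeal to the axiom of choice (the section $\beta$) to the $\subseteq$ direction, where alone it is needed.

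For the non-empty powerset $\PP$ the identical argument applies, now with ${\in_X}\subseteq X\times\PP X$; non-emptiness of every set in sight -- in particular of $\JL_X(t)$, noted before the statement, and of the unions $\bigcup\Phi$ of non-empty families of non-empty sets -- keeps all constructions inside $\PP$. Conceptually, all of the above is the single statement that for a weak-pullback-preserving $T$ the relation-lifting operation preserves relational composition and reverses and sends the graph of a function $f$ to the graph of $Tf$; axioms $(a)$ and $(c)$ are then literally the $T$-liftings of the set-level identities ${\in_X};\mathrm{gr}(\eta^{\P})^{-1}=\Delta_X$ and ${\in_X};{\in_{\P X}}={\in_X};\mathrm{gr}(\mu^{\P})^{-1}$.
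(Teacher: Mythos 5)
Your proposal is correct: the paper itself does not prove Proposition~\ref{prop:jacobs} (it defers to Jacobs), but your relation-lifting argument --- honest pullbacks for the $\eta^{\P}$- and $\mu^{\P}$-squares, weak-pullback preservation to lift compatible witnesses, and a choice-of-section map for the $\subseteq$ direction of axiom $(c)$ --- is exactly the style of argument the paper deploys for the harder monad-level axioms in Theorem~\ref{thm:distlaw} (compare your $\beta$ with the paper's $\rho$). The only quibble is that your preamble claims every comparison square is an honest pullback, whereas the naturality square built from ${\in_f}$ is only a weak pullback when $f$ is non-injective; your own treatment of that square in the body correctly calls it a weak pullback, so nothing breaks.
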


For any sets $X$ and $Y$, there is a canonical partial order on the set of functions from $X$ to $\P Y$: we say that $f\leq g:X\to \P Y$ iff $f(x)\subseteq g(x)$ for all $x\in X$.

The following property of $\lambda$ will be useful later:

\begin{prop}\label{prop:Jacmon}
If $T$ preserves weak pullbacks then $\JL$ as in Def.~\ref{def:jl} is {\em monotone}: whenever $f\leq g:X\to \P Y$ then
\[
	\lambda_X\circ Tf \leq \lambda_X\circ Tg : TX\to \P TY.
\]
\end{prop}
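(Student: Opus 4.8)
The plan is to prove the inclusion pointwise: fix $t\in TX$ and show that $\lambda_Y(Tf(t))\subseteq\lambda_Y(Tg(t))$. (Note that for $f:X\to\P Y$ the relevant component of the Jacobs law is $\lambda_Y:T\P Y\to\P TY$, so I read the statement with this subscript.) Unfolding Definition~\ref{def:jl}, an element $s\in TY$ lies in $\lambda_Y(Tf(t))$ precisely when there is a witness $r\in T({\in_Y})$ with $T\pi_1(r)=s$ and $T\pi_2(r)=Tf(t)$. The whole task is to manufacture, from such an $r$, a new witness $r'\in T({\in_Y})$ with $T\pi_1(r')=s$ and $T\pi_2(r')=Tg(t)$.

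The key step is to pull $r$ and $t$ back to a single element sitting over the \emph{membership graph} of $f$. Write $E_f=\{(y,x)\in Y\times X\mid y\in f(x)\}$, with its $Y$-projection $(y,x)\mapsto y$, its $X$-projection $q_X^f:(y,x)\mapsto x$, and the map $\sigma:E_f\to{\in_Y}$, $(y,x)\mapsto(y,f(x))$. The square formed by $\sigma$ and $q_X^f$ is, up to the evident isomorphism, the canonical pullback of $\pi_2:{\in_Y}\to\P Y$ along $f:X\to\P Y$. Since $T$ preserves weak pullbacks, and since $T\pi_2(r)=Tf(t)$, there is some $\rho\in TE_f$ with $T\sigma(\rho)=r$ and $Tq_X^f(\rho)=t$. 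This is the only place where weak-pullback preservation is used, and identifying the correct square is the main obstacle.

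Now I invoke $f\leq g$. Since $f(x)\subseteq g(x)$ for every $x$, we have $E_f\subseteq E_g$, giving an inclusion $\iota:E_f\hookrightarrow E_g$. Set $\rho'=T\iota(\rho)\in TE_g$ and $r'=T\theta(\rho')\in T({\in_Y})$, where $\theta:E_g\to{\in_Y}$ sends $(y,x)\mapsto(y,g(x))$. The two required identities then follow by functoriality from three elementary facts about projections: first, $q_X^g\circ\iota=q_X^f$ (where $q_X^g$ is the $X$-projection of $E_g$), so $Tq_X^g(\rho')=t$; second, $\pi_2\circ\theta=g\circ q_X^g$, whence $T\pi_2(r')=Tg\bigl(Tq_X^g(\rho')\bigr)=Tg(t)$; and third, both $\pi_1\circ\theta\circ\iota$ and $\pi_1\circ\sigma$ equal the $Y$-projection $(y,x)\mapsto y$, whence $T\pi_1(r')=T\pi_1(r)=s$. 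Thus $r'$ witnesses $s\in\lambda_Y(Tg(t))$, which finishes the argument.

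Conceptually this is just the monotonicity of the relation lifting $\overline{T}$ under the relation inclusion $E_f\subseteq E_g$; the argument above simply re-derives it from the explicit definition of $\lambda$ and weak-pullback preservation. Apart from the weak-pullback step, the remaining work is the routine bookkeeping verifying the two projection identities above, which I expect to be entirely mechanical.
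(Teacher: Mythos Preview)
Your proof is correct and follows essentially the same approach as the paper: both form the pullback $Q=E_f$ of $\pi_2:{\in_Y}\to\P Y$ along $f$, use weak-pullback preservation to lift $(r,t)$ to some $\rho\in TE_f$, and then push forward via the map $(y,x)\mapsto(y,g(x))$ (your $\theta\circ\iota$ is exactly the paper's $\id\times g$, well-defined on $E_f$ since $f\leq g$). You also correctly note that the subscript in the statement should read $\lambda_Y$ rather than $\lambda_X$.
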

\begin{proof}
Assume $f\leq g$, and pick some $t\in TX$ and $s\in \lambda_Y(Tf(t))$. This means that there is some $r\in T{\in_Y}$ such that
\[
	T\pi_1(r)=s \qquad\text{and}\qquad T\pi_2(r)=Tf(t).
\]
Consider the pullback
\[\xymatrix{
Q\ar[r]^{\pi_2}\ar[d]_{\id\times f} & X\ar[d]^f \\
\in_Y\ar[r]_{\pi_2} & \P Y
}\]
where
\[
	Q = \{(y,x)\mid x\in X,\ y\in f(x)\}.
\]
Since $T$ preserves weak pullbacks, the square
\[\xymatrix{
TQ\ar[r]^{T\pi_2}\ar[d]_{T(\id\times f)} & TX\ar[d]^{Tf} \\
T{\in_Y}\ar[r]_{T\pi_2} & T\P Y
}\]
is a weak pullback. By the properties of $r$, and by the definition of a weak pullback, there is some $q\in TQ$ such that
\[
	T(\id\times f)(q) = r \quad (\text{hence } T\pi_1(q)=T\pi_1(r)) \qquad\text{and}\qquad T\pi_2(q)=t.
\]
Since $f\leq g$, the function $\id\times g:Q\to {\in_Y}$ is well defined. Put $u=T(\id\times g)(q)\in T{\in_Y}$ and calculate:
\[
	T\pi_1(u) = T\pi_1(q) = T\pi_1(r) = s
\]
and
\[
	T\pi_2(u) = Tg(T\pi_2(q)) = Tg(t),
\]
hence $u$ is a witness to the fact that $s\in\lambda_Y(Tg(t))$ as required.
\end{proof}

We now extend Jacobs's Proposition~\ref{prop:jacobs} from functors to monads (cf.~Definition~\ref{def:wcmon}).

\begin{thm}\label{thm:distlaw}
If a monad $(T,\eta,\mu)$ is weakly cartesian, then $\JL$ as in Def.~\ref{def:jl} is a distributive law of the monad over the powerset monad $\P$. If it is weakly epi-cartesian, then $\JL$ is a distributive law of the monad over the non-empty powerset monad $\PP$.
\end{thm}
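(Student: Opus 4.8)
The plan is to verify the four axioms of Definition~\ref{def:dl} for $\JL : T\P \Rightarrow \P T$ (and, in the epi-cartesian case, for $\JL : T\PP \Rightarrow \PP T$), where $\eta,\mu$ denote the unit and multiplication of $T$. Since a weakly (epi-)cartesian monad has $T$ preserving weak pullbacks, Proposition~\ref{prop:jacobs} already supplies the naturality of $\JL$ together with axioms $(a)$ and $(c)$, which only concern the powerset structure. It therefore remains to establish the two axioms involving $\eta$ and $\mu$, namely $(b)$: $\JL_X \circ \eta_{\P X} = \P\eta_X$, and $(d)$: $\JL_X \circ \mu_{\P X} = \P\mu_X \circ \JL_{TX} \circ T\JL_X$.

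For axiom $(b)$ I would unfold both sides on a subset $U \in \P X$. Evaluating $\JL_X(\eta_{\P X}(U))$ asks for a witness $r \in T({\in_X})$ with $T\pi_2(r) = \eta_{\P X}(U)$. Weak cartesianness of $\eta$ makes the naturality square of $\eta$ along $\pi_2 : {\in_X} \to \P X$ a weak pullback, forcing every such $r$ to be a unit $\eta_{{\in_X}}(x,U)$ with $x \in U$; naturality of $\eta$ along $\pi_1$ then gives $T\pi_1(r) = \eta_X(x)$. Hence $\JL_X(\eta_{\P X}(U)) = \{\eta_X(x) \mid x \in U\} = \P\eta_X(U)$, which is exactly $(b)$. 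In the nonempty case it suffices that $\pi_2 : {\in_X} \to \PP X$ is surjective, so weak epi-cartesianness of $\eta$ is enough; the full powerset genuinely needs weak cartesianness because $\pi_2$ misses $\emptyset$.

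Axiom $(d)$ is the main obstacle, and I would prove the two inclusions between $\JL_X(\mu_{\P X}(\Theta))$ and $\P\mu_X(\JL_{TX}(T\JL_X(\Theta)))$ separately, for a fixed $\Theta \in TT\P X$. The decisive gadget is the map $\xi : T({\in_X}) \to {\in_{TX}}$ sending $\rho'$ to $(T\pi_1(\rho'),\,\JL_X(T\pi_2(\rho')))$; it is well defined precisely because $\rho'$ itself witnesses $T\pi_1(\rho') \in \JL_X(T\pi_2(\rho'))$, and it satisfies $\pi_1 \circ \xi = T\pi_1$ and $\pi_2 \circ \xi = \JL_X \circ T\pi_2$. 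For ``$\subseteq$'', given a witness $r \in T({\in_X})$ for an element $s$ of the left-hand side, I would use weak cartesianness of $\mu$ (the naturality square of $\mu$ along $\pi_2$ is a weak pullback) to lift $r$ to some $R \in TT({\in_X})$ with $\mu_{{\in_X}}(R) = r$ and $TT\pi_2(R) = \Theta$; then $\rho := T\xi(R)$ is the required witness on the right, since the two identities for $\xi$ and naturality of $\mu$ yield $T\pi_1(\rho) = TT\pi_1(R)$, $T\pi_2(\rho) = T\JL_X(\Theta)$, and $\mu_X(TT\pi_1(R)) = T\pi_1(r) = s$.

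For the reverse inclusion, starting from a witness $\rho \in T({\in_{TX}})$ with $T\pi_2(\rho) = T\JL_X(\Theta)$, I would form the pullback $Q$ of $\JL_X : T\P X \to \P TX$ along $\pi_2 : {\in_{TX}} \to \P TX$ and invoke preservation of weak pullbacks by $T$ to obtain $\Omega \in TQ$ projecting to $\Theta$ and to $\rho$. Choosing, for each point of $Q$, a $\JL_X$-witness produces a map $\chi : Q \to T({\in_X})$, and setting $r := \mu_{{\in_X}}(T\chi(\Omega))$ gives the needed witness on the left; naturality of $\mu$ along $\pi_1$ and $\pi_2$ then verifies $T\pi_1(r) = \mu_X(T\pi_1(\rho))$ and $T\pi_2(r) = \mu_{\P X}(\Theta)$. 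The nonempty case differs only in that the relevant projections onto $\PP$ are surjective, so the weakly epi-cartesian hypotheses on $\eta$ and $\mu$ suffice and $\JL$ corestricts to $\PP T$. The hard part is the bookkeeping of axiom $(d)$: getting $\xi$ right and matching the two weak-pullback liftings -- one from weak cartesianness of $\mu$, one from $T$ preserving weak pullbacks -- so that all the naturality computations align.
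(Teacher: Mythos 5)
Your proposal is correct and follows essentially the same route as the paper's proof: axioms $(a)$, $(c)$ and naturality are delegated to Proposition~\ref{prop:jacobs}, axiom $(b)$ is handled by the weak pullback property of the $\eta$-naturality square along $\pi_2$, and axiom $(d)$ is split into the same two inclusions, with your map $\xi$ being exactly the paper's $\kappa$, your lifting $R$ its $w$, and your choice function $\chi$ on the pullback $Q$ its selection $\rho$. The treatment of the nonempty-powerset case via surjectivity of $\pi_2:{\in_X}\to\PP X$ also coincides with the paper's.
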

\begin{proof}
The proofs of both statements are almost identical, so we shall present only the proof of the first statement (the one for $\P$), and mark with $(\star)$ the two places where replacing $\P$ with $\PP$ makes a difference.

Since weakly (epi)-cartesian monads preserve weak pullbacks by definition, the naturality of $\lambda$ and axioms $(a)$ and $(c)$ in Definition~\ref{def:dl} follow from Proposition~\ref{prop:jacobs}.
What remains to be proved is that $\lambda$ satisfies axioms $(b)$ and $(d)$ in Definition~\ref{def:dl}.

For the axiom $(b)$, we need to prove that
\begin{equation}\label{eq:axiomb}
	\lambda_X(\eta_{\P X}(Z)) = \P\eta_X(Z) \qquad \text{for } Z\subseteq X.
\end{equation}
For the left-hand side, unfold from~\eqref{eq:jl}:
\begin{equation}\label{eq:Blhs}
	s\in\lambda_X(\eta_{\P X}(Z)) \iff \exists r\in T({\in_X})\text{ s.t. } T\pi_1(r)=s,\ T\pi_2(r)=\eta_{\P X}(Z).
\end{equation}
For the right-hand side:
\begin{equation}\label{eq:Brhs}
	s\in\P\eta_X(Z) \iff \exists z\in Z\text{ s.t. } s=\eta_X(z).
\end{equation}
For the right-to-left inclusion of~\eqref{eq:axiomb}, given $z$ as in~\eqref{eq:Brhs}, it is enough to put $r=\eta_{\in_X}(z,Z)$ and use naturality of $\eta$.

For the left-to-right inclusion of~\eqref{eq:axiomb}, assume $r$ as in~\eqref{eq:Blhs}. Since $T$ is weakly cartesian, the square
\[\xymatrix{
\in_X\ar[r]^{\pi_2}\ar[d]_{\eta_{\in_X}} & \P X\ar[d]^{\eta_{\P X}} \\
T{\in_X}\ar[r]_{T\pi_2} & T\P X
}\]
is a weak pullback. $(\star)$ If $\P$ is replaced by $\PP$, it is enough to assume that $T$ is weakly epi-cartesian. Indeed, the function $\pi_2:{\in_X}\to \PP X$ is surjective, so weak-epi-cartesianness is enough to conclude that the square is a weak pullback.

By the properties of $r$ stated in~\eqref{eq:Blhs}, and by the definition of weak pullback, there is some $z\in Z$ such that $\eta_{\in_X}(z,Z)=r$. By naturality of $\eta$ on $\pi_1:{\in_X}\to X$, we obtain
\[
	\eta_X(z) = \eta_X(\pi_1(z,Z)) = T\pi_1(\eta_{\in_X}(z,Z)) = T\pi_1(r) = s
\]
as required. This completes the proof of~\eqref{eq:axiomb}.

For the axiom (d), we need to prove that
\begin{equation}\label{eq:axiomd}
	\lambda_X(\mu_{\P X}(t)) = \P\mu_X(\lambda_{TX}(T\lambda_X(t))) \qquad \text{for } t\in TT\P X.
\end{equation}
For the left-hand side, unfold from~\eqref{eq:jl}:
\begin{equation}\label{eq:Dlhs}
	s\in\lambda_X(\mu_{\P X}(t)) \iff \exists r\in T({\in_X})\text{ s.t. } T\pi_1(r)=s,\ T\pi_2(r)=\mu_{\P X}(t).
\end{equation}
For the right-hand side:
\begin{align}\label{eq:Drhs}
s \in \P\mu_X(\lambda_{TX}(T\lambda_X(t))) &\iff \exists u\in\lambda_{TX}(T\lambda_X(t))\text{ s.t. }\mu_X(u)=s \nonumber \\
&\iff \exists v\in T({\in_{TX}})\text{ s.t. } \mu_X(T\pi_1(v))=s,\ T\pi_2(v) = T\lambda_X(t).
\end{align}

For the left-to-right inclusion of~\eqref{eq:axiomd}, assume $r$ as in~\eqref{eq:Dlhs}. Since $T$ is weakly cartesian, the square
\[\xymatrix{
TT{\in_X}\ar[r]^{TT\pi_2}\ar[d]_{\mu_{\in_X}} & TT\P X\ar[d]^{\mu_{\P X}} \\
T{\in_X}\ar[r]_{T\pi_2} & T\P X
}\]
is a weak pullback. $(\star)$ As before, if $\P$ is replaced by $\PP$, it is enough to assume that $T$ is weakly epi-cartesian since the function $\pi_2:{\in_X}\to \PP X$ is surjective.

By the properties of $r$ stated in~\eqref{eq:Dlhs}, and by the definition of weak pullback, there is some $w\in TT{\in_X}$ such that
\[
	\mu_{\in_X}(w)=r \qquad\text{and} \qquad TT\pi_2(w)=t.
\]

Consider a function $\kappa:T{\in_X}\to {\in_{TX}}$ defined by:
\[
	\kappa(u) = (T\pi_1(u),\lambda_X(T\pi_2(u))).
\]
This is well defined: indeed, $u\in T{\in_X}$ itself witnesses that $T\pi_1(u)\in\lambda_X(T\pi_2(u))$.

Put $v =  T\kappa(w) \in T(\in_{TX})$. Then, using naturality of $\mu$, calculate:
\[
	\mu_X(T\pi_1(v)) = \mu_X(T\pi_1(T\kappa(w))) = \mu_X(TT\pi_1(w)) = T\pi_1(\mu_{\in_X}(w)) = T\pi_1(r)=s
\]
and
\[
	T\pi_2(v) = T\pi_2(T\kappa(w)) = T\lambda_X(TT\pi_2(w)) = T\lambda_X(t)
\]
as required in~\eqref{eq:Drhs}.

For the right-to-left inclusion of~\eqref{eq:axiomd}, assume $v$ as in~\eqref{eq:Drhs}. Consider the pullback
\[\xymatrix{
Q\ar[r]^{\pi_2}\ar[d]_{\id\times\lambda_X} & TPX\ar[d]^{\lambda_X} \\
\in_{TX}\ar[r]_{\pi_2} & PTX
}\]
where
\begin{equation}\label{eq:Qdef}
	Q = \{(p,z)\mid p\in TX,\ z\in TPX,\ p\in\lambda_X(z)\}.
\end{equation}
Unfolding the definition of $\lambda_X$ in this, we obtain:
\[
	Q = \{(p,z) \mid p\in TX,\ z\in TPX,\ \exists m\in T{\in_X} \text{ s.t. } T\pi_1(m)=p,\ T\pi_2(m)=z\}.
\]
This means that we can select a function, call it $\rho: Q\to T{\in_X}$, such that
\[
	T\pi_1(\rho(p,z)) = p \qquad\text{and}\qquad T\pi_2(\rho(p,z)) = z \qquad \text{for } (p,z)\in Q.
\]

Since $T$ preserves weak pullbacks, the square
\[\xymatrix{
TQ\ar[r]^{T\pi_2}\ar[d]_{T(\id\times\lambda_X)} & TTPX\ar[d]^{T\lambda_X} \\
T{\in_{TX}}\ar[r]_{T\pi_2} & TPTX
}\]
is a weak pullback. By the properties of $v$ stated in~\eqref{eq:Drhs}, and by the definition of weak pullback, there is some $q\in TQ$ such that
\[
	T(\id\times\lambda_X)(q)=v \quad (\text{hence } T\pi_1(q)=T\pi_1(v)) \qquad\text{and} \qquad T\pi_2(q)=t.
\]
Put $r = \mu_{\in_X}(T\rho(q)) \in T{\in_X}$. Then, using naturality of $\mu$, calculate:
\[
	T\pi_1(r) = T\pi_1(\mu_{\in_X}(T\rho(q))) = \mu_X(TT\pi_1(T\rho(q))) = \mu_X(T\pi_1(q)) = \mu_X(T\pi_1(v)) = s
\]
and
\[
	T\pi_2(r) = T\pi_2(\mu_{\in_X}(T\rho(q))) = \mu_{\P X}(TT\pi_2(T\rho(q))) = \mu_{\P X}(T\pi_2(q)) = \mu_{\P X}(t)
\]
as required in~\eqref{eq:Dlhs}. This completes the proof~\eqref{eq:axiomd} and the entire theorem.
\end{proof}

Armed with $\JL$, we are ready to prove Theorem~\ref{prop:ewcm}. Let $L\subseteq TX$ be a language recognized by a finite $T$-algebra $\mathbf{A}=(A,\alpha)$ as $L=\overleftarrow{h}(C)$, where $h:TX\to \mathbf{A}$ is a homomorphism and and $C\subseteq A$.

Seeing $\JL$ as a distributive law of the monad $T$ over the functor $\PP$, we can lift the functor $\PP:\set \to \set$ to
$\widehat{\PP}:\Alg(T)\to \Alg(T)$, see \cite{beck}. In particular, as is easy to check using the axioms from Definition~\ref{def:dl},
\[
	\widehat{\PP}\mathbf{A} = (\PP A,\PP\alpha\circ\lambda_A)
\]
is a legal $T$-algebra.

For a surjective function $f:X\to Y$, define $\overline{f}:Y\to\PP X$ as the inverse image function on $Y$:
\[
	\overline{f}(y) = \{x\in X \mid f(x)=y\}.
\]

Let $g:X\to Y$ be a surjective function. Since all functors on $\set$ preserve surjectivity, $Tg:TX\to TY$ is also surjective. We will show that the direct image of $L$ along $Tg$ is recognized by the algebra $\widehat{\PP}\mathbf{A}$. To this end, consider the function $k : TY\to \PP A$ defined by the composition:
\[
k\quad=\quad	TY\overset{\overline{Tg}}{\longrightarrow}\PP TX \overset{\PP h}{\longrightarrow}\PP A
\]
and the subset $\hat{C}\subseteq \PP A$ defined by
\[\hat{C} = \{D\subseteq A \mid D\cap C\neq\emptyset\}.\]
A straightforward calculation shows that the direct image of $L$ along $Tg$ coincides with the inverse image $ \overleftarrow{k}(\hat{C})$.

To finish the proof it is enough to show that $k$ is a $T$-algebra morphism.
That is, we need to show that the outer shape of the following diagram commutes:
\[\xymatrix{
TTY\ar[r]^{T\overline{Tg}}\ar[dd]_{\mu_Y}\ar[rd]_{\overline{TTg}}\ar@{}[rrd]|(.3){(1)} & T\PP TX\ar[d]^{\JL_{TX}}\ar[r]^{T\PP h}\ar@{}[rd]^(.6){(3)} & T\PP A\ar[d]^{\lambda_A}\\
& \PP TTX\ar[d]^{\PP\mu_X}\ar[r]^{\PP Th}\ar@{}[rd]^(.6){(4)} & \PP TA\ar[d]^{\PP\alpha}\\
TY\ar[r]_{\overline{Tg}}\ar@{}[ur]|-{(2)} & \PP TX\ar[r]_{\PP h} & \PP A
}\]
Part (3) commutes by naturality of $\lambda$ and part (4) commutes since $h$ is a $T$-algebra homomorphism. We now show the commutativity of (1) and (2).

For (2), given $s \in TTY$, we have:
\begin{align*}
	\PP \mu_X(\overline{TTg}(s))&=\{\mu_X(r)\mid r\in TTX\text{ and } TTg(r)=s\},\\
	\overline{Tg}(\mu_Y(s))&=\{t\in TX\mid \mu_Y(s)=Tg(t)\}.
\end{align*}
To prove $\PP \mu_X(\overline{TTg}(s))=\overline{Tg}(\mu_Y(s))$ we consider both inclusions.

For the left-to-right inclusion, put $t:=\mu_X(r)$ and
use naturality of $\mu$. For the right-to-left inclusion, use the fact that, since $T$ is weakly epi-cartesian,
\[\xymatrix{
	TTX\ar[r]^{TTg}\ar[d]_{\mu_X} & TTY\ar[d]^{\mu_Y} \\
	TX\ar[r]_{Tg} & TY
}\]
is a weak pullback.

To prove that (1) commutes, we will use the following simple fact:

\begin{lemma}\label{lem:small}
	Let $f:X\to Y$ be a surjective function. Then:
	\begin{enumerate}[i)]
		\item $\eta^{\P}_X\subseteq \bar{f}\circ f$ and $\bar{f}$ is the minimum function from $Y$ to $\PP X$ with this property,
		\item $\PP f\circ \bar{f}\subseteq \eta^{\P}_Y$ and $\bar{f}$ is the maximum function from $Y$ to $\PP X$ with this property. \qed
	\end{enumerate}
\end{lemma}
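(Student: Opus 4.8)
The plan is to unwind the pointwise order, which for two functions $\phi,\psi$ valued in a powerset means $\phi(z)\subseteq\psi(z)$ for every argument $z$, and then verify each of the four assertions by a direct element-chase. Throughout, recall that $\eta^{\P}$ is the singleton map $z\mapsto\{z\}$, that $\PP f=\overrightarrow{f}$ is the direct-image map, and that $\bar{f}(y)=\{x\in X\mid f(x)=y\}$; surjectivity of $f$ is used only to guarantee that each $\bar{f}(y)$ is nonempty, so that $\bar{f}$ is a legitimate function $Y\to\PP X$. Conceptually the lemma just records that $\bar{f}$ is the relational converse of (the graph of) $f$, so (i) and (ii) are the unit and counit of the resulting adjunction $f\dashv\bar{f}$ in the category of sets and relations, with minimality and maximality expressing uniqueness of the adjoint; but the element-chase is quickest, and I expect no genuine obstacle.

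For part (i), I would first note that $\bar{f}(f(x))=\{x'\mid f(x')=f(x)\}$ contains $x$, so $\{x\}=\eta^{\P}_X(x)\subseteq\bar{f}(f(x))$, giving $\eta^{\P}_X\leq\bar{f}\circ f$. For minimality, suppose $h:Y\to\PP X$ also satisfies $\eta^{\P}_X\leq h\circ f$, i.e.\ $x\in h(f(x))$ for all $x$. Fixing $y$ and taking any $x\in\bar{f}(y)$ (so $f(x)=y$), we get $x\in h(f(x))=h(y)$; hence $\bar{f}(y)\subseteq h(y)$ for every $y$, that is, $\bar{f}\leq h$.

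For part (ii), I would compute $\PP f(\bar{f}(y))=\overrightarrow{f}(\{x\mid f(x)=y\})\subseteq\{y\}=\eta^{\P}_Y(y)$ (an equality in fact, by surjectivity, though only the inclusion is needed), giving $\PP f\circ\bar{f}\leq\eta^{\P}_Y$. For maximality, suppose $h:Y\to\PP X$ satisfies $\PP f\circ h\leq\eta^{\P}_Y$; this says that for every $y$ and every $x\in h(y)$ one has $f(x)\in\{y\}$, i.e.\ $f(x)=y$, so $x\in\bar{f}(y)$. Thus $h(y)\subseteq\bar{f}(y)$ for all $y$, i.e.\ $h\leq\bar{f}$. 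The only thing requiring care is the bookkeeping of which inclusion direction each clause asserts; both extremality arguments are in fact insensitive to surjectivity, which enters solely to keep the values of $\bar{f}$ nonempty.
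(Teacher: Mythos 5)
Your proof is correct: the paper marks this lemma with \qed in the statement and omits the argument as routine, and your element-chase is exactly the intended verification, including the correct observation that surjectivity is needed only so that $\bar{f}$ lands in $\PP X$ rather than $\P X$. Nothing to add.
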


Using this together with Proposition~\ref{prop:Jacmon} and the axiom (a) from Definition~\ref{def:dl} for $\JL$, we obtain:
\begin{align*}
	\eta^{\P}_{TX}\subseteq \overline{Tg}\circ Tg\ &\Rightarrow\ \eta^{\P}_{TTX}=\JL_{TX}\circ T\eta^{\P}_{TX}\subseteq \JL_{TX}\circ
			T\overline{Tg}\circ TTg\\
			&\Rightarrow\ \overline{TTg}\subseteq \JL_{TX}\circ T\overline{Tg}
\end{align*}
Similarly, we obtain that:
\begin{align*}
	\PP Tg\circ \overline{Tg}\subseteq \eta^{\P}_{TY}\ &\Rightarrow\ \JL_{TY}\circ T\PP Tg\circ T\overline{Tg}\subseteq
			\JL_{TY}\circ T\eta^{\P}_{TY} = \eta^{\P}_{TTY}\\
			&\Rightarrow\ \PP TTg\circ \JL_{TX}\circ T\overline{Tg}\subseteq \eta^{\P}_{TTY}\\
			&\Rightarrow\ \JL_{TX}\circ T\overline{Tg}\subseteq \overline{TTg}
\end{align*}
which completes the proof that (1) commutes, and the proof of the entire theorem.
\qed

\section{Other examples}\label{sec:sporadic}

In this section, we show a few more cases where $\rec_{T}$ is closed under direct images along surjective letter-to-letter homomorphisms. These cases do not satisfy any of the sufficient conditions presented so far. They are also quite varied, with no clear pattern emerging:
\begin{itemize}
\item the reader monad in Section~\ref{sec:reader} is infinitary, but its finite algebras are rather restricted, and preservation under direct images is proved by a compactness argument,
\item for the free lattice monad (Section~\ref{sec:freelattice}), preservation follows from a convexity-based argument,
\item the monad in Section~\ref{sec:fgfgg} is rather peculiar in that almost no language is recognizable for it, so preservation of recognizable languages under direct images holds for trivial reasons,
\item for the monad in Section~\ref{sec:xyyz-xyz}, a construction similar to that described in Section~\ref{sec:suffcond-jacobs} works, but with the powerset replaced by a more complex powerset-squared construction.
\end{itemize}

\subsection{The reader monad}\label{sec:reader}

\begin{monad}\label{mon:reader}
Let $\Sigma^{\omega}$ denote the set of $\omega$-sequences of letters from $\Sigma$. The functor $T=(-)^\omega$ acts on functions in the expected way:
\[
	f^+(a_0a_1a_2\cdots) = (f(a_0))(f(a_1))(f(a_2))\cdots.
\]
This is an infinitary monad, with the unit mapping a letter $a\in\Sigma$ to the constant sequence $aaa\cdots$, and the multiplication defined by the diagonal function:
\[
	\mu_\Sigma(w_0w_1w_2\cdots) = w_{0,0}w_{1,1}w_{2,2}\cdots
\]
This is a particular instance of the Haskell ``reader'' monad.
\end{monad}

We will prove that recognizable $T$-languages are closed under taking direct images along surjective letter-to-letter homomorphisms. To fix the notation, we choose an arbitrary surjective function $f:\Sigma\to \Gamma$ for $\Sigma,\Gamma$ finite, and any language $L\subseteq T\Sigma$ recognizable by a homorphism $h:T\Sigma\to \mathbf{A}$ to a finite $T$-algebra $\mathbf{A}=(A,\alpha)$. This means that for some subset $S\subseteq A$:
\[
	t\in L \iff h(t)\in S.
\]
Assuming all this, we shall prove that the direct image of $L$ along $Tf$ is recognizable by a homomorphism from $T\Gamma$ to a finite $T$-algebra.

Without loss of generality we may assume that $L$ is recognizable by a point in $A$, i.e. that $S=\{a\}$
for some $a\in A$. This is because both taking inverse images and taking direct images commutes with unions, and recognizable languages are closed under finite unions. Such languages have a rather rigid ``rectangular'' structure:
\begin{lemma}\label{lem:rectangular}
If $L\subseteq T\Sigma$ is recognizable by a point then
\begin{equation}\label{eq:rectangular}
	L = Z_1\times Z_2\times Z_3 \times\cdots
\end{equation}
for some sequence of subsets $Z_1,Z_2,\ldots\subseteq \Sigma$.
\end{lemma}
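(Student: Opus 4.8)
The plan is to take the subsets $Z_k$ to be the coordinate projections of $L$ and to verify the product description using the algebra axioms, exploiting the crucial feature of this monad that its multiplication is the diagonal. Concretely, let $Z_k = \pi_k(L) = \{\,w_k : w\in L\,\}\subseteq\Sigma$ be the projection of $L$ onto the $k$-th position. The inclusion $L\subseteq Z_1\times Z_2\times\cdots$ is immediate from the definition of the $Z_k$, so all the content lies in the reverse inclusion. (If some $Z_k$ is empty then both sides are empty and there is nothing to prove, so we may assume every $Z_k$ is nonempty.)

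For the reverse inclusion, recall first that since $T\Sigma=\Sigma^\omega$ is the free $T$-algebra on $\Sigma$, the recognizing map $h:T\Sigma\to\mathbf{A}$ is an algebra homomorphism, and hence $h\circ\mu_\Sigma=\alpha\circ Th$. Now fix any $w=w_0w_1w_2\cdots$ with $w_k\in Z_k$ for all $k$. For each $k$ I would choose a witness $w^{(k)}\in L$ with $w^{(k)}_k=w_k$, and then assemble all the witnesses into a single ``sequence of sequences'' $W\in TT\Sigma$ whose $k$-th entry $W_k$ is $w^{(k)}$. The key observation is that the diagonal multiplication reads off exactly the matching coordinates: $\mu_\Sigma(W)=(W_{k,k})_k=(w^{(k)}_k)_k=(w_k)_k=w$.

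It then remains to push $w$ through $h$. Using the homomorphism identity, $h(w)=h(\mu_\Sigma(W))=\alpha(Th(W))$, and since every row $w^{(k)}$ lies in $L=\overleftarrow{h}(\{a\})$, the sequence $Th(W)=(h(w^{(0)}),h(w^{(1)}),\ldots)$ is the constant sequence $aaa\cdots=\eta_A(a)$. The unit axiom of the $T$-algebra $\mathbf{A}$ then yields $h(w)=\alpha(\eta_A(a))=a$, so $w\in L$, which completes the reverse inclusion and the proof.

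The crux of the argument --- and the reason no limiting or compactness argument is needed at this stage --- is precisely the diagonal nature of $\mu$: it lets a single element $W\in TT\Sigma$ encode infinitely many independent coordinatewise choices at once, with the diagonal recovering the target $w$ in one step. The only points that require care are checking that $Th(W)$ really is the constant sequence $a$ (so that the unit axiom applies) and dispatching the degenerate case of an empty projection; the genuinely infinitary, compactness-based reasoning is deferred to the subsequent construction of a recognizing algebra for the direct image.
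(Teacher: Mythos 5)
Your proof is correct and follows essentially the same route as the paper: define $Z_k$ as the coordinate projections of $L$, assemble the witnesses into an element of $TT\Sigma$, and use the diagonal multiplication together with the homomorphism and unit axioms to conclude $h(w)=a$. The only (harmless) additions are the explicit appeal to the unit axiom for $\alpha(\eta_A(a))=a$ and the remark about empty projections.
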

\begin{proof}
Given a language $L$ recognizable by $a\in A$ for an algebra $\alpha:TA\to A$, define
\[
	Z_n = \{x \in \Sigma \mid \exists v\in L. v_n=x\}.
\]
for each $n\in\mathbb{N}$. Then the left-to-right inclusion in~\eqref{eq:rectangular} is obvious. For the right-to-left inclusion, consider a word $w\in Z_1\times Z_2\times\cdots$; this means that for every $n\in\mathbb{N}$ there is a word $v_n\in L$ such that $v_{n,n}=w_n$. Then $w=\mu_\Sigma(v_1v_2\cdots)$, so:
\[
	h(w) = h(\mu_\Sigma(v_1v_2\cdots)) = \alpha(Th(v_1v_2\cdots)) = \alpha(aaa\cdots) = a
\]
so $w\in L$.
\end{proof}

For any $\Sigma$, the set $T\Sigma$ is equipped with the product topology, whose basis is the family of all {\em open balls} defined by
\[
	{\cal B}_I(w) = \{v\in T\Sigma \mid \forall n\in I.\ v_n=w_n\}
\]
for $w\in T\Sigma$ and finite $I\subseteq\mathbb{N}$. With respect to this topology:

\begin{lemma}\label{lem:reader-closed}
Every $L$ recognizable by a point is closed.
\end{lemma}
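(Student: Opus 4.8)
The plan is to leverage the rectangular decomposition already secured in Lemma~\ref{lem:rectangular}, after which closedness becomes an elementary property of the product topology. So the first step is to invoke that lemma: having reduced (by closure of $\rec_T$ under finite unions) to a language $L$ recognizable by a single point, we may write
\[
	L = Z_1\times Z_2\times Z_3\times\cdots
\]
for suitable subsets $Z_n\subseteq\Sigma$. The whole topological content of the statement is then contained in the observation that an arbitrary product of subsets of a discrete space is closed in the product topology.

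The cleanest way I would carry this out is by showing that the complement $T\Sigma\setminus L$ is open. Given any $w\in T\Sigma\setminus L$, the failure $w\notin Z_1\times Z_2\times\cdots$ means there is at least one coordinate $m\in\mathbb{N}$ with $w_m\notin Z_m$. I would then exhibit the basic open ball ${\cal B}_{\{m\}}(w)=\{v\in T\Sigma\mid v_m=w_m\}$ as a neighbourhood of $w$ that misses $L$ entirely: indeed every $v\in L$ satisfies $v_m\in Z_m$, whereas every $v\in{\cal B}_{\{m\}}(w)$ has $v_m=w_m\notin Z_m$, so the two sets are disjoint. Since $w$ was arbitrary, the complement of $L$ is a union of basic open balls, hence open, and $L$ is closed.

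An equivalent formulation, which I would mention as a remark, is to write $L$ directly as an intersection of preimages under the (continuous) coordinate projections $\pi_n:T\Sigma\to\Sigma$, namely $L=\bigcap_{n}\pi_n^{-1}(Z_n)$; since $\Sigma$ carries the discrete topology, each $Z_n$ is closed, each $\pi_n^{-1}(Z_n)$ is therefore closed, and an arbitrary intersection of closed sets is closed. I do not anticipate any genuine obstacle here: the only real work was establishing the rectangular shape of $L$, and that has already been done. If anything merits a moment of care, it is simply recording that the reduction to a point-recognizable language is legitimate, and that the projections used are continuous with respect to the product topology defining the balls ${\cal B}_I(w)$; both are immediate from the definitions.
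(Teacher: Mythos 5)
Your proof is correct and follows essentially the same route as the paper: invoke Lemma~\ref{lem:rectangular} to write $L$ as a product $Z_1\times Z_2\times\cdots$, then show the complement is open by exhibiting, for each $w\notin L$, a coordinate $m$ with $w_m\notin Z_m$ and the ball ${\cal B}_{\{m\}}(w)$ disjoint from $L$. The remark about $L=\bigcap_n \pi_n^{-1}(Z_n)$ is a harmless equivalent reformulation.
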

\begin{proof}
From Lemma~\ref{lem:rectangular} it easily follows that $T\Sigma\setminus L$ is open. Indeed, for $L$ as in~\eqref{eq:rectangular}, take any $w\not\in L$. Then there is an $n\in\mathbb{N}$ such that $w_n\not\in Z_n$, and then ${\cal B}_{\{n\}}(w)$ is disjoint from $L$.
\end{proof}

\begin{lemma}\label{lem:reader-open}
Every $L$ recognizable by a point in a finite algebra $A$ is open.
\end{lemma}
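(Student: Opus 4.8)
The plan is to exploit the fact that, since $A$ is finite, the homomorphism $h$ partitions the whole space $T\Sigma=\Sigma^\omega$ into finitely many fibers, each of which is closed by the previous lemma; a finite partition into closed sets is automatically a partition into clopen sets, and this yields openness of $L$ for free, without any further analysis of the algebra structure.

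First I would observe that being recognizable by a point is a property shared by \emph{all} fibers of $h$, not merely the one defining $L$. Concretely, if $L=\overleftarrow{h}(\{a\})$ for a homomorphism $h:T\Sigma\to\mathbf{A}$ and a point $a\in A$, then for every $b\in A$ the language $\overleftarrow{h}(\{b\})$ is recognized by the same $h$ through the singleton $\{b\}$, hence is recognizable by a point as well. Applying Lemma~\ref{lem:reader-closed} to each of them, every fiber $\overleftarrow{h}(\{b\})$ is closed in the product topology on $\Sigma^\omega$.

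Next I would invoke finiteness of $A$. The family $\{\overleftarrow{h}(\{b\})\}_{b\in A}$ consists of finitely many closed sets that partition $\Sigma^\omega$: they are pairwise disjoint and cover the space because $h$ is a function into $A$. Consequently the complement of $L$ is $T\Sigma\setminus L=\bigcup_{b\neq a}\overleftarrow{h}(\{b\})$, a finite union of closed sets and therefore closed; hence $L$ is open (in fact clopen), which is exactly the assertion.

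I do not expect a serious obstacle, since the real content has been front-loaded into Lemma~\ref{lem:reader-closed}. The only step requiring a little care is the first one, namely noticing that closedness applies uniformly to every fiber of $h$ rather than only to $L$, combined with the use of finiteness of $A$ to convert finitely many closed pieces into open ones. A more pedestrian route, trying to deduce from Lemma~\ref{lem:rectangular} that $Z_n=\Sigma$ for all but finitely many $n$, would force one to reason with the diagonal multiplication and the algebra $\alpha$, and is strictly more laborious; the clopen-partition argument bypasses it entirely.
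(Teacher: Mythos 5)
Your argument is exactly the paper's proof: write $T\Sigma\setminus L$ as the finite union $\bigcup_{b\neq a}\overleftarrow{h}(\{b\})$ of fibers, each closed by Lemma~\ref{lem:reader-closed}, and conclude that $L$ is open. The proposal is correct and takes essentially the same approach.
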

\begin{proof}
Let $L=\overleftarrow{f}\{a\}$ for some $a\in A$. Then
\[
	T\Sigma\setminus L = \bigcup_{b\in A\setminus\{a\}}\overleftarrow{f}\{b\},
\]
each $\overleftarrow{f}\{b\}$ is closed by Lemma~\ref{lem:reader-closed}, and their union is closed since a finite union of closed sets is closed.
\end{proof}

We can now prove a stronger version of Lemma~\ref{lem:rectangular}. Call a language $L\in T\Sigma$ {\em finitely rectangular} if
\begin{equation}\label{eq:fin-rectangular}
		L = Z_1\times Z_2\times Z_3 \times\cdots
\end{equation}
for some sequence of subsets $Z_1,Z_2,\ldots\subseteq X$ such that $Z_n=\Sigma$ for all except finitely many $n$.

\begin{lemma}\label{lem:rectangular2}
Every nonempty $L$ recognizable by a point in a finite algebra is finitely rectangular.
\end{lemma}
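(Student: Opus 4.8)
The plan is to combine the three preceding lemmas with a compactness argument: I would first observe that $L$ is simultaneously rectangular, closed and open, and then use compactness of $\Sigma^\omega$ to bound the number of nontrivial factors $Z_n$.

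First I would recall from Lemma~\ref{lem:rectangular} that $L = Z_1\times Z_2\times\cdots$ for some $Z_n\subseteq\Sigma$, and from Lemmas~\ref{lem:reader-closed} and~\ref{lem:reader-open} that $L$ is both closed and open in the product topology, i.e.\ clopen. Since $\Sigma$ is finite it is compact in the discrete topology, so by Tychonoff's theorem $\Sigma^\omega$ is compact; being a closed subset of a compact space, $L$ is itself compact.

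Next, because $L$ is open, every point $w\in L$ is contained in some basic ball ${\cal B}_{I_w}(w)\subseteq L$ with $I_w\subseteq\mathbb{N}$ finite. These balls cover $L$, so by compactness finitely many of them, say ${\cal B}_{I_1}(w^1),\ldots,{\cal B}_{I_k}(w^k)$, already cover $L$. I would then set $I = I_1\cup\cdots\cup I_k$, which is finite.

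Finally I would prove that $Z_n=\Sigma$ for every $n\notin I$, which immediately yields that $L$ is finitely rectangular. To this end, fix $n\notin I$ and $x\in\Sigma$, and pick any $w\in L$ (possible since $L$ is nonempty). Then $w$ lies in some ${\cal B}_{I_j}(w^j)$, and the point $w'$ obtained from $w$ by resetting its $n$-th coordinate to $x$ still agrees with $w^j$ on all of $I_j$ (as $n\notin I_j$), so $w'\in{\cal B}_{I_j}(w^j)\subseteq L$; the rectangular form of $L$ then forces $x=w'_n\in Z_n$, giving $\Sigma\subseteq Z_n$. The argument is essentially routine once the finite index set $I$ is in hand; the only point requiring care is that the finite subcover must consist of balls genuinely contained in $L$, which is exactly what openness of $L$ supplies, and this is the step where the combination of closedness and openness (hence compactness plus the local ball structure) is really used.
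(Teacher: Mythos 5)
Your proposal is correct and follows essentially the same route as the paper's proof: compactness of $T\Sigma$, closedness and openness of $L$ to extract a finite subcover of balls contained in $L$, and then taking $I$ to be the union of the index sets to conclude $Z_n=\Sigma$ off $I$. The only cosmetic difference is that you perturb one coordinate at a time, whereas the paper perturbs all coordinates outside $I$ at once; both are fine.
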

\begin{proof}
By Lemma~\ref{lem:rectangular}, $L$ is of the form
\[
		L = Z_1\times Z_2\times Z_3 \times\cdots;
\]
we need to prove that $Z_n=\Sigma$ for almost all $n$.

Since $\Sigma$ is a finite set, the space $T\Sigma$ is compact. Its subspace $L$ is closed by Lemma~\ref{lem:reader-closed}, so it is also compact. Consider the family of all those open balls ${\cal B}_I(w)$ that are contained in $L$. Since $L$ is open by Lemma~\ref{lem:reader-open}, for each $w\in L$ there is some $I$ such that ${\cal B}_I(w)$ belongs to this family, so the family is an open cover of $L$. By compactness, it contains a finite subcover
\begin{equation}\label{eq:subcover}
	{\cal B}_{I_1}(w_1),\ldots,{\cal B}_{I_k}(w_k).
\end{equation}
Let $I=I_1\cup\cdots\cup I_k$. Fix any $w\in L$ and take any $v\in T\Sigma$ such that $w_n=v_n$ for all $n\in I$. Since $L$ is covered by~\eqref{eq:subcover} we have $w\in{\cal B}_{I_i}(w_i)$ for some $i$. Since $I_i\subseteq I$ also $v\in{\cal B}_{I_i}(w_i)$, so $v\in L$. Since we did not constrain $v$ on the coordinates outside of $I$, we have that $Z_n=\Sigma$ for $n\not\in I$.
\end{proof}

Finitely rectangular languages are closed under direct images of surjective homomorphisms:

\begin{lemma}
For a surjective function $f:\Sigma\to \Gamma$, if $L\subseteq T\Sigma$ is finitely rectangular then so is the direct image $\overrightarrow{Tf}(L)\subseteq T\Gamma$.
\end{lemma}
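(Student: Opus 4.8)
The plan is to compute the direct image $\overrightarrow{Tf}(L)$ explicitly and observe that it is again a product of subsets of $\Gamma$, almost all of which are the full alphabet. The crucial point is that $Tf=f^{+}$ acts coordinate-wise, so direct image along $f^{+}$ commutes with the product structure of a rectangular language. Concretely, writing $L=Z_1\times Z_2\times Z_3\times\cdots$ with $Z_n\subseteq\Sigma$ and $Z_n=\Sigma$ for all but finitely many $n$, I claim that
\[
	\overrightarrow{Tf}(L) = \overrightarrow{f}(Z_1)\times\overrightarrow{f}(Z_2)\times\overrightarrow{f}(Z_3)\times\cdots,
\]
where each $\overrightarrow{f}(Z_n)\subseteq\Gamma$ is the set-theoretic image of $Z_n$ under $f$.

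To establish this equality I would verify both inclusions. The left-to-right inclusion is immediate: if $v=f^{+}(w)$ for some $w\in L$, then $v_n=f(w_n)$ with $w_n\in Z_n$, so $v_n\in\overrightarrow{f}(Z_n)$ for every $n$. For the right-to-left inclusion, suppose $v\in T\Gamma$ satisfies $v_n\in\overrightarrow{f}(Z_n)$ for all $n$; then for each $n$ the set $\{x\in Z_n \mid f(x)=v_n\}$ is nonempty, and since $\Sigma$ is finite I may pick its least element under a fixed linear order on $\Sigma$, avoiding any appeal to choice. Assembling these elements into the sequence $w=w_1w_2\cdots$ gives $w\in Z_1\times Z_2\times\cdots=L$ with $f^{+}(w)=v$, so $v\in\overrightarrow{Tf}(L)$. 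The essential observation making this work is precisely that the defining constraints are independent across coordinates, which is special to the coordinate-wise shape of the reader monad's action on functions.

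Finally, I would invoke surjectivity of $f$: since $\overrightarrow{f}(\Sigma)=\Gamma$, and $Z_n=\Sigma$ for all but finitely many $n$, we get $\overrightarrow{f}(Z_n)=\Gamma$ for all but finitely many $n$. Hence the product displayed above exhibits $\overrightarrow{Tf}(L)$ as a finitely rectangular language over $\Gamma$, which is exactly what we need. There is no genuine obstacle in this argument; the only subtlety worth flagging is ensuring the right-to-left inclusion is witnessed by a sequence lying in $L$ (not merely in some product whose coordinates map correctly), and this is handled by choosing witnesses inside each $Z_n$, which is unproblematic because $\Sigma$ is finite.
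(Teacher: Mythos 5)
Your proof is correct and follows exactly the same route as the paper, which simply asserts the identity $\overrightarrow{Tf}(Z_1\times Z_2\times\cdots)=\overrightarrow{f}(Z_1)\times\overrightarrow{f}(Z_2)\times\cdots$ as obvious and then notes $\overrightarrow{f}(\Sigma)=\Gamma$ by surjectivity. You merely spell out the two inclusions (and the coordinate-wise choice of witnesses) that the paper leaves implicit.
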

\begin{proof}
Obviously
\[
	\overrightarrow{Tf}(Z_1\times Z_2\times\cdots) = \overrightarrow{f}(Z_1)\times \overrightarrow{f}(Z_2)\times\cdots,
\]
and $\overrightarrow{f}(\Sigma)=\Gamma$ since $f$ is surjective.
\end{proof}

The final piece of the puzzle is:

\begin{lemma}
Every finitely rectangular language is recognizable by a finite algebra.
\end{lemma}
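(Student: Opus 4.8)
The plan is to write $L$ as a finite intersection of single-coordinate constraints, to recognise each constraint by a projection algebra, and to invoke closure of $\rec_T$ under finite intersection. Concretely, I would set $I = \{n \mid Z_n \neq \Sigma\}$, which is finite by the definition of finite rectangularity. Because $Z_n = \Sigma$ for every $n \notin I$, the condition $w \in L$ is equivalent to requiring $w_n \in Z_n$ only for $n \in I$, so
\[
L = \bigcap_{n \in I} L_n, \qquad\text{where } L_n = \{w \in T\Sigma \mid w_n \in Z_n\}.
\]
Since $I$ is finite and $\rec_T$ is closed under binary, hence finite, intersection by Proposition~\ref{prop1}, it suffices to show that each $L_n$ is recognizable. (When $I = \emptyset$ the intersection is all of $T\Sigma$, which is trivially recognized by the one-element algebra.)

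The crux is to observe that for each fixed $n$ the projection $\pi_n : \Sigma^\omega \to \Sigma$, $\pi_n(w) = w_n$, is a $T$-algebra structure on the finite set $\Sigma$. The unit law is immediate, since $\pi_n(\eta_\Sigma(a)) = \pi_n(aaa\cdots) = a$. For the associativity law I would take $W = (W_k)_k \in TT\Sigma$, with $W_k = (W_{k,\ell})_\ell$, and compare the two composites:
\[
\pi_n(\mu_\Sigma(W)) = \pi_n\big((W_{k,k})_k\big) = W_{n,n}
\qquad\text{and}\qquad
\pi_n(T\pi_n(W)) = \pi_n\big((\pi_n(W_k))_k\big) = \pi_n(W_n) = W_{n,n},
\]
so the two agree. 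Here the diagonal multiplication $\mu_\Sigma$ and the projection cooperate to select exactly the entry $W_{n,n}$, which is precisely why the axiom holds. The homomorphism $T\Sigma \to (\Sigma, \pi_n)$ induced by $\id_\Sigma$ is then $\pi_n$ itself, so $L_n = \overleftarrow{\pi_n}(Z_n)$ is recognized by this finite algebra together with the subset $Z_n \subseteq \Sigma$, completing the reduction.

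I do not expect a genuine obstacle here: the whole argument rests on the single verification that each $\pi_n$ satisfies the associativity axiom, and the only point worth emphasising is that this verification uses the specific (diagonal) shape of the reader monad's multiplication, so that projecting the diagonal coincides with projecting twice.
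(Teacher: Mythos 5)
Your proof is correct and takes essentially the same route as the paper: the paper directly builds the algebra $\Sigma^I$ with structure map $\alpha(a_1a_2\cdots)(n)=a_n(n)$, which is exactly the product of your projection algebras $(\Sigma,\pi_n)$ that the intersection construction of Proposition~\ref{prop1} assembles from your decomposition $L=\bigcap_{n\in I}L_n$. The essential verification --- that projecting the diagonal agrees with projecting twice, both yielding $W_{n,n}$ --- is the same computation in both arguments.
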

\begin{proof}
Let $L\subseteq T\Sigma$ be as in~\eqref{eq:fin-rectangular}, and let $I\subseteq\mathbb{N}$ be the (finite) set of those $n$ where $Z_n\neq \Sigma$. Define $A = \Sigma^I$, and let $\alpha:TA\to A$ be defined by:
\[
	\alpha(a_1a_2\cdots)(n) = a_n(n) \qquad \text{for }n\in I.
\]
We need to check that $\alpha$ is a $T$-algebra. For the unit axiom, given $a\in A$ and $n\in I$, calculate:
\[
	\alpha(\eta_A(a))(n) = \alpha(aaa\cdots)(n) = a(n).
\]
For the multiplication axiom, given $\mathbf{v}=v_1v_2v_3\cdots\in TTA$ and $n\in I$, calculate:
\[
	\alpha(\mu_A(\mathbf{v}))(n) = \alpha(v_{1,1}v_{2,2}\cdots)(n) = v_{n,n}(n)
\]
and
\[
	\alpha(T\alpha(\mathbf{v}))(n) = \alpha(\alpha(v_1),\alpha(v_2),\ldots)(n) = \alpha(v_n)(n) = v_{n,n}(n)
\]
so both expressions are equal as required, hence $\alpha$ is a $T$-algebra structure on $A$.

Define a homomorphism $h:T\Sigma\to A$ as a unique homomorphic extension of the function $h_0:\Sigma\to A$ that maps every $x\in \Sigma$ to the function in $\Sigma^I$ constant at $x$. Explicitly, this is defined by:
\[
	h(w)(n) = \alpha(Th_0(w))(n) = w_n \qquad \text{for } w\in T\Sigma, n\in I.
\]
Recalling~\eqref{eq:fin-rectangular}, define $S\subseteq A$ by:
\[
	S = \{g\in \Sigma^I \mid \forall n\in I.\ g(n)\in Z_n\}.
\]
Then, for $w\in T\Sigma$, calculate:
\[
	w\in L \iff \forall n\in I. w_n\in Z_n \iff h(w)\in S,
\]
hence $L$ is recognized by $S\subseteq A$ along the homomorphism $h$.
\end{proof}

\subsection{The free lattice monad}\label{sec:freelattice}

\begin{monad}\label{mon:lattice}
The free lattice monad $\FL$, whose algebras are lattices, has a well-known equational presentation over a signature with two binary symbols $\lor$ and $\land$. It consists of equations:
\[\begin{array}{ccccc}
x\lor y = y\lor x & \qquad & x \lor (y\lor z) = (x \lor y)\lor z & \qquad & x \lor (x \land y) = x \\
x\land y = y\land x & \qquad & x \land (y\land z) = (x \land y)\land z & \qquad & x \land (x \lor y) = x
\end{array}\]
Lattices can be usefully regarded as partial orders, defined by:
\[
	x \leq y \qquad\text{iff} \qquad x\lor y = y;
\]
$\lor$ and $\land$ then become respectively join and meet operations.
\end{monad}

We will prove that recognizable $\FL$-languages are closed under taking direct images along surjective letter-to-letter homomorphisms. So, to fix the notation, we choose an arbitrary surjective function $f:\Sigma\to \Gamma$ for $\Sigma,\Gamma$ finite, and any language $L\subseteq\FL(\Sigma)$ recognizable by a homorphism $h:\FL(\Sigma)\to A$ to a finite lattice $A$. This means that for some subset $S\subseteq A$:
\[
	t\in L \iff h(t)\in S.
\]
Assuming all this, we shall prove that $\overrightarrow{Tf}(L)$, the direct image of $L$ along $Tf$, is recognizable by a homomorphism from
$\FL(\Gamma)$ to a finite lattice.

First let us consider the case where $L$ is upwards-closed, i.e., where
\[
	s\in L \quad\text{and}\quad s\leq t \quad\text{implies}\quad t\in L.
\]
Consider a homomorphism $k:\FL(\Gamma)\to\FL(\Sigma)$ defined as the unique extension of the map $k_0:\Gamma\to\FL(\Sigma)$ defined by:
\[
	k_0(y) = \bigvee\{x \mid f(x)=y\}.
\]
Note that the above join is nonempty since $f$ is surjective. In particular, we have $f(k_0(y))=y$ for all $y\in \Gamma$ and, by definition, $k_0(f(x))\geq x$ for all $x\in \Sigma$. Since $Tf$ and $k$ preserve meets and joins by definition, this extends to:
\begin{equation}\label{eq:FL-galois}
	Tf(k(t))=t \qquad\text{and}\qquad k(Tf(s))\geq s
\end{equation}
for all $t\in\FL(\Gamma)$ and $s\in\FL(\Sigma)$.

Then it is easy to see that
\begin{equation}\label{eq:im-vs-inv-im}
\overrightarrow{Tf}(L)=\overleftarrow{k}(L).
\end{equation}
Indeed, for the right-to-left inclusion, for $k(t)\in L$ one has
\[
	t \stackrel{\eqref{eq:FL-galois}}{=} Tf(k(t)) \in \overrightarrow{Tf}(L).
\]
For the left-to-right inclusion, take any $t\in\FL(\Gamma)$ such that $t=Tf(s)$ for some $s\in L$. Then
\[
	k(t) = k(Tf(s)) \stackrel{\eqref{eq:FL-galois}}{\geq} s
\]
and, since $L$ is upwards-closed, $k(t)\in L$.

The equation~\eqref{eq:im-vs-inv-im} implies that $\overrightarrow{Tf}(L)$ is the inverse image of $S\subseteq A$ along the composite homomorphism $h\circ k:\FL(Y)\to A$, and so it is recognizable.

An analogous argument works if $L$ is downwards-closed, with $k$ defined using a meet rather than a join.

Now consider an arbitrary (i.e.~not necessarily upwards- or downwards-closed) language $L$ recognizable by a homomorphism $h:\FL(\Sigma)\to A$ to a finite lattice $A$. Without loss of generality we may assume that $L$ is recognizable by a point in $A$, i.e., that
\[
L = \overleftarrow{h}\{a\}
\]
for some $a\in A$. This is because both taking inverse images and taking direct images commutes with unions, and recognizable languages are closed under finite unions.

Let $L{\uparrow}$ and $L{\downarrow}$ denote the upwards-closure and the downwards-closure of $L$, respectively. It is not difficult to check that upwards closure commutes with inverse images, in particular:
\[
	\overleftarrow{h}(a{\uparrow}) = L{\uparrow}.
\]
Indeed, this amounts to requiring that, for every $t\in\FL(\Sigma)$,
\[
	h(t)\geq a \iff \exists s\leq t. h(s)=a.
\]
The right-to-left implication is immediate since $h$ is monotone. For the left-to-right implication, since $h:\FL(\Sigma)\to A$ is surjective, there is some $r\in\FL(\Sigma)$ such that $h(r)=a$. Put $s=t\land r$. Then obviously $s\leq t$, and
\[
	h(s) = h(t)\land h(r) = h(t)\land a = a
\]
as required.

An analogous argument works for downward closures, i.e.:
\[
	\overleftarrow{h}(a{\downarrow}) = L{\downarrow}.
\]
This means that both $L{\uparrow}$ and $L{\downarrow}$ are recognizable (by $h$), hence by the previous argument, their direct images along $Tf$ are also recognizable. Since recognizable languages are closed under intersection, it is now enough to prove that
\begin{equation}\label{eq:uparr-downarr}
\overrightarrow{Tf}(L{\uparrow})\cap \overrightarrow{Tf}(L{\downarrow}) = \overrightarrow{Tf}(L).
\end{equation}
To this end, first notice that, since $L$ is recognized by a point, it is {\em convex}, i.e.,
\[
	s\leq t\leq r \quad \text{and}\quad s,r\in L \quad\text{implies}\quad t\in L.
\]
This implies that
\[
	L{\uparrow}\cap L{\downarrow} = L;
\]
indeed the right-to-left inclusion is trivial, and the left-to-right inclusion easily follows from the convexity of $L$. This does not immediately imply~\eqref{eq:uparr-downarr}, as taking direct images does not commute with intersections in general. In this case, however, it does. To see this, first notice that $L{\uparrow}$ is a {\em filter}, i.e., it is upward-closed and closed under finite intersections. Similarly, $L{\downarrow}$ is an {\em ideal}, i.e., it is downward-closed and closed under finite unions. Now all we need is the following lemma, which holds for arbitrary lattice homomorphisms:
\begin{lemma}
For any lattice homomorphism $g:H\to K$ and any filter $U\subseteq H$ and ideal $D\subseteq H$ such that $U\cap D\neq\emptyset$:
\[
	\overrightarrow{g}(U\cap D) = \overrightarrow{g}(U)\cap \overrightarrow{g}(D).
\]
\end{lemma}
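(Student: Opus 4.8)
The plan is to prove the two inclusions separately. The inclusion $\overrightarrow{g}(U\cap D)\subseteq \overrightarrow{g}(U)\cap \overrightarrow{g}(D)$ holds for any function and needs no hypotheses: since $U\cap D$ is contained in both $U$ and $D$, its image is contained in both images. The content of the lemma lies entirely in the reverse inclusion, which is precisely where direct images fail to commute with intersections in general; the filter/ideal structure together with the assumption $U\cap D\neq\emptyset$ is what rescues it. So I would take an arbitrary $y\in\overrightarrow{g}(U)\cap\overrightarrow{g}(D)$, write $y=g(u)=g(d)$ with $u\in U$ and $d\in D$, and fix some $c\in U\cap D$ supplied by the hypothesis. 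The aim is to produce a single witness $w\in U\cap D$ with $g(w)=y$.

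The key is the choice of witness
\[
	w = (u\land d)\lor\big(c\land(u\lor d)\big).
\]
First I would check $g(w)=y$: applying the homomorphism and substituting $g(u)=g(d)=y$ collapses the right-hand side to $y\lor(g(c)\land y)$, which is $y$ by absorption. Then $w\in D$ follows because $u\land d\le d$ and $c\land(u\lor d)\le c$ both lie in $D$ by downward closure, and their join remains in $D$ as $D$ is an ideal. Finally $w\in U$ follows because $c\land(u\lor d)\in U$ --- here $c\in U$, $u\lor d\ge u\in U$, and $U$ is closed under finite meets --- and then $w\ge c\land(u\lor d)$ gives $w\in U$ by upward closure. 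Thus $w\in U\cap D$ witnesses $y\in\overrightarrow{g}(U\cap D)$.

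I expect the only real difficulty to be discovering the witness $w$; afterwards the three verifications are one-line lattice manipulations. Its shape is forced by having to satisfy three competing demands at once: membership in the ideal $D$ pushes $w$ downward (below both $c$ and $d$), membership in the filter $U$ pushes it upward (above both $c$ and $u$), while $g(w)=y$ requires the contribution of $g(c)$ to be absorbed by $y$. The symmetric expression above is exactly the combination that reconciles all three constraints.
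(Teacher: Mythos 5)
Your proof is correct and follows essentially the same strategy as the paper's: exhibit an explicit lattice-polynomial witness in $U\cap D$ built from $u$, $d$, and the element of $U\cap D$, then verify the three required properties using absorption, the ideal axioms, and the filter axioms. The paper just uses the slightly simpler witness $d\lor(c\land u)$ in place of your $(u\land d)\lor\big(c\land(u\lor d)\big)$; both work for the same reasons.
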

\begin{proof}
The left-to-right inclusion is obvious. For the right-to-left inclusion, take any $t\in K$ such that
\[
	\exists u\in U. h(u)=t \qquad\text{and}\qquad \exists d\in D.h(d)=t.
\]
Pick any $x\in U\cap D$ (it exists by our assumptions), and put:
\[
	y = d\lor(x\land u).
\]
Since $x\in D$ and $D$ is downward-closed, also $x\land u\in D$ and (since $D$ is closed under unions) $y\in D$.
On the other hand, since $x\in U$ and $U$ is closed under intersections, also $x\land u\in U$ and (since $U$ is upward-closed) $y\in U$. So $y\in D\cap U$, and it is enough to calculate:
\[
	h(y) = h(d) \lor (h(x)\land h(u)) = t\lor(h(x)\land t) = t.
\]
\end{proof}

Using this lemma for $g=Tf$, $U=L{\uparrow}$ and $D=L{\downarrow}$, we obtain~\eqref{eq:uparr-downarr} which directly implies that $L$ is recognizable.

\subsection{A non--trivial monad whose finite algebras are trivial}\label{sec:fgfgg}

\begin{monad}\label{mon:fgfgg}
Let $T$ be the monad associated to the signature $\{f,g\}$, where $f$ and $g$ are unary operations, subject to axioms:
\[
	fgfgg(x)=x\text{ \ \ and \ \ }fgffgg(x)=fgffgg(y).
\]
\end{monad}

It is shown that this equational theory has no non-trivial finite models, that is, every non--empty finite model has only one element~\cite[Section 2]{Burris1971}. Indeed, if $(A,f,g)$ is a finite algebra, from the equation $fgfgg(x)=x$ we obtain that $f$ is  a surjection and $g$ is a injection, which implies that $f$ and $g$ are bijections since $A$ is finite and $f:A\to A$ and $g:A\to A$. Hence, from the equation $fgffgg(x)=fgffgg(y)$, we conclude that $x=y$ for every $x,y\in A$. Therefore, $\rec_{T}$ is trivially closed under direct images along surjective letter--to--letter homomorphisms.

\begin{rem}
Even though the monad has the trivial algebra as the only finite model, the monad has infinite models. One such model is the set of the positive integers with $g(n)=2^n$, $f(2^{2^n})=3^n$, $f(2^{3^n})=n$, and $f(m)=1$ otherwise, as shown in~\cite[Section 2]{Burris1971}. Also, it is worth mentioning that given a finite set of identities over a given finitary signature, it is undecidable if there exists a non-trivial finite model that satisfies the given identities~\cite{mckenzie}. That is, it is undecidable in general if the class $\rec_{T}$ of recognizable languages is non-trivial. This means that describing the class $\rec_{T}$ or even finding a non--trivial element in $\rec_{T}$ could be undecidable.
\end{rem}

\subsection{The $x(y(yz))=x(yz)$ monad}\label{sec:xyyz-xyz}

\begin{monad}\label{mon:xyyz-xyz}
Consider a monad $T$ presented by an equational theory with a single binary symbol $\mydot$ and with a single equation:
\begin{equation}\label{eq:xyyz-xyz}
\vcenter{\xymatrix@=.6em{
& \mydot\ar@{-}[ld]\ar@{-}[rd] \\
x & & \mydot\ar@{-}[ld]\ar@{-}[rd] \\
& y & & \mydot\ar@{-}[ld]\ar@{-}[rd] \\
& & y & & z
}}
\quad=\quad
\vcenter{\xymatrix@=.6em{
& \mydot\ar@{-}[ld]\ar@{-}[rd] \\
x & & \mydot\ar@{-}[ld]\ar@{-}[rd] \\
& y & & z
}}.
\end{equation}
\end{monad}

This equation is similar to the one that defined Monad~\ref{mon:xxy-xy}, but the resulting monad has rather different properties. For one, its multiplication is not weakly epi-cartesian. To see this, consider
\[
	\Sigma = \{a_1,a_2,b,c\} \qquad \Gamma = \{a,b,c\}
\]
and $f:\Sigma\to \Gamma$ defined by $f(a_1)=f(a_2)=a$, $f(b)=b$ and $f(c)=c$. Define $t\in TT\Gamma$ by:
\[
	t = \vcenter{\xymatrix@=.6em{& \mydot\ar@{-}[ld]\ar@{-}[rd] \\
b & & t'}}
\qquad\mbox{where}\qquad
	t' = \vcenter{\xymatrix@=.6em{& \mydot\ar@{-}[ld]\ar@{-}[rd] \\
a & & c}} \in T\Gamma \quad \mbox{ and } b \mbox{ is seen as } \eta_\Gamma(b)\in T\Gamma.
\]
Then, obviously:
\[
	\mu_Y(t) = \vcenter{\xymatrix@=.6em{
& \mydot\ar@{-}[ld]\ar@{-}[rd] \\
b & & \mydot\ar@{-}[ld]\ar@{-}[rd] \\
& a & & c.
}}
\]
Now let $r\in T\Sigma$ be defined by:
\[
	r = \vcenter{\xymatrix@=.6em{
& \mydot\ar@{-}[ld]\ar@{-}[rd] \\
b & & \mydot\ar@{-}[ld]\ar@{-}[rd] \\
& a_1 & & \mydot\ar@{-}[ld]\ar@{-}[rd] \\
& & a_2 & & c.
}}
\]
It is easy to see that $Tf(r)=\mu_Y(t)$. However, there is no $p\in TT\Sigma$ such that $TTf(p)=t$ and $\mu_\Sigma(p)=r$. As a result, the naturality square
\[\xymatrix{
TT\Sigma\ar[r]^{\mu_\Sigma}\ar[d]_{TTf} & T\Sigma\ar[d]^{Tf} \\
TT\Gamma\ar[r]_{\mu_\Gamma} & T\Gamma
}\]
is not a weak pullback.

In spite of this, recognizable $T$-languages are closed under taking direct images along surjective letter-to-letter homomorphisms.

A $T$-algebra is simply a set equipped with a binary operation which satisfies~\eqref{eq:xyyz-xyz}. Assume finite alphabets $\Sigma$ and $\Gamma$, a surjective function $f:\Sigma\to\Gamma$, and a language $L\subseteq T\Sigma$ that is recognized by a subset $S\subseteq A$ of a finite $T$-algebra $A$; in other words, $L=\overleftarrow{h}(S)$ for some $T$-algebra homomorphism $h:T\Sigma\to A$. We will show that $\overrightarrow{Tf}(L)$ is recognized by a finite $T$-algebra.

To this end, define a $T$-algebra
\[
	\bar{A} = \PP A\times \PP A.
\]
The intuition is that a tree $t\in T\Gamma$ will be mapped to a value $(\alpha_L,\alpha_R)\in \bar{A}$ such that $\alpha_L$ contains those values in $A$ that can be attained by trees in $T\Sigma$ which are mapped to $t$ and which are left sons of their parents. Similarly, $\alpha_R$ will stores values for trees that are right sons of their parents.

Formally, a binary operation $\mydot$ is defined by:
\begin{align*}
	(\alpha_L,\alpha_R)\mydot(\beta_L,\beta_R) =&\ \big(\{a\mydot b \mid a\in\alpha_L, b\in\beta_R\}, \\
	&\ \ \{a_1\mydot(a_2\mydot(\cdots(a_n\mydot b)\cdot\cdot)) \mid n\geq 1, a_i\in\alpha_L, b\in\beta_R\}\big),
\end{align*}
where $\mydot$ on the right-hand side of the definition denotes the operation in $A$. It is not difficult to check that this defines a $T$-algebra, i.e., that the equation
\begin{equation}\label{eq:xyyz-xyz1}
\vcenter{\xymatrix@=.6em{
& \mydot\ar@{-}[ld]\ar@{-}[rd] \\
(\alpha_L,\alpha_R) & & \mydot\ar@{-}[ld]\ar@{-}[rd] \\
& (\beta_L,\beta_R) & & \mydot\ar@{-}[ld]\ar@{-}[rd] \\
& & (\beta_L,\beta_R) & & (\gamma_L,\gamma_R)
}}
\quad=\quad
\vcenter{\xymatrix@=.6em{
& \mydot\ar@{-}[ld]\ar@{-}[rd] \\
(\alpha_L,\alpha_R) & & \mydot\ar@{-}[ld]\ar@{-}[rd] \\
& (\beta_L,\beta_R) & & (\gamma_L,\gamma_R)
}}
\end{equation}
holds for all $\alpha_L,\alpha_R,\beta_L,\beta_R,\gamma_L,\gamma_R\subseteq A$. The (slightly) more involved case is the right-to-left inclusion, where one needs to use the equation~\eqref{eq:xyyz-xyz} for $A$. For the left-to-right inclusion even that is not needed.

A homomorphism $\bar{h}:T\Gamma\to \bar{A}$ is the unique extension of the function that maps every letter $a\in\Gamma$ to
\[
	\left(\overrightarrow{h\circ\eta_{\Sigma}}(\overleftarrow{f}(a)),\overrightarrow{h\circ\eta_{\Sigma}}(\overleftarrow{f}(a))\right)
\]
(note that $h\circ\eta_\Sigma:\Sigma\to A$ is the interpretation of single letters from $\Sigma$ in the algebra A).

Now define a subset $\bar{S}\subseteq \bar{A}$ by:
\[
	\bar{S} = \{(\alpha_L,\alpha_R) \mid \alpha_L\cap S \neq\emptyset\}.
\]
This subset recognizes the direct image $\overrightarrow{Tf}(L)$. More explicitly, for any tree $t\in T\Gamma$, one has $\bar{h}(t)\in\bar{S}$ if and only if $t=Tf(r)$ for some $r\in T\Sigma$ such that $h(r)\in S$. Both implications are proved by a straightforward induction on the size of $t$.

\section{Counterexamples}\label{sec:counterexamples}

In this section, we illustrate cases where direct images of recognizable languages along surjective letter-to-letter homomorphisms are not recognizable. One such case, Monad~\ref{mon:x3-x2}, was shown in Section~\ref{sec:monadic:mso}. Here we provide three more.

\subsection{The marked words monad}\label{sec:markedwords}

Recall Monad~\ref{mon:semigroup}, i.e. the free semigroup monad $(-)^{+}$, and Monad~\ref{mon:bag}, i.e., the bag monad $\Bag$.
For $w\in \Sigma^+$ and  $\beta\in\Bag \Sigma$, we write $w\sqsupseteq \beta$ if each $x\in \Sigma$ occurs in $w$ at least $\beta(x)$ times.

\begin{monad}\label{mon:marked}
Define a functor $T$ on $\set$ by:
\[
	T\Sigma = \{(w,\beta)\in \Sigma^+\times\Bag \Sigma \mid w\sqsupseteq \beta\}
\]
with the action on fuctions defined componentwise.
It is easy to check that this is well-defined, i.e., that
\[
	w\sqsupseteq\beta \text{ implies } f^+(w)\sqsupseteq \Bag f(\beta).
\]
$T$ carries a monad structure where both unit and multiplication are componentwise inherited from the monads $(-)^+$ and $\Bag$.
Again, it is straightforward to check that this is well defined.

To get some intuition, an element of $T\Sigma$ can be understood as a finite, nonempty word over the alphabet $\Sigma$ with some positions marked, except that it is not specified exactly which positions those are; we only know how many positions labeled with every letter are marked. So, for example, markings
\[
	\underline{a}\underline{a}bca\underline{b} \qquad \underline{a}a\underline{b}c\underline{a}b \qquad \text{and} \qquad a\underline{a}\underline{b}c\underline{a}b
\]
denote the same element of $T\{a,b,c\}$. This intuition is useful because it allows for a simple presentation of the monad structure: the unit of $a\in \Sigma$ is unambiguosly presented as $\underline{a}$
and multiplication can be illustrated as, for example:
\begin{gather*}
	\underline{(\underline{a}bac)}(a\underline{bb})(ab\underline{a}c)\underline{(b\underline{a}c)} \qquad\mapsto\qquad
	\underline{a}bacabbabacb\underline{a}c.
\end{gather*}

This monad has a simple equational presentation over an algebraic signature that consists of one binary symbol $\mydot$ and one unary symbol $\circ$. Using the terminology of marked words, the intuition will be that $\mydot$ is an operation that concatenates two words, and $\circ$ {\em erases} the marking from every letter in a word. The equations are as follows:
\[
\vcenter{\xymatrix@=.6em{
& & \mydot\ar@{-}[ld]\ar@{-}[rd] \\
& \mydot\ar@{-}[ld]\ar@{-}[rd] & & z \\
x & & y
}}
\quad=\quad
\vcenter{\xymatrix@=.6em{
& \mydot\ar@{-}[ld]\ar@{-}[rd] \\
x & & \mydot\ar@{-}[ld]\ar@{-}[rd] \\
& y & & z
}}
\quad,\quad
\vcenter{\xymatrix@=.6em{
& \circ\ar@{-}[d]\\
& \mydot\ar@{-}[ld]\ar@{-}[rd] \\
x & & y
}}
\quad=\quad
\vcenter{\xymatrix@=.6em{
& \mydot\ar@{-}[ld]\ar@{-}[rd] \\
\circ\ar@{-}[d] & & \circ\ar@{-}[d] \\
x & & y
}}
\quad,\quad
\vcenter{\xymatrix@=.6em{
\circ\ar@{-}[d]\\
\circ\ar@{-}[d]\\
x
}}
\quad=\quad
\vcenter{\xymatrix@=.6em{
\circ\ar@{-}[d]\\
x
}}
\quad,
\]
\[
\vcenter{\xymatrix@=.6em{
& & \mydot\ar@{-}[ld]\ar@{-}[rd] \\
& \mydot\ar@{-}[ld]\ar@{-}[rd] & & \circ\ar@{-}[d] \\
x & & y & x
}}
\quad=\quad
\vcenter{\xymatrix@=.6em{
& \mydot\ar@{-}[ld]\ar@{-}[rd] \\
\circ\ar@{-}[d] & & \mydot\ar@{-}[ld]\ar@{-}[rd] \\
x & y & & x
}}
\]
The first three equations ensure that every element of $T\Sigma$ can be presented as a marked word over $\Sigma$, and the last equation characterizes the equivalence of marked words described above.

This completes our exhibition of the monad $T$.
\end{monad}

Now, for the alphabet $\Sigma=\{a,b\}$, consider the language $L\subseteq T\Sigma$ defined by:
\[
	L = \{\left((ab)^n,[a\mapsto n, b\mapsto 0]\right) \mid n\in \mathbb{N}\}.
\]
Using the presentation by marked words, $L$ consists of all words of the form
\[
	\underline{a}b\underline{a}b\cdots\underline{a}b.
\]

We claim that $L$ is recognizable by a finite $T$-algebra. Intuitively, this is because a word $w\in L$ cannot be obtained by concatenating or erasing marks from any other marked words, except by concatenating subwords of $w$ or perhaps erasing a mark from a single letter $\underline{b}$.

The algebra to recognize $L$ has seven values:
\[
	A = \{\StAA,\StAB,\StBA,\StBB, \StB, \StuB, \bot\}.
\]
Intuitively, the value $\StBA$ represents those sub-words of words from $L$ that begin with $b$ and end with $\underline{a}$; analogously for $\StAA$, $\StAB$ and $\StBB$, except that the latter value represents only words other than the singleton $b$. That word is represented by a separate value $\StB$. The value $\StuB$ represents a singleton marked letter $\underline{b}$. Finally, $\bot$ is an error value, representing those words that cannot be completed to a word in $L$ by erasing marks from all letters or by concatenating with other words.

Relying on the equational presentation of $T$, to define a $T$-algebra on $A$ it is enough to define operations $\mydot$ and $\circ$ on it. The multiplication table for $x\mydot y$ is:
\[\begin{array}{c|c|c|c|c|c|c|c|}
\raisebox{-1ex}[1ex][1.5ex]{$x$}\mydot \raisebox{1ex}[1.5ex][1.5ex]{$y$}& \StAA & \StAB & \StBA & \StBB & \StB & \StuB & \bot \\
\hline
\StAA & \bot & \bot & \StAA & \StAB & \StAB & \bot & \bot \\
\hline
\StAB & \StAA & \StAB & \bot & \bot & \bot & \bot & \bot  \\
\hline
\StBA & \bot & \bot & \StBA & \StBB & \StBB & \bot & \bot \\
\hline
\StBB & \StBA & \StBB & \bot & \bot & \bot & \bot & \bot \\
\hline
\StB & \StBA & \StBB & \bot & \bot & \bot & \bot & \bot \\
\hline
\StuB  & \bot & \bot & \bot & \bot & \bot & \bot & \bot \\
\hline
\bot & \bot & \bot & \bot & \bot & \bot & \bot & \bot \\
\hline
\end{array}
\]
Note that $\StB$ and $\StBB$ behave in the same way in this table, and similarly for $\StuB$ and $\bot$. However, these pairs of values are distinguished by the operation $\circ$, which is defined by:
\[
\circ(\StB) = \circ(\StuB) = \StB \qquad\text{and} \qquad {\circ(x)}=\bot \quad\text{for } x\not\in\{\StB,\StuB\}.
\]
It is easy to check that the operations defined this way satisfy the equational theory of $T$ given above, therefore they make $A$ a $T$-algebra.

Since $T\Sigma$ is a free $T$-algebra on $\Sigma$, any function from $\Sigma$ to the set $A$ extends uniquely to a homomorphism from $T\Sigma$ to the algebra $A$. Let $h$ be the homomorphism that extends the function mapping $a$ to $\StAA$ and $b$ to $\StuB$. Then we have, for every $w\in T\Sigma$:
\[
	w\in L \iff h(w)=\StAB
\]
so $L$ is recognized by $A$.

Now consider an alphabet $Y=\{c\}$ and  the unique function $f:\Sigma\to \Gamma$, which maps both $a$ and $b$ to $c$.
The direct image of $L$ under $Tf$ is:
\begin{equation}\label{eq:c2n}
	\left\{\left(c^{2n},[c\mapsto n]\right) \mid n\in \mathbb{N}\right\}.
\end{equation}
Using the presentation by marked words, this language consists of all words of the form
\[
	\underline{c}c\underline{c}c\cdots\underline{c}c = \overbrace{\underline{cc\cdots cc}}^{n \text{ times}}\underbrace{cc\cdots cc}_{n \text{ times}}.
\]
This language is not recognizable by any finite $T$-algebra. For assume any homomorphism $h$ from $T\Gamma$ to some finite algebra $A$, and consider $w_1,w_2,w_3,\ldots\in T\Gamma$ defined by:
\[
	w_n = (c^n,[c\mapsto n]) = \underline{c}^n.
\]
Since $A$ is finite, there are some $n\neq m$ such that $h(w_n)=h(w_m)$. Now consider
\[
	v = (c^n,[c\mapsto 0]) = c^n \in T\Gamma.
\]
We have that
\[
	h(\underline{w_n}v) = h(\underline{w_m}v)
\]
but $\underline{w_n}v$ belongs to the direct image~\eqref{eq:c2n} and $\underline{w_m}v$ does not, so $h$ does not recognize the direct image.
\qed

\subsection{The balanced associativity monad}\label{sec:balanced-assoc}

\begin{monad}
Consider a monad $T$ presented by a single binary symbol $\mydot$ and a single equation:

\begin{equation}\label{eq:balanced-assoc}
\begin{tikzpicture}[baseline={(current bounding box.center)}]
  \matrix (m) [matrix of math nodes,row sep=.6em,column sep=.6em,nodes={outer sep=-1.4pt}]
  {  & & \mydot \\
     & \mydot & & x \\
     x & & y\\};
  \draw (m-1-3) -- (m-2-2);
  \draw (m-2-2) -- (m-3-1);
  \draw (m-2-2) -- (m-3-3);
  \draw (m-1-3) -- (m-2-4);
\end{tikzpicture}
\ =\
\begin{tikzpicture}[baseline={(current bounding box.center)}]
  \matrix (m) [matrix of math nodes,row sep=.6em,column sep=.6em,nodes={outer sep=-1.4pt}]
  {  & \mydot \\
     x & & \mydot \\
     & y & & x\\};
  \draw (m-1-2) -- (m-2-1);
  \draw (m-1-2) -- (m-2-3);
  \draw (m-2-3) -- (m-3-2);
  \draw (m-2-3) -- (m-3-4);
\end{tikzpicture}.
\end{equation}
\end{monad}

Elements of $T\Sigma$ do not have a canonical form as simple and intuitive as for the marked words monad, but we shall need only a limited understanding of them. First, notice that for any $a\in \Sigma$, $t\in T\Sigma$ and $n\geq 1$ there is:

\begin{equation}\label{eq:leftize}
\begin{tikzpicture}[baseline={(current bounding box.center)}]
  \matrix (m) [matrix of math nodes,row sep=.6em,column sep=.6em,nodes={outer sep=-1.4pt}]
  {  & \mydot \\
     a & & \mydot \\
     & \mydot & & a\\
     t & & a \\};
  \draw (m-1-2) -- (m-2-1);
  \draw (m-1-2) -- (m-2-3);
  \draw[densely dotted] (m-2-3) -- (m-3-2);
  \draw (m-2-3) -- (m-3-4);
  \draw (m-3-2) -- (m-4-1);
  \draw (m-3-2) -- (m-4-3);
  \draw[pen colour={gray}, decorate, decoration={calligraphic brace, amplitude=4pt}, line width=1.1pt] (1.2,-0.35) -- (0.3,-1.2);
  \draw (1.35,-0.95) node{\scriptsize{$n \text{ times}$}};
\end{tikzpicture}
\ =\
\begin{tikzpicture}[baseline={(current bounding box.center)}]
  \matrix (m) [matrix of math nodes,row sep=.6em,column sep=.6em,nodes={outer sep=-1.4pt}]
  {  & & & \mydot \\
     & & \mydot & & a \\
     & \mydot & & a\\
     a & & t \\};
  \draw[densely dotted] (m-1-4) -- (m-2-3);
  \draw (m-1-4) -- (m-2-5);
  \draw (m-2-3) -- (m-3-2);
  \draw (m-2-3) -- (m-3-4);
  \draw (m-3-2) -- (m-4-1);
  \draw (m-3-2) -- (m-4-3);
  \draw[pen colour={gray}, decorate, decoration={calligraphic brace, amplitude=4pt}, line width=1.1pt] (1.55,0.35) -- (0.65,-0.5);
  \draw (1.7,-0.25) node{\scriptsize{$n \text{ times}$}};
\end{tikzpicture}
\end{equation}

This is proved by simple induction on $n$. The base case $n=1$ is simply~\eqref{eq:balanced-assoc} with $x=a$ and $y=t$, and for the induction step calculate:
\[
\begin{tikzpicture}[baseline={(current bounding box.center)}]
  \matrix (m) [matrix of math nodes,row sep=.6em,column sep=.6em,nodes={outer sep=-1.4pt}]
  {  & & \mydot \\
     & a & & \mydot \\
     & & \mydot & & a\\
     & \mydot & & a \\
     t & & a\\};
  \draw (m-1-3) -- (m-2-2);
  \draw (m-1-3) -- (m-2-4);
  \draw[densely dotted] (m-2-4) -- (m-3-3);
  \draw (m-2-4) -- (m-3-5);
  \draw (m-3-3) -- (m-4-2);
  \draw (m-3-3) -- (m-4-4);
  \draw (m-4-2) -- (m-5-1);
  \draw (m-4-2) -- (m-5-3);
  \draw[pen colour={gray}, decorate, decoration={calligraphic brace, amplitude=4pt}, line width=1.1pt] (1.6,0.05) -- (-0.1,-1.6);
  \draw (1.65,-0.95) node{\scriptsize{$n+1 \text{ times}$}};
\end{tikzpicture}
\quad\stackrel{\eqref{eq:balanced-assoc}}{=}\quad
\begin{tikzpicture}[baseline={(current bounding box.center)}]
  \matrix (m) [matrix of math nodes,row sep=.6em,column sep=.6em,nodes={outer sep=-1.4pt}]
  {  & & \mydot \\
     & \mydot & & a\\
     a & & \mydot \\
     & \mydot & & a\\
     t & & a \\};
  \draw (m-1-3) -- (m-2-2);
  \draw (m-1-3) -- (m-2-4);
  \draw (m-2-2) -- (m-3-1);
  \draw (m-2-2) -- (m-3-3);
  \draw[densely dotted] (m-3-3) -- (m-4-2);
  \draw (m-3-3) -- (m-4-4);
  \draw (m-4-2) -- (m-5-1);
  \draw (m-4-2) -- (m-5-3);
  \draw[pen colour={gray}, decorate, decoration={calligraphic brace, amplitude=4pt}, line width=1.1pt] (1.2,-0.7) -- (0.3,-1.55);
  \draw (1.35,-1.3) node{\scriptsize{$n \text{ times}$}};
\end{tikzpicture}
\stackrel{\text{ind.~ass.}}{=}
\begin{tikzpicture}[baseline={(current bounding box.center)}]
  \matrix (m) [matrix of math nodes,row sep=.6em,column sep=.6em,nodes={outer sep=-1.4pt}]
  {  & & & & \mydot \\
     & & & \mydot & & a \\
     & & \mydot & & a\\
     & \mydot & & a \\
     a & & t\\};
  \draw (m-1-5) -- (m-2-4);
  \draw (m-1-5) -- (m-2-6);
  \draw[densely dotted] (m-2-4) -- (m-3-3);
  \draw (m-2-4) -- (m-3-5);
  \draw (m-3-3) -- (m-4-2);
  \draw (m-3-3) -- (m-4-4);
  \draw (m-4-2) -- (m-5-1);
  \draw (m-4-2) -- (m-5-3);
  \draw[pen colour={gray}, decorate, decoration={calligraphic brace, amplitude=4pt}, line width=1.1pt] (1.9,0.75) -- (.2,-.9);
  \draw (1.95,-0.25) node{\scriptsize{$n+1 \text{ times}$}};
\end{tikzpicture}
\]
Furthermore, for any $a,c\in \Sigma$ and $n\geq 1$ we have:
\begin{equation}\label{eq:kurdebalans}
\begin{tikzpicture}[baseline={(current bounding box.center)}]
  \fill [rounded corners, fill=gray!15] (1,-.3) -- (1.25, -0.05) -- (-0.35,1.6) -- (-1.3,0.6) -- (-1,0.35) -- (-0.3,1.1) -- cycle;
  \fill [rounded corners, fill=gray!15] (1,-1.6) -- (1.25, -1.35) -- (-0.35,.3) -- (-1.3,-0.7) -- (-1,-0.95) -- (-0.3,-.2) -- cycle;
  \matrix (m) [matrix of math nodes,row sep=.6em,column sep=.6em,nodes={outer sep=-1.4pt}]
  {  & \mydot \\
     a & & \mydot \\
     & \mydot & & a\\
     a & & \mydot \\
     & c & & a\\};
  \draw (m-1-2) -- (m-2-1);
  \draw (m-1-2) -- (m-2-3);
  \draw[densely dotted] (m-2-3) -- (m-3-2);
  \draw (m-2-3) -- (m-3-4);
  \draw (m-3-2) -- (m-4-1);
  \draw (m-3-2) -- (m-4-3);
  \draw (m-4-3) -- (m-5-2);
  \draw (m-4-3) -- (m-5-4);
  \draw[pen colour={gray}, decorate, decoration={calligraphic brace, amplitude=4pt}, line width=1.1pt] (-1.3,-0.9) -- (-1.3,1.5);
  \draw (-2,0.3) node{\scriptsize{$n \text{ times}$}};
\end{tikzpicture}
\quad=\quad
\begin{tikzpicture}[baseline={(current bounding box.center)}]
  \matrix (m) [matrix of math nodes,row sep=.6em,column sep=.6em,nodes={outer sep=-1.4pt}]
  {  & & & \mydot \\
     & & \mydot & & a \\
     & \mydot & & a\\
     a & & \mydot \\
     & a & & c\\};
  \draw[densely dotted] (m-1-4) -- (m-2-3);
  \draw (m-1-4) -- (m-2-5);
  \draw (m-2-3) -- (m-3-2);
  \draw (m-2-3) -- (m-3-4);
  \draw[densely dotted] (m-3-2) -- (m-4-3);
  \draw (m-3-2) -- (m-4-1);
  \draw (m-4-3) -- (m-5-2);
  \draw (m-4-3) -- (m-5-4);
  \draw[pen colour={gray}, decorate, decoration={calligraphic brace, amplitude=4pt}, line width=1.1pt] (1.5,0.65) -- (0.6,-.2);
  \draw (1.65,.05) node{\scriptsize{$n \text{ times}$}};
  \draw[pen colour={gray}, decorate, decoration={calligraphic brace, amplitude=4pt}, line width=1.1pt] (-.65,-1.5) -- (-1.5,-.65);
  \draw (-1.7,-1.3) node{\scriptsize{$n \text{ times}$}};
\end{tikzpicture}
\end{equation}
(on the left, the shaded pattern is repeated $n$ times).
This is again proved by induction. The base case $n=1$ is~\eqref{eq:balanced-assoc} with $x=a$ and $y=c$, and for the induction step calculate:
\[
\begin{tikzpicture}[baseline={(current bounding box.center)}]
  \fill [rounded corners, fill=gray!15] (1,.35) -- (1.25, .6) -- (-0.35,2.25) -- (-1.3,1.25) -- (-1,1) -- (-0.3,1.75) -- cycle;
  \fill [rounded corners, fill=gray!15] (1,-0.95) -- (1.25, -0.7) -- (-0.35,.95) -- (-1.3,-.05) -- (-1,-0.3) -- (-0.3,.45) -- cycle;
   \fill [rounded corners, fill=gray!15] (1,-2.25) -- (1.25, -2) -- (-0.35,-.35) -- (-1.3,-1.35) -- (-1,-1.6) -- (-0.3,-.85) -- cycle;
  \matrix (m) [matrix of math nodes,row sep=.6em,column sep=.6em,nodes={outer sep=-1.4pt}]
  {  & \mydot \\
     a & & \mydot \\
     & \mydot & & a \\
     a & & \mydot \\
     & \mydot & & a\\
     a & & \mydot \\
     & c & & a\\};
  \draw (m-1-2) -- (m-2-1);
  \draw (m-1-2) -- (m-2-3);
  \draw (m-2-3) -- (m-3-2);
  \draw (m-2-3) -- (m-3-4);
  \draw (m-3-2) -- (m-4-1);
  \draw (m-3-2) -- (m-4-3);
  \draw[densely dotted] (m-4-3) -- (m-5-2);
  \draw (m-4-3) -- (m-5-4);
  \draw (m-5-2) -- (m-6-1);
  \draw (m-5-2) -- (m-6-3);
  \draw (m-6-3) -- (m-7-2);
  \draw (m-6-3) -- (m-7-4);
  \draw[pen colour={gray}, decorate, decoration={calligraphic brace, amplitude=4pt}, line width=1.1pt] (-1.3,-1.6) -- (-1.3,2.2);
  \draw (-2.3,0.3) node{\scriptsize{$n+1 \text{ times}$}};
\end{tikzpicture}
\stackrel{\text{ind.~ass.}}{=}
\begin{tikzpicture}[baseline={(current bounding box.center)}]
  \matrix (m) [matrix of math nodes,row sep=.6em,column sep=.6em,nodes={outer sep=-1.4pt}]
  {  & & & \mydot \\
     & & a & & \mydot \\
     & & & \mydot & & a \\
     & & \mydot & & a \\
     & \mydot & & a\\
     a & & \mydot \\
     & a & & c\\};
  \draw (m-1-4) -- (m-2-3);
  \draw (m-1-4) -- (m-2-5);
  \draw (m-2-5) -- (m-3-4);
  \draw (m-2-5) -- (m-3-6);
  \draw[densely dotted] (m-3-4) -- (m-4-3);
  \draw (m-3-4) -- (m-4-5);
  \draw (m-4-3) -- (m-5-2);
  \draw (m-4-3) -- (m-5-4);
  \draw[densely dotted] (m-5-2) -- (m-6-3);
  \draw (m-5-2) -- (m-6-1);
  \draw (m-6-3) -- (m-7-2);
  \draw (m-6-3) -- (m-7-4);
  \draw[pen colour={gray}, decorate, decoration={calligraphic brace, amplitude=4pt}, line width=1.1pt] (1.2,0) -- (0.3,-.85);
  \draw (1.4,-.6) node{\scriptsize{$n \text{ times}$}};
  \draw[pen colour={gray}, decorate, decoration={calligraphic brace, amplitude=4pt}, line width=1.1pt] (-1.05,-2.15) -- (-1.9,-1.3);
 \draw (-2.05,-1.95) node{\scriptsize{$n \text{ times}$}};
\end{tikzpicture}
\quad=
\begin{tikzpicture}[baseline={(current bounding box.center)}]
  \matrix (m) [matrix of math nodes,row sep=.6em,column sep=.6em,nodes={outer sep=-1.4pt}]
   { & & & & \mydot \\
     & & & \mydot & & a \\
     & & \mydot & & a \\
     & \mydot & & a\\
     a & & \mydot \\
     & a & & \mydot \\
     & & a & & c\\};
  \draw (m-1-5) -- (m-2-4);
  \draw (m-1-5) -- (m-2-6);
  \draw[densely dotted] (m-2-4) -- (m-3-3);
  \draw (m-2-4) -- (m-3-5);
  \draw (m-3-3) -- (m-4-2);
  \draw (m-3-3) -- (m-4-4);
  \draw (m-4-2) -- (m-5-3);
  \draw (m-4-2) -- (m-5-1);
  \draw (m-5-3) -- (m-6-2);
  \draw[densely dotted] (m-5-3) -- (m-6-4);
  \draw (m-6-4) -- (m-7-3);
  \draw (m-6-4) -- (m-7-5);
  \draw[pen colour={gray}, decorate, decoration={calligraphic brace, amplitude=4pt}, line width=1.1pt] (1.2,0.65) -- (0.3,-.2);
  \draw (1.4,.05) node{\scriptsize{$n \text{ times}$}};
  \draw[pen colour={gray}, decorate, decoration={calligraphic brace, amplitude=4pt}, line width=1.1pt] (-.4,-2.15) -- (-1.25,-1.3);
 \draw (-1.4,-1.95) node{\scriptsize{$n \text{ times}$}};
\end{tikzpicture}
\]
where the last equality follows from~\eqref{eq:leftize} with
\[
\begin{tikzpicture}[baseline={(current bounding box.center)}]
  \matrix (m) [matrix of math nodes,row sep=.6em,column sep=.6em,nodes={outer sep=-1.4pt}]
  {  & \mydot\\
     a & & \mydot \\
     & a & & c\\};
  \draw[densely dotted] (m-1-2) -- (m-2-3);
  \draw (m-1-2) -- (m-2-1);
  \draw (m-2-3) -- (m-3-2);
  \draw (m-2-3) -- (m-3-4);
  \draw[pen colour={gray}, decorate, decoration={calligraphic brace, amplitude=4pt}, line width=1.1pt] (-.35,-.85) -- (-1.2,0);
 \draw (-1.35,-.65) node{\scriptsize{$n \text{ times}$}};
\end{tikzpicture}
=\quad t.
\]

Now, for the alphabet $\Sigma=\{a,b,c\}$, consider the language $L\subseteq T\Sigma$ that consists of all trees of the form
\[
\begin{tikzpicture}[baseline={(current bounding box.center)}]
  \fill [rounded corners, fill=gray!15] (1,-.3) -- (1.25, -0.05) -- (-0.35,1.6) -- (-1.3,0.6) -- (-1,0.35) -- (-0.3,1.1) -- cycle;
  \fill [rounded corners, fill=gray!15] (1,-1.6) -- (1.25, -1.35) -- (-0.35,.3) -- (-1.3,-0.7) -- (-1,-0.95) -- (-0.3,-.2) -- cycle;
  \matrix (m) [matrix of math nodes,row sep=.6em,column sep=.6em,nodes={outer sep=-1.4pt}]
  {  & \mydot \\
     a & & \mydot \\
     & \mydot & & b\\
     a & & \mydot \\
     & c & & b\\};
  \draw (m-1-2) -- (m-2-1);
  \draw (m-1-2) -- (m-2-3);
  \draw[densely dotted] (m-2-3) -- (m-3-2);
  \draw (m-2-3) -- (m-3-4);
  \draw (m-3-2) -- (m-4-1);
  \draw (m-3-2) -- (m-4-3);
  \draw (m-4-3) -- (m-5-2);
  \draw (m-4-3) -- (m-5-4);
  \draw[pen colour={gray}, decorate, decoration={calligraphic brace, amplitude=4pt}, line width=1.1pt] (-1.3,-0.9) -- (-1.3,1.5);
  \draw (-2,0.3) node{\scriptsize{$n \text{ times}$}};
\end{tikzpicture}
\]
for $n\geq 1$. Note that the equation~\eqref{eq:balanced-assoc} does not apply anywhere in such a tree, so every tree of this shape forms a singleton equivalence class with respect to the congruence induced by~\eqref{eq:balanced-assoc}. It is easy to recognize $L$ with an algebra of six values:
\[
	A = \{\Sta,\Stb,\Stc,\StL, \StR,\bot\},
\]
with the operation $\mydot$ interpreted by:
\[
	\Stc \mydot \Stb = \StR
	\qquad
	\Sta \mydot \StR = \StL
	\qquad
	\StL \mydot \Stb = \StR
\]
and $x\mydot y = \bot$ for all other combinations of $x,y\in A$. It is easy to check that equation~\eqref{eq:balanced-assoc} holds in this algebra, indeed
\[
\begin{tikzpicture}[baseline={(current bounding box.center)}]
  \matrix (m) [matrix of math nodes,row sep=.6em,column sep=.6em,nodes={outer sep=-1.4pt}]
  {  & & \mydot \\
     & \mydot & & x \\
     x & & y\\};
  \draw (m-1-3) -- (m-2-2);
  \draw (m-2-2) -- (m-3-1);
  \draw (m-2-2) -- (m-3-3);
  \draw (m-1-3) -- (m-2-4);
\end{tikzpicture}
\ =\
\begin{tikzpicture}[baseline={(current bounding box.center)}]
  \matrix (m) [matrix of math nodes,row sep=.6em,column sep=.6em,nodes={outer sep=-1.4pt}]
  {  & \mydot \\
     x & & \mydot \\
     & y & & x\\};
  \draw (m-1-2) -- (m-2-1);
  \draw (m-1-2) -- (m-2-3);
  \draw (m-2-3) -- (m-3-2);
  \draw (m-2-3) -- (m-3-4);
\end{tikzpicture}
=\quad\bot
\]
for all $x,y\in A$.

Then the unique homomorphism from $h:T\Sigma\to A$ that maps $a$ to $\Sta$, $b$ to $\Stb$ and $c$ to $\Stc$ recognizes $L$:
\[
	t\in L \iff h(t)=\StL.
\]
Now consider an alphabet $\Gamma=\{a,c\}$ and a function $f:\Sigma\to\Gamma$:
\[
	f(a)=f(b)=a, \qquad f(c) = c.
\]
The direct image of $L$ under $Tf$ consists of words as in the equation~\eqref{eq:kurdebalans}.
This language is not recognizable by any finite $T$-algebra. For assume any homomorphism from $T\Gamma$ to some finite algebra $A$, and consider $t_1,t_2,t_3,\ldots\in T\Gamma$ defined by:
\[
\begin{tikzpicture}[baseline={(current bounding box.center)}]
  \matrix (m) [matrix of math nodes,row sep=.6em,column sep=.6em,nodes={outer sep=-1.4pt}]
  {  & \mydot\\
     a & & \mydot \\
     & a & & c\\};
  \draw[densely dotted] (m-1-2) -- (m-2-3);
  \draw (m-1-2) -- (m-2-1);
  \draw (m-2-3) -- (m-3-2);
  \draw (m-2-3) -- (m-3-4);
  \draw[pen colour={gray}, decorate, decoration={calligraphic brace, amplitude=4pt}, line width=1.1pt] (-.35,-.85) -- (-1.2,0);
 \draw (-1.35,-.65) node{\scriptsize{$n \text{ times}$}};
\end{tikzpicture}
=\quad t_n.
\]
Since $A$ is finite, there are some $n\neq m$ such that $h(t_n)=h(t_m)$. Then the terms
\[
\begin{tikzpicture}[baseline={(current bounding box.center)}]
  \matrix (m) [matrix of math nodes,row sep=.6em,column sep=.6em,nodes={outer sep=-1.4pt}]
  {  & & \mydot\\
     & \mydot & & a\\
     t_n & & a\\};
  \draw[densely dotted] (m-1-3) -- (m-2-2);
  \draw (m-1-3) -- (m-2-4);
  \draw (m-2-2) -- (m-3-1);
  \draw (m-2-2) -- (m-3-3);
  \draw[pen colour={gray}, decorate, decoration={calligraphic brace, amplitude=4pt}, line width=1.1pt] (1.2,0) -- (.35,-.85);
 \draw (1.35,-.65) node{\scriptsize{$n \text{ times}$}};
\end{tikzpicture}
\qquad\text{and}\qquad
\begin{tikzpicture}[baseline={(current bounding box.center)}]
  \matrix (m) [matrix of math nodes,row sep=.6em,column sep=.6em,nodes={outer sep=-1.4pt}]
  {  & & \mydot\\
     & \mydot & & a\\
     t_m & & a\\};
  \draw[densely dotted] (m-1-3) -- (m-2-2);
  \draw (m-1-3) -- (m-2-4);
  \draw (m-2-2) -- (m-3-1);
  \draw (m-2-2) -- (m-3-3);
  \draw[pen colour={gray}, decorate, decoration={calligraphic brace, amplitude=4pt}, line width=1.1pt] (1.25,0) -- (.4,-.85);
 \draw (1.4,-.65) node{\scriptsize{$n \text{ times}$}};
\end{tikzpicture}
\]
have the same value under $h$, but the one on the left is in the direct image of $L$ along $Tf$ (due to~\eqref{eq:kurdebalans}) and the one on the right is not, so $h$ does not recognize the direct image.

\subsection{The not-quite-Mal'cev monad}\label{sec:not-quite-malcev}

Recall Monad~\ref{mon:almost-malcev}, presented by a ternary operation symbol $p$ and a unary operation symbol $s$ subject to the equations
\begin{equation}\label{eq:xxy-yxx}
	p(x,x,y) = s(y) = p(y,x,x).
\end{equation}
If, additionally, there were $s(y)=y$, then the monad would immediately have a Mal'cev term and the machinery of Section~\ref{sec:malcev} would apply.\footnote{\ We are grateful to an anonymous reviewer whose insightful question led to a significant improvement of this example.}

Note that any algebra for a monad that has a Mal'cev term induces a $T$-algebra on the same set. For instance, if $(G,\cdot, e)$ is a group, then $(G,p,s)$, where $p(x,y,z):=x\cdot y^{-1}\cdot z$ and $s$ is the identity function, is a $T$-algebra. A finite $T$-algebra that does not have a Mal'cev term is $P=(\{0,1,2,3,4,5,6\},p,s)$ where $p$ is defined as:
\[
    p(x,y,z)=\begin{cases}
		1 & \text{if $x\neq 5$, $y= 5$ and $z\neq 5$}\\
		2 & \text{if $x\neq 6$, $y= 6$ and $z\neq 6$}\\
		3 & \text{if $(x,y,z)=(0,1,2)$,}\\
		4 & \text{otherwise},
        \end{cases}
\]
and with $s(x)=4$ for all $x$. The fact that \eqref{eq:xxy-yxx} hold in $P$ can be easily checked by hand.

Now, consider the alphabet $\Sigma=\{a,b,c\}$, the homomorphism $h:T\Sigma\to P$ such that $h(a)=0$, $h(b)=5$ and $h(c)=6$, and the subset $S=\{3\}\subseteq P$. The recognizable language $L=\overleftarrow{h}(S)$ is the infinite language:
\[
    L=\{p(a,p(x_1,b,x_2),p(x_3,c,x_4))\mid x_i\in T\Sigma, x_1,x_2\neq b, x_3,x_4\neq c\}.
\]

Now consider $\Gamma=\{a,d\}$ and $f:\Sigma\to \Gamma$ defined by $f(b)=f(c)=d$ and $f(a)=a$. We show that that the direct image of $L$ under the surjective letter-to-letter homomorphism $Tf$ is not recognizable. To see this, note that the direct image contains (the equivalence class of) the term $s(a)$, because
\[
	s(a) = p(a,p(d,d,d),p(d,d,d)) = Tg(\ p(a,p(c,b,c),p(b,c,b)\ ).
\]
On the other hand, if $t_1\neq t_2\in T\Gamma$ then the direct image of $L$ does not contain the term
\[
	p(s(t_1),s(t_2),a).
\]
Since $T\Gamma$ is infinite, for any homomorphism $g$ from $T\Gamma$ into a finite $T$-algebra $A$ there are some terms $t_1\neq t_2$ such that $g(t_1)=g(t_2)$. Then the terms
\[
	s(a) = p(s(t_1),s(t_1),a)  \qquad \text{and}\qquad   p(s(t_1),s(t_2),a)
\]
are mapped to the same value by $g$, but only the former one belongs to the direct image of $L$ along $Tf$. 

\section{Related problems}\label{sec:related}

Our main focus has been on recognizable languages being preserved under taking direct images of surjective letter-to-letter homomorphisms. This is because, in the proof of Proposition~\ref{propclosuremso}, which concerns the particular example of the monad of finite words, direct images are taken only along projection homomorphisms that arise from the logical existential quantifier; those homomorphisms are indeed surjective and letter-to-letter. It therefore makes sense to incorporate this particular closure property in an abstract definition of $\mso$ for other monads.

However, it should be noted that Proposition~\ref{propclosuremso} would remain true if surjective letter-to-letter homomorphisms were replaced in its statement by arbitrary letter-to-letter, or indeed by arbitrary homomorphisms. The corresponding closure property would become stronger, so one may expect some monads to become non-examples for these more restrictive variants of monadic $\mso$. In this section we identify some of these monads.

\subsection{Homomorphisms that are not letter-to-letter}

First, let us consider preserving recognizable languages under taking direct images of {\em arbitrary} $T$-algebra homomorphisms, i.e. ones that arise from functions $f:\Sigma\to T\Gamma$ via Kleisli lifting (as opposed to letter-to-letter homomorphims, which are of the form $Tf:T\Sigma\to T\Gamma$ for a map $f:\Sigma\to\Gamma$). This preservation property fails even for very basic monads, such as commutative monoids and seminearrings:

\begin{ex}\label{eximnotrec1}
Recall Monad~\ref{mon:bag}, i.e., the free commutative monoid monad $\Bag$.
The language $a^*\subseteq \Bag\{a,b\}$ is recognized by the commutative monoid $(\{0,1\},\max,0)$
	via the homomorphism $h:\Bag\{a,b\}\to \{0,1\}$ such that $h(a)=0$ and $h(b)=1$. Now, consider the homomorphism $h:\Bag\{a,b\}\to \Bag\{a,b,c\}$ such that
	$h(a)=ab$ and $h(b)=c$. Then, the direct image of $a^*$ is the language $\{w\in \Bag\{a,b,c\}\mid\ w(a)=w(b)\}$ which is not recognizable.\qed
\end{ex}

Another counterexample is:

\begin{monad}\label{mon:seminearring} \label{mon:lastmonad}
A {\em seminearring} is a set equipped with two monoids, i.e., associative binary operations $+$ and $\cdot$ (called {\em horizontal} and {\em vertical} composition; the latter is usually denoted simply by juxtaposition) with units respectively $0$ and $1$, subject to additional axioms
\[
	(x+y)z = xz+yz \qquad \text{and} \qquad 0x = 0.
\]
As expected, the free seminearring monad $T$ is presented by the equational theory of seminearrings.
This is a rather fundamental monad from the perspective of logic: in~\cite[Sec.~6]{B12} it is explained how seminearrings are a reasonable choice for an algebra of trees. Indeed, the free seminearring over a set $\Sigma$ is the set of all unranked forests with holes (called {\em multicontexts} in~\cite{B12}).

\end{monad}

Consider the language
\[
	L = ab^* = \{ab^n\mid n\in\mathbb{N}\} \subseteq T\{a,b\}.
\]
This language is recognized by a finite seminearring (with $5$ elements). Now consider the homomorphism $h:T\{a,b\}\to T\{a,b\}$ with $h(a) = a+a$ and $h(b)=b$. The direct image of $L$ along $h$ is the language
\[
	\{ab^n+ab^n \mid n\in\mathbb{N}\}
\]
which is not recognizable. Indeed, for any seminearring homomorphism $f:T\{a,b\}\to S$ for a finite $S$, there must be some $n\neq m$ such that $f(b^n)=f(b^m)$; then $f(ab^n)=f(ab^m)$ and
\[
	f(ab^n+ab^n) = f(ab^n+ab^m).
\]

\subsection{Letter-to-letter homomorphisms that are not surjective}

One may also consider the preservation of recognizable languages under direct images along arbitrary (i.e. not necessarily surjective) letter-to-letter homomorphisms. In all our counterexamples in Section~\ref{sec:counterexamples}, the failure of preservation seems to be caused by different letters being identified by a homomorphism; in other words, non-injectivity of homomorphisms seems to be the key issue. One may wonder if additionally requiring preservation under non-surjective homomorphisms changes the picture at all. As it turns out, it does.

\begin{ex}
Recall Monad~\ref{mon:reader}, the reader monad from Section~\ref{sec:reader}.
Note that $T\{1\} = \{1^{\omega}\} = \{111111\cdots\}$. Consider the ``full'' language $L=T\{1\}\subseteq T\{1\}$; it is clearly recognizable by the unique homomorphism to the one-element $T$-algebra. However, its direct image along the inclusion $\iota:T\{1\}\to T\{0,1\}$:
\[
	\overrightarrow{T\iota}(L) = \{1^{\omega}\} \subseteq T\{0,1\}
\]
 is not recognizable. Indeed, it is not finitely rectangular according to the terminology of Section~\ref{sec:reader} (and, being a singleton, it is not a union of finitely rectangular sets either), so its recognizability would contradict Lemma~\ref{lem:rectangular2}.
\qed
\end{ex}

\begin{ex}
Recall Monad \ref{mon:fgfgg}, the monad whose finite models are trivial from Section~\ref{sec:fgfgg}. If we consider the inclusion $\iota :\{x\}\to \{x,y\}$ then the direct image of the recognizable language $L=T(\{x\})$ under $T\iota$ is not recognizable. Indeed, the only recognizable languages are $\emptyset$ and $TX$ for any finite $X$ since the only finite model is the trivial model, but $L$ is not one of those languages.
\qed
\end{ex}

\begin{ex}
Recall Monad~\ref{mon:lattice}, the free lattice monad $\FL$ from Section~\ref{sec:freelattice}. We shall consult the first chapter of~\cite{free-lattices} for basic facts about free lattices. First, the following theorem due to Whitman~\cite{whitman} provides an effective procedure for comparing two elements of a free lattice:
\begin{thm}[\cite{free-lattices},~Thm.~1.8]\label{thm:whitman}
For any elements $a,b,c,d\in\FL(\Sigma)$, $a\land b\leq c\lor d$ if and only if:
\[
	a\leq c\lor d, \qquad b\leq c\lor d, \qquad a\land b\leq c, \qquad \text{or} \qquad a\land b\leq d.
\]
\end{thm}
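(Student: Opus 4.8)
The plan is to build a concrete syntactic presentation of $\FL(\Sigma)$ and read the condition directly off the rule that defines its order. First note that the right-to-left direction holds in \emph{every} lattice: since $a\land b\leq a$, $a\land b\leq b$ and $c,d\leq c\lor d$, each of the four hypotheses yields $a\land b\leq c\lor d$ by transitivity of $\leq$. Thus all the content is in the left-to-right direction, which is the genuinely free-lattice phenomenon known as Whitman's condition.

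To prove it, I would represent $\FL(\Sigma)$ as lattice terms modulo a syntactically defined order. Let $T$ be the set of formal terms built from $\Sigma$ using $\lor$ and $\land$, and define a relation $\sqsubseteq$ on $T$ by simultaneous recursion on the combined term size of its two arguments, using the following clauses, applied in order: if $p=p_1\lor p_2$, then $p\sqsubseteq q$ iff $p_1\sqsubseteq q$ and $p_2\sqsubseteq q$; dually, if $q=q_1\land q_2$, then $p\sqsubseteq q$ iff $p\sqsubseteq q_1$ and $p\sqsubseteq q_2$; and, in the remaining case where $p=p_1\land p_2$ is a meet and $q=q_1\lor q_2$ is a join,
\[
    p\sqsubseteq q \iff p_1\sqsubseteq q \ \text{ or }\ p_2\sqsubseteq q \ \text{ or }\ p\sqsubseteq q_1 \ \text{ or }\ p\sqsubseteq q_2,
\]
together with the obvious base clauses comparing generators (a generator sits below a join iff it sits below one disjunct, a meet sits below a generator iff one conjunct does, and two generators are comparable iff equal). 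Every clause recurses on strictly smaller instances, so $\sqsubseteq$ is well defined; crucially, the last clause is exactly condition (W).

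Next I would verify the three facts that make $(T,\sqsubseteq)$ present the free lattice. \textbf{Soundness}: a routine induction shows that $p\sqsubseteq q$ implies $\hat h(p)\leq\hat h(q)$ for the homomorphic extension of any $h\colon\Sigma\to L$ into a lattice $L$, because the right-to-left direction of every clause is valid in all lattices. \textbf{Preorder}: reflexivity is an easy induction, and transitivity --- if $p\sqsubseteq q$ and $q\sqsubseteq r$ then $p\sqsubseteq r$ --- is proved by induction on the combined size of $p,q,r$ with a case split on their leading connectives. \textbf{Lattice and freeness}: one checks that $p\lor q$ and $p\land q$ are respectively the least upper bound and greatest lower bound for $\sqsubseteq$, so the quotient $T/{\equiv}$ (with $p\equiv q$ iff $p\sqsubseteq q$ and $q\sqsubseteq p$) is a lattice whose order is induced by $\sqsubseteq$; soundness then makes every $h$ extend to a unique homomorphism out of $T/{\equiv}$, so $T/{\equiv}\cong\FL(\Sigma)$ by the universal property.

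The theorem now follows at once. Given $a,b,c,d\in\FL(\Sigma)$, choose term representatives and form the representatives $a\land b$ and $c\lor d$, whose leading symbols are $\land$ and $\lor$; under the isomorphism, $a\land b\leq c\lor d$ holds iff $(a\land b)\sqsubseteq(c\lor d)$, and the meet--join clause unfolds this into precisely ``$a\sqsubseteq c\lor d$, or $b\sqsubseteq c\lor d$, or $a\land b\sqsubseteq c$, or $a\land b\sqsubseteq d$'', which are the four stated alternatives. The main obstacle is the transitivity induction: the awkward configurations are those in which the middle term $q$ is a meet or join that the two outer comparisons split in incompatible ways, and discharging them requires repeatedly applying the induction hypothesis to instances obtained by peeling off a single connective, keeping the combined size strictly decreasing. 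Everything else is bookkeeping; this step is where the combinatorial substance of free lattices resides.
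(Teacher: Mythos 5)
The paper does not prove this statement at all: it is imported verbatim as Theorem~1.8 of the cited monograph on free lattices, so there is no in-paper argument to compare against. Your proposal reconstructs what is essentially Whitman's original proof, which is also the proof given in that source: present $\FL(\Sigma)$ as terms modulo a recursively defined order $\sqsubseteq$ whose meet-versus-join clause \emph{is} condition (W), verify soundness, that $\sqsubseteq$ is a preorder, and that the quotient has the universal property, then read the theorem off the defining clause. The outline is correct and the final unfolding step is legitimate, since the chosen representatives of $a\land b$ and $c\lor d$ have $\land$ and $\lor$ at the root, so only your third clause fires. Two points deserve explicit attention if you flesh this out. First, your ``applied in order'' convention silently privileges the join-on-the-left clause when $p$ is a join \emph{and} $q$ is a meet; you must then prove as a lemma that the meet-on-the-right decomposition ($p\sqsubseteq q_1$ and $p\sqsubseteq q_2$) is nevertheless equivalent in that overlapping case, otherwise the clause-by-clause case analysis in the transitivity induction does not go through. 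Second, the transitivity induction needs the auxiliary weakening facts ($p_i\sqsubseteq q$ implies $p_1\land p_2\sqsubseteq q$, and dually $p\sqsubseteq q_i$ implies $p\sqsubseteq q_1\lor q_2$, for \emph{all} shapes of the other term, including the meet/join cases excluded by your priority ordering), each proved by its own small induction before or interleaved with the main one. You correctly identify transitivity as the locus of all the combinatorial content; with those two lemmas made explicit, the argument is the standard and complete one.
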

Writing down terms to denote elements of free lattices, one usually implicitly applies the associativity laws and writes e.g. $t_1\land t_2\land t_3$ instead of $(t_1\land t_2)\land t_3$. With this convention in mind, the following canonical form of terms  is considered:
\begin{defi}\label{def:canonical-form}
In a free lattice $\FL(\Sigma)$, a formal join $t = t_1\lor\cdots\lor t_n$ with $n>1$ is in canonical form if:
\begin{itemize}
\item each $t_i$ is in $\Sigma$ or a formal meet in canonical form,
\item $t_i\not\leq t_j$ for all $i\neq j$,
\item if $t_i = \bigwedge t_{ij}$ then $t_{ij}\not\leq t$ for all $j$.
\end{itemize}
Dual conditions define the canonical form of formal meets.
\end{defi}
By~\cite[Thm.~1.17-18]{free-lattices}, every element in $\FL(\Sigma)$ can be presented by a term in canonical form, and moreover this presentation is unique up to commutativity laws.

It will be important to us that Definition~\ref{def:canonical-form} provides a procedure for transforming a term $t$ into canonical form, and that this procedure relies only on removing some redundant subterms from $t$. As a corollary, if an element in $a\in\FL(\Sigma)$ is presented by a term $t$ in canonical form such that some $w\in \Sigma$ appears in $t$, then
$a$ cannot be presented by any term where $w$ does not appear. Indeed, if such a presentation $t'$ existed then the canonical form of $t'$ would not contain $w$ either, which would contradict the uniqueness of canonical presentation.

The last basic fact about free lattices that we shall need is that for $|\Sigma|\geq 3$ the lattice $\FL(\Sigma)$ is infinite and it contains a strictly increasing infinite chain (see~\cite[Ex.~1.24]{free-lattices}).

Now let
\[
	\Sigma = \{p,q,r\} \qquad \Gamma = \{p,q,r,s\}
\]
with $\iota:\Sigma\to\Gamma$ the inclusion function. The language $L=T\Sigma\subseteq T\Sigma$ is recognizable by the unique homomorphism to the one-element lattice. We shall show that its direct image $\overrightarrow{T\iota}(L)\subseteq T\Gamma$ is not recognizable.

To this end, for any pair of terms $t_1<t_2\in \overrightarrow{T\iota}(L)$, consider the term
\[
	t = (t_1\lor s)\land t_2.
\]
Using Theorem~\ref{thm:whitman}, it is easy to check that $s$ is incomparable with both $t_1$ and $t_2$, and that
$t_1 < t < t_2$. Moreover, if $t_1$ and $t_2$ are in canonical form then so is $t$; to prove this use again Theorem~\ref{thm:whitman} and the fact that $s$ is incomparable with $t_1$, $t_2$ and all their subterms. As a result, $t$ cannot be presented by any term where the generator $s$ does not appear; in other words, $t\not\in \overrightarrow{T\iota}(L)$.

We have just proved that between any two distinct but ordered elements of $\overrightarrow{T\iota}(L)$ there is some element not in $\overrightarrow{T\iota}(L)$. Recall that $\FL(\Sigma)$ contains a strictly increasing infinite chain, and since $T\iota$ is an injective function, the set $\overrightarrow{T\iota}(L)$ also contains a strictly increasing infinite chain. Inserting an element out of $\overrightarrow{T\iota}(L)$ between each two neighbouring elements in that chain,
we obtain an infinite, strictly increasing chain that alternates between elements that are in and out of $\overrightarrow{T\iota}(L)$. This implies that the set $\overrightarrow{T\iota}(L)$ is not recognizable.
\end{ex}

\medskip

It should be mentioned that for a weakly cartesian monad $T$, the class $\rec_{T}$ is in fact closed under direct images of all (not necessarily surjective) letter-to-letter homomorphisms. The proof of this is the same as in Section \ref{sec:theproof} but with the full powerset functor $\P$ used instead of $\PP$. Only two points are worth making in this case: (1) the ``projection'' $\pi_2:{\in_X}\to \P X$ is no longer surjective (the empty set is not in the image) and (2) the empty set plays the role of an ``error state'' in a powerset algebra (every letter that is not in the image of the function $g$ is mapped to the empty set and any ``operation'' that involves the empty set has the empty set as a result).

\section{Future work}\label{sec:futurework}

Technically, our main object of study in this paper was a notion: a monad for which languages recognizable by finite algebras are closed under taking direct images along surjective letter-to-letter homomorphisms. For such monads the abstract definition of monadic~\msotext{} makes sense, in that it only describes recognizable languages.
Our various sufficient conditions, examples and counterexamples show that the notion seems rather subtle, but they do not provide a full characterization of it. The search for such a characterization is an obvious direction of further study.

It should also be said that the notion itself is a result of a few design decisions. It could be that its variants, such as the ones described in Section~\ref{sec:related}, lead to simpler characterizations while still covering essentially the same class of practically relevant examples. Some other design decisions we have not even mentioned: for example, is an arbitrary finite algebra the right notion of a recognizing device? Sometimes, in particular for infinitary monads, the answer might not be obvious, and one may want to consider a restricted class of finite algebras instead. (Note that, for an infinitary monad, a finite algebra may not admit a finite description.) Similarly, it is not clear whether one should not restrict even further the class of homomorphisms to take direct images along. This design space deserves careful exploration.

We focussed our attention on monads on the category of sets, but this leaves several natural examples out of scope. For example, the category of ranked sets is a convenient setting to study various algebras of trees and graphs. Other interesting base categories include that of vector spaces, or that of nominal sets. Another interesting idea is to employ the topo-algebraic technology of~\cite{mai1,mai2}, where recognizability of word languages is studied beyond the regular setting. In general, any category where a meaningful notion of direct image of a language can be formulated, may be a territory worth exploring.

The most ambitious goal is to go beyond the ``definable $\Rightarrow$ recognizable'' implication of the correspondence between \msotext{} definability and recognizability of languages. Is there a generic version of the converse implication? How far can Proposition~\ref{propclosuremso} be generalized? What about decidability questions about \msotext{}? The paper~\cite{boj1} makes some initial steps in these and other related directions. However, much remains to be done, all in the spirit of bridging the divide between Structure and Power.

\bibliographystyle{unsrt}
\bibliography{ref}

\end{document}